\newcommand{\be} {\begin{eqnarray*}}
\newcommand{\ee} {\end{eqnarray*}}
\newcommand{\argmin}{\mathop{\rm argmin~}}
\newtheorem{theorem}{Theorem}[section]
\newtheorem{lemma}[theorem]{Lemma}
\newtheorem{ass}[theorem]{Assumption}
\newtheorem{rem}{Remark}[section]
\newtheorem{example}{Example}[section]
\begin{document}
\title{\ Covariate Adaptive False Discovery Rate Control with Applications to Omics-Wide Multiple Testing}
\author{Xianyang Zhang and Jun Chen}
\date{}
\footnotetext[1]{Xianyang Zhang (zhangxiany@stat.tamu.edu) is Associate
Professor of Statistics at Texas A\&M University. Jun Chen (Chen.Jun2@mayo.edu) is Associate Professor of Biostatistics at Mayo Clinic. Zhang acknowledges partial support from
NSF DMS-1830392 and NSF DMS-1811747. Chen acknowledges support from Mayo Clinic Center for Individualized Medicine.} \maketitle
\onehalfspacing
\maketitle
\sloppy%

\textbf{Abstract} Conventional multiple testing procedures often assume hypotheses for different features are exchangeable. However, in many scientific applications, additional covariate
information regarding the patterns of signals and nulls are available. In this paper, we introduce an FDR control procedure in large-scale inference problem that can incorporate covariate information. We develop a fast algorithm to implement the proposed procedure and prove its asymptotic validity even when the underlying likelihood ratio model is misspecified and the p-values are weakly dependent (e.g., strong mixing). Extensive simulations are conducted to study the finite sample performance of the proposed method and we demonstrate that the new approach improves over the state-of-the-art approaches { by being  flexible, robust, powerful and computationally efficient}. We finally apply the method to several omics datasets arising from genomics studies with the aim to identify omics features associated with some clinical and biological phenotypes. We show that the method is overall the most powerful among competing methods, especially when the signal is sparse. The proposed \textbf{C}ovariate \textbf{A}daptive \textbf{M}ultiple \textbf{T}esting procedure is implemented in the R
package \texttt{CAMT}.
\\
\strut \textbf{Keywords:} Covariates, EM-algorithm, False Discovery Rate, Multiple Testing.

\doublespacing

\section{Introduction}

Multiple testing refers to simultaneous testing of more than one
hypothesis. Given a set of hypotheses, multiple testing deals with
deciding which hypotheses to reject while guaranteeing some notion
of control on the number of false rejections. A traditional measure
is the family-wise error rate (FWER), which is the probability of
committing at least one type I error. As the number of trials
increases, FWER still measures the probability of at least one false
discovery, which is overly stringent in many applications. This
absolute control is in contrast to the proportionate control
afforded by the false discovery rate (FDR).

Consider the problem of testing $m$ distinct hypotheses. Suppose a
multiple testing procedure rejects $R$ hypotheses among which $V$
hypotheses are null, i.e., it commits $V$ type I errors. In the
seminal paper by Benjamini and Hochberg, the authors introduced the
concept of FDR defined as
$$\text{FDR}=E\left[\frac{V}{R\vee
1}\right],$$ where $a\vee b=\max\{a,b\}$ for $a,b\in\mathbb{R}$, and the expectation is with respect to the random quantities $V$ and $R$.
FDR has many advantageous features comparing to other existing error
measures. Control of FDR is less stringent than the control of FWER
especially when a large number of hypothesis tests are performed. FDR is also
adaptive to the underlying signal structure in the data. The
widespread use of FDR is believed to stem from and motivated by the
modern technologies which produce big datasets, with huge
numbers of measurements on a comparatively small number of
experimental units. Another reason for the popularity of FDR is the
existence of a simple and easy-to-use procedure proposed in
Benjamini and Hochberg (1995) (the BH procedure, hereafter) to
control the FDR at a prespecified level.

Although the BH procedure is more powerful than procedures aiming to
control the FWER, it assumes hypotheses for different features
are exchangeable which could result in suboptimal power as demonstrated in recent literature when individual tests
differ in their true effect size, signal-to-noise ratio or prior probability of being false.
In many scientific applications, particularly those from genomics studies,
there are rich covariates that are informative of either the statistical power or the prior null probability.
These covariates can be roughly derived into two classes: statistical covariates  and external covariates (Ignatiadi et al., 2016).  Statistical covariates are derived from the data itself and could reflect the power or null probability. Generic statistical covariates include the sample variance, total sample size and sample size ratio (for two-group comparison), and the direction of the effects. There are also specific statistical covariates for particular applications. For example, in transcriptomics studies using RNA-Seq,  the sum of read counts per gene across all
samples is a statistical covariate informative of power since the low-count genes are subject to more sampling variability.   Similarly, the minor allele frequency and the prevalence of the bacterial species can be taken as statistical covariates for genome-wide association studies (GWAS) and microbiome-wide association studies (MWAS), respectively.   Moreover, the average methylation level of a CpG site in epigenome-wide association studies (EWAS) can be a statistical covariate informative of the prior null probability  due to the fact that differential methylation frequently occurs in highly or lowly methylated region depending on the biological context.  Besides these statistical covariates, there are a plethora of covariates that are derived from external sources and  are usually informative of the prior null probability.     These external covariates include the deleteriousness of the genetic variants for GWAS, the location (island and shore) of CpG methylation variants for EWAS,   and pathogenicity  of the bacterial species  for MWAS.  Useful external covariates also include p-values from previous or related studies which suggest that some hypotheses are more likely to be non-null than others. Exploiting such external covariates in multiple testing could lead to improved statistical power as well as
enhanced interpretability of research results.


Accommodating covariates in multiple testing has recently been a very active
research area. We briefly review some contributions that are most
relevant to the current work. The basic idea of many existing works
is to relax the p-value thresholds for hypotheses that are more
likely to be non-null and tighten the thresholds for the other
hypotheses so that the overall FDR level can be controlled. For
example, Genovese et al. (2006) proposed to weight the p-values with
different weights, and then apply the BH procedure to the weighted
p-values. Hu et al. (2010) developed a group BH procedure by
estimating the proportions of null hypotheses for each group
separately, which extends the method in Storey (2002). Li and Barber
(2017) generalized this idea by using the censored p-values (i.e.,
p-values that are greater than a pre-specified threshold) to
adaptively estimate the weights that can be designed to reflect any
structure believed to be present. Ignatiadi et al. (2016) proposed
the independent hypothesis weighting (IHW) for multiple testing with
covariate information. Their idea is to bin the covariates into
several groups and then apply the weighted BH procedure with
piecewise constant weights.  Boca and Leek (2018) extended the idea by using a regression approach to estimate weights. Another related method (named AdaPT) was
proposed in Lei and Fithian (2018), which iteratively estimates the
p-value thresholds using partially censored p-values. The above
procedures can be viewed to some extent as different variants of the
weighted BH procedure. Along a separate line, Local FDR (LFDR) based
approaches have been developed to accommodate various forms of
auxiliary information. For example, Cai and Sun (2009) considered
multiple testing of grouped hypotheses using the pooled LFDR
statistic. Sun et al. (2015) developed a LFDR-based procedure to
incorporate spatial information. 
Scott et al. (2015) and Tansey et al. (2017) proposed EM-type algorithms to estimate the LFDR by
taking into account covariate and spatial information, respectively.

Although the approaches mentioned above excel in certain aspects, a method that is flexible, robust, powerful and computationally efficient is still lacking. For example, IHW developed in Ignatiadi et al. (2016) cannot handle multiple covariates. AdaPT in Lei and Fithian (2018) is computationally intensive and may suffer from significant power loss when the signal is sparse, and covariate is not very informative. Li and Barber (2017)'s procedure is not Bayes optimal as shown in Lei and Fithian (2018) and thus could lead to suboptimal power as observed in our numerical studies. The FDR regression method proposed in Scott et al. (2015) lacks a rigorous FDR control theory.  Table 1 provides a detailed comparison of these methods.

In this paper, in addition to a thorough evaluation of these methods using comprehensive simulations covering different signal structures, we propose a new procedure to incorporate covariate information with generic applicability. 
The covariates can be any continuous or categorical variables that are thought to be informative of the statistical properties of the hypothesis tests. The main contributions of our paper are two-fold:
\begin{enumerate}
\item Given a sequence of p-values $\{p_1,\dots,p_m\}$, we introduce a general decision rule of the form
\begin{equation}\label{rej0}
(1-k_i)p^{-k_i}_i\geq \frac{(1-t)\pi_{i}}{t(1-\pi_{i})},\quad
0<k_i<1, \quad 1\leq i\leq m,
\end{equation}
which serves as a surrogate for the optimal decision rule
derived under the two-component mixture model with varying mixing
probabilities and alternative densities. Here $\pi_i$ and $k_i$ are parameters that can be estimated from the covariates and p-values, and $t$ is a cut-off value to be determined by our FDR control method. We develop a new procedure
to estimate $(k_i,\pi_i)$ and find the optimal threshold value for $t$ in (\ref{rej0}).
We show that (i) when $\pi_i$ and $k_i$ are chosen
independently of the p-values, the proposed procedure provides
finite sample FDR control; (ii) our procedure provides
asymptotic FDR control when $\pi_i$ and $k_i$ are chosen to maximize a potentially misspecified likelihood based on the covariates and p-values; (iii) Similar to some recent works (e.g., Ignatiadi et al., 2016; Lei and Fithian, 2017; Li and Barber, 2017), our method allows
the underlying likelihood ratio model to be misspecified. { A distinctive feature is that our asymptotic analysis does not require the p-values to be marginally independent or conditionally independent given the covariates. More specifically, we allow the pairs of p-value and covariate across different hypotheses to be strongly mixing as specified in Assumption \ref{ass-ad3}.}

\item We develop an efficient algorithm to estimate $\pi_i$ and $k_i$.
The developed algorithm is scalable to problems with millions of tests.
Through extensive numerical studies, we show that our procedure is highly competitive to
several existing approaches in the recent literature in terms of finite sample performance.
The proposed procedure is implemented in the R package \texttt{CAMT}.
\end{enumerate}

Our method is related to Lei and Fithian (2018), and it is worth highlighting the differences from their work.
(i) Lei and Fithian (2018) uses partially censored p-values to determine the threshold, which can discard useful information concerning the alternative distribution of p-values (i.e., $f_{1,i}$ in (\ref{m1}) below) since small p-values
that are likely to be generated from the alternative are censored.
In contrast, we use all the p-values to determine the threshold. Our method is seen to exhibit more power as compared to Lei and Fithian (2018) when signal is (moderately) sparse.
Although our method no longer offers theoretical finite sample FDR control, we show empirically that the power gain is not at the cost of FDR control.
(ii) Different from Lei and Fithian (2018) which requires multiple stages for practitioners to make their final decision, our method is a single-stage procedure that only needs to be run one time; Thus the implementation of our method is faster and scalable to modern big datasets.
(iii) Our theoretical analysis is entirely different from those in Lei and Fithian (2018). In particular, we show that our method achieves asymptotic FDR control even when the p-values are dependent.

\section{Methodology}\label{sec2}
\subsection{Rejection rule}
We consider simultaneous testing of $m$ hypotheses $H_i$ for
$i=1,2,\dots,m$. Let $p_i$ be the p-value associated with the $i$th
hypothesis, and with some abuse of notation, let $H_i$ indicate the underlying truth of the $i$th
hypothesis. In other words, $H_i=0$ if the $i$th hypothesis is true
and $H_i=1$ otherwise. For each hypothesis, we observe a
covariate $x_i$ lying in some space $\mathcal{X}\subseteq
\mathbb{R}^{q}$ with $q\geq 1$. From a Bayesian viewpoint, we can model $H_i$ given $x_i$ as a
Bernoulli random variable with success probability $1-\pi_{0i}$,
where $\pi_{0i}$ denotes the prior probability that the $i$th
hypothesis is under the null when conditioning on $x_i$. One approach to model the p-value distribution is via a two-component mixture model,
\begin{align}
&H_i|x_i \sim ~\text{Bernoulli}(1-\pi_{0i}),  \label{m0}\\
&p_i|x_i,H_i \sim ~(1-H_i)f_0+H_if_{1,i}, \label{m1}
\end{align}
where $f_0$ and $f_{1,i}$ are the density functions corresponding to
the null and alternative hypotheses respectively. { In the following discussions, we shall assume that $f_0$ satisfies the following condition: for any $a\in [0,1]$
\begin{align}\label{eq-con-f0}
\int^{a}_{0}f_0(x)dx \leq \int^{1}_{1-a}f_0(x)dx.
\end{align}
This condition relaxes the assumption of uniform distribution on the unit interval. It is fulfilled when $f_0$ is non-decreasing or $f_0$ is symmetric about 0.5 (in which case the equality holds in (\ref{eq-con-f0})). We demonstrate that this relaxation is capable of describing plausible data generating processes that would create a non-uniform null distribution. Let $T$ be a test statistic such that under the null its z-score $Z=(T-\mu_0)/\sigma_0$ is standard normal. In practice, one uses $\hat{\mu}$ and $\hat{\sigma}$ to estimate $\mu_0$ and $\sigma_0$ respectively. Let $\Phi$ be the standard normal CDF. The corresponding one-sided p-value is  given by $\Phi((T-\hat{\mu})/\hat{\sigma})$ whose distribution function is $P(\Phi((T-\hat{\mu})/\hat{\sigma})\leq x)=\Phi((\Phi^{-1}(x)\hat{\sigma}+\hat{\mu}-\mu_0)/\sigma_0)$. When $\mu_0\geq \hat{\mu}$ (i.e., we underestimate the mean), one can verify that $f_0$ is a non-decreasing. In the case of $\mu_0=\hat{\mu}$ and $\sigma_0 \neq \hat{\sigma}$, $f_0$ is non-uniformly symmetric about 0.5.
}

Compared to the
classical two-component mixture model, the varying null probability
reflects the relative importance of each hypothesis given the
external covariate information $x_i$ and the varying alternative density $f_{1,i}$
emphasizes the heterogeneity among signals.  { In the context without covariate information, it is well known that the optimal rejection is based on the LFDR, see e.g., Efron (2004) and Sun and Cai (2007). The result has been generalized to the setups with group or covariate information, see e.g., Cai and Sun (2009) and Lei and Fithian (2018).} Based on these insights, one can indeed show that the optimal rejection rule that
controls the expected number of false positives while maximizes the
expected number of true positives takes the form of
\begin{equation}\label{rule}
\frac{f_{1,i}(p_i)}{f_0(p_i)}\geq \frac{(1-t)\pi_{0i}}{t(1-\pi_{0i})},
\end{equation}
where $t\in (0,1)$ is a cut-off value. This decision rule is generally unobtainable because
$f_{1,i}$ is unidentifiable without extra assumptions on
its form. Moreover, consistent estimation of the decision rule (\ref{rule}) is difficult, and even with the use of additional approximations, such as splines or piecewise constant functions.
In this work, we do not aim to estimate the optimal rejection rule directly. Instead, we try to find
a rejection rule that can mimic some useful operational characteristics of the optimal rule.
Our idea is to first replace $f_{1,i}/f_0$ by a
surrogate function $h_i$. We emphasize that $h_i$ needs not agree with
the likelihood ratio $f_{1,i}/f_0$ for our method to be valid. In fact, the validity of our method does not rely on the correct specification of model (\ref{m0})-(\ref{m1}).
We require $h_i$ to satisfy (i) $h_i(p)\geq 0$ for $p\in[0,1]$; (ii)
$\int^{1}_{0}h_i(p)dp=1$; (iii) $h$ is decreasing. Requirement (iii) is
imposed to mimic the common likelihood ratio assumption in the
literature, see e.g. Sun and Cai (2007). In this paper, we suggest to use the beta density,
\begin{align}\label{eq-LR}
h_i(p)=(1-k_i)p^{-k_i},\quad 0<k_i<1,
\end{align}
where $k_i$ is a parameter that depends on $x_i$. { Suppose that under the null hypothesis, $p_i$ is uniformly distributed, whereas under the alternative, it follows a beta distribution with parameters $(1-k_i,1)$, then
the true likelihood ratio would take exactly the form given in (\ref{eq-LR}). To demonstrate the approximation of the proposed surrogate likelihood ratio to the actual likelihood ratio for realistic problems, we simulated two binary variables and generated four alternative distributions $f_{1, i}$ depending on the four levels of the two variables (details in the legend of Figure  \ref{fig:sim:0}).  We used the proposed procedure to find the best $k_i$ and compared the CDF of the empirical distribution (reflecting the actual likelihood ratio) to that of the fitted beta distribution (reflecting the surrogate likelihood ratio). We can see from  Figure  \ref{fig:sim:0} the approximation was reasonably well and the accuracy increases with the signal density. }

Based on the surrogate likelihood ratio, the corresponding rejection rule is given by
\begin{align}\label{rej1}
h_i(p_i)\geq w_i(t):=\frac{(1-t)\pi_{i}}{t(1-\pi_{i})},
\end{align}
for some weights $\pi_i$ to be determined later. See Section \ref{alg} for more details about the estimation of $k_i$ and $\pi_i.$

\subsection{Adaptive procedure}
We first note that the false discovery proportion (FDP) associated
with the rejection rule (\ref{rej1}) is equal to
\begin{align*}
\text{FDP}(t):=&\frac{\sum^{m}_{i=1}(1-H_i)\mathbf{1}\{h_i(p_i)\geq
w_i(t)\}}{1\vee\sum^{m}_{i=1}\mathbf{1}\{h_i(p_i)\geq w_i(t)\}}.
\end{align*}
{Then for a cut-off value $t$, we have
\begin{align*}
\text{FDP}(t)=&\frac{\sum^{m}_{i=1}(1-H_i)\mathbf{1}\{p_i\leq h_i^{-1}(w_i(t))\}}{1\vee \sum^{m}_{i=1}\mathbf{1}\{h_i(p_i)\geq w_i(t)\}}
\\ \approx & \frac{\sum^{m}_{i=1}(1-H_i)P(p_i\leq h_i^{-1}(w_i(t)))}{1\vee \sum^{m}_{i=1}\mathbf{1}\{h_i(p_i)\geq w_i(t)\}}
\\ \leq & \frac{\sum^{m}_{i=1}(1-H_i)P(1-p_i\leq h_i^{-1}(w_i(t)))}{1\vee \sum^{m}_{i=1}\mathbf{1}\{h_i(p_i)\geq w_i(t)\}}
\\ \approx & \frac{1+\sum^{m}_{i=1}(1-H_i)\mathbf{1}\{h_i(1-p_i)\geq w_i(t)\}}{1\vee \sum^{m}_{i=1}\mathbf{1}\{h_i(p_i)\geq w_i(t)\}}
\\ \leq & \frac{1+\sum^{m}_{i=1}\mathbf{1}\{h_i(1-p_i)\geq w_i(t)\}}{1\vee \sum^{m}_{i=1}\mathbf{1}\{h_i(p_i)\geq w_i(t)\}}:=\text{FDP}_{\text{up}}(t),
\end{align*}
where the approximations are due to the law of large numbers and the inequality follows from Condition (\ref{eq-con-f0}).}\footnote{Rigorous theoretical justifications are provided in Theorem \ref{thm-add} and Theorem \ref{thm}.}
This strategy is partly motivated by the recent distribution-free method proposed in Barber and Cand\`{e}s (2015). We refer any FDR estimator constructed using this strategy as the BC-type estimator.
Both the adaptive procedure in Lei and Fithian (2018) and the proposed method fall into this category.
A natural idea is
to select the largest threshold such that
$\text{FDP}_{\text{up}}(t)$ is less or equal to a prespecified FDR level $\alpha.$
Specifically, we define
\begin{align*}
t^*=\max\left\{t\in [0,t_{\text{up}}]: \text{FDP}_{\text{up}}(t)=\frac{1+\sum^{m}_{i=1}\mathbf{1}\{h_i(1-p_i)\geq
w_i(t)\}}{1\vee\sum^{m}_{i=1}\mathbf{1}\{h_i(p_i)\geq
w_i(t)\}}\leq
\alpha\right\},
\end{align*}
where $t_{\text{up}}$ satisfies that $w_i(t_{\text{up}})\geq h_i(0.5)$ for all $i,$
and we reject all hypotheses such that $h_i(p_i)\geq w_i(t^*)$. The
following theorem establishes the finite sample control of the above
procedure when $\pi_i$ and $h_i$ are prespecified and thus independent of the p-values. For example, $\pi_i$ and $h_i$ are estimated based on data from an independent but related study.
\begin{theorem}\label{thm-add}
Suppose $h_i$ is strictly decreasing for each $i$ and $f_0$ satisfies Condition (\ref{eq-con-f0}).
If the p-values are independent and the choice of $h_i$ and $\pi_i$ is independent of the p-values, then the adaptive procedure provides finite sample FDR control at level
$\alpha$.
\end{theorem}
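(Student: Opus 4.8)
The plan is to follow the Barber--Cand\`es style optional-stopping argument. Write $\mathcal H_0=\{i:H_i=0\}$ and set $V(t)=\sum_{i\in\mathcal H_0}\mathbf 1\{h_i(p_i)\ge w_i(t)\}$, $\widetilde V(t)=\sum_{i\in\mathcal H_0}\mathbf 1\{h_i(1-p_i)\ge w_i(t)\}$, and $R(t)=\sum_{i=1}^m\mathbf 1\{h_i(p_i)\ge w_i(t)\}$. Since $h_i$ is strictly decreasing, $h_i(p_i)\ge w_i(t)$ is equivalent to $p_i\le c_i(t):=h_i^{-1}(w_i(t))$, and $h_i(1-p_i)\ge w_i(t)$ to $p_i\ge 1-c_i(t)$; the restriction $t\le t_{\text{up}}$ forces $c_i(t)\le 1/2$, so these two events are disjoint. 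The first step is a purely deterministic reduction: by the defining property of $t^*$ we have $\text{FDP}_{\text{up}}(t^*)\le\alpha$, and since the estimator's reflected count dominates its null part, $R(t^*)\vee 1\ge \alpha^{-1}(1+\widetilde V(t^*))$; hence $\text{FDP}(t^*)=V(t^*)/(R(t^*)\vee1)\le \alpha\,V(t^*)/(1+\widetilde V(t^*))$. Taking expectations, the theorem reduces to establishing $E[V(t^*)/(1+\widetilde V(t^*))]\le 1$.

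To handle the data-dependent threshold I would condition on the non-null p-values together with the null magnitudes $\{\min(p_i,1-p_i):i\in\mathcal H_0\}$. Under this conditioning the active set $A(t)=\{i\in\mathcal H_0:\min(p_i,1-p_i)\le c_i(t)\}$ is deterministic and grows as $t$ increases, while the only remaining randomness is, for each active null, whether $p_i$ lies in the rejection region $\{p_i\le c_i(t)\}$ or in the reflected region $\{p_i\ge 1-c_i(t)\}$. Independence of the p-values makes these ``signs'' conditionally independent across $i$, and Condition (\ref{eq-con-f0}) applied at level $c_i(t)\le 1/2$ gives $P(p_i\le c_i(t))\le P(p_i\ge 1-c_i(t))$, so each null is at least as likely to fall on the reflected side. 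This sub-fairness is what drives the argument.

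With this structure I would invoke the Barber--Cand\`es reverse supermartingale: revealing the null signs in the order in which hypotheses enter $A(t)$ and running $t$ downward from $t_{\text{up}}$, the process $M(t)=V(t)/(1+\widetilde V(t))$ is a supermartingale, and $t^*$, the largest threshold whose estimate is at most $\alpha$, is a stopping time for the corresponding filtration (it is adapted because at each level one knows $V(t)$, $\widetilde V(t)$, and the fixed non-null counts). Optional stopping then gives $E[M(t^*)]\le E[M(t_{\text{up}})]$, and the fully revealed endpoint is bounded directly: writing $(1+\widetilde V)^{-1}=\int_0^1 x^{\widetilde V}\,dx$, expanding over the independent nulls, and using $P(p_i\le c_i)\le P(p_i\ge 1-c_i)$ shows $E[M(t_{\text{up}})]\le 1-\prod_{i\in\mathcal H_0}P(p_i<1-c_i(t_{\text{up}}))\le 1$. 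Combining the two bounds yields $\text{FDR}=E[\text{FDP}(t^*)]\le\alpha$.

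The main obstacle is precisely the transfer from a fixed threshold to the random $t^*$: at any deterministic $t$ the inequality $E[V(t)/(1+\widetilde V(t))]\le 1$ follows from independence and Condition (\ref{eq-con-f0}) via the integral-representation computation above, but $t^*$ depends on the p-values through both $R$ and the reflected count, so one must exhibit a filtration under which $t^*$ is a stopping time and verify the supermartingale property along the whole path. The care lies in checking that conditioning on the null magnitudes genuinely leaves conditionally independent, sub-fair signs (this is where independence of the p-values and the disjointness guaranteed by $t\le t_{\text{up}}$ are used), and in matching the estimator's extra $+1$ and its non-null contributions so that the deterministic reduction of the first step is preserved.
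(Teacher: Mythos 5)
Your proposal is correct and follows essentially the same route as the paper's proof: both first use the definition of $t^*$ to bound $\text{FDP}(t^*)$ by $\alpha\, V(t^*)/(1+\widetilde V(t^*))$, then reduce to showing $E[V(t^*)/(1+\widetilde V(t^*))]\leq 1$ via the Barber--Cand\`{e}s optional-stopping argument, with Condition (\ref{eq-con-f0}) and the constraint $t^*\le t_{\text{up}}$ supplying the sub-fair, conditionally independent null ``signs.'' The only difference is that the paper simply orders the null magnitudes and cites Lemma 1 of Barber and Cand\`{e}s (2016) for the bound $E[(J+1)/(1+B_1+\cdots+B_J)]\le 2$, whereas you sketch a self-contained proof of that lemma (reverse supermartingale plus the integral-representation computation at the endpoint), which is a valid substitute.
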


\subsection{An algorithm}\label{alg}
The optimal choices of $\pi_i$ and $k_i$ are rarely known in practice, and a generally applicable data-driven method is desirable.
In this subsection, we propose an EM-type algorithm to estimate $\pi_i$ and $k_i$. In particular, we model
both $\pi_i$ and $k_i$ as functions of the covariate $x_i$. As an illustration, we
provide the following example.
\begin{example}\label{example}
{\rm Suppose
\begin{align*}
&p_{i}|x_i,H_i \sim (1-H_i)f_0+H_if_{1,i},
\\&x_{i}|H_i \sim (1-H_i)g_0+H_ig_1,
\end{align*}
where $H_i\sim^{\text{i.i.d}}\text{Bernoulli}(1-\pi_{0})$.
Using the Bayes rule, we have
\begin{align*}
f(p_{i}|x_{i})=&\frac{f(p_{i},x_{i}|H_i=0)\pi_0+f(p_{i},x_{i}|H_i=1)(1-\pi_0)}{f(x_{i}|H_i=0)\pi_{0}+f(x_{i}|H_i=1)(1-\pi_0)}
\\=&\frac{f(p_{i}|x_i,H_i=0)f(x_{i}|H_i=0)\pi_0+f(p_{i}|x_i,H_i=1)f(x_{i}|H_i=1)(1-\pi_0)}{f(x_{i}|H_i=0)\pi_{0}+f(x_{i}|H_i=1)(1-\pi_0)}
\\=&\pi(x_{i})f_{0}(p_{i})+(1-\pi(x_{i}))f_{1,i}(p_{i}),
\end{align*}
where $\pi(x)=g_0(x)\pi_0/\{g_0(x)\pi_{0}+g_1(x)(1-\pi_0)\}=f(H_i=0|x_i=x).$
Therefore, $\pi_i$ is the conditional probability that the $i$th
hypothesis is under the null given the covariate $x_i$. }
\end{example}

{ To motivate our estimation procedure for $\pi_i$ and $k_i$, let us define $\pi_{\theta}(x)=1/(1+e^{-\theta_0-\theta_1'x})$ and $k_{\beta}(x)=1/(1+e^{-\beta_0-\beta_1'x})$ for $x\in\mathbb{R}^q$, where $\theta=(\theta_0,\theta_1)$ and $\beta=(\beta_0,\beta_1)$. Suppose that conditional on $x_i$ and marginalizing over $H_i$,
\begin{align*}
f(p_i|x_i)=&\pi_{\theta}(x_i)f_0(p_i)+(1-\pi_{\theta}(x_i))f_{1,i}(p_i)
=f_0(p_i)\left\{\pi_{\theta}(x_i)+(1-\pi_{\theta}(x_i))\frac{f_{1,i}(p_i)}{f_0(p_i)}\right\}.
\end{align*}
Replacing $f_{1,i}/f_0$ by the surrogate likelihood ratio whose parameters $k_i$ depend on $x_i$, we obtain
$$\tilde{f}(p_i|x_i)=f_0(p_i)\left\{\pi_{\theta}(x_i)+(1-\pi_{\theta}(x_i))(1-k_{\beta}(x_i))p_i^{-k_\beta(x_i)}\right\}.$$
Moving to a log scale and summing up the individual log likelihoods, we see that the null density is a nuisance parameter that does not depend on $\theta$ and $\beta$:
$$\sum^{m}_{i=1}\log \tilde{f}(p_i|x_i)=\sum^{m}_{i=1}\log\left\{\pi_{\theta}(x_i)+(1-\pi_{\theta}(x_i))(1-k_{\beta}(x_i))p_i^{-k_\beta(x_i)}\right\}+C_0,$$
where $C_0=\sum_{i=1}^m \log f_0(p_i)$.} The above discussions thus motivate the following optimization problem for estimating the unknown parameters:
\begin{align}\label{eq-mle}
\max_{\theta=(\theta_0,\theta_1)'\in \Theta,\beta=(\beta_0,\beta_1)'\in \mathcal{B}}\sum^{m}_{i=1}\log\{\pi_i+(1-\pi_i)(1-k_i)p^{-k_i}\},
\end{align}
where
\begin{align}\label{eq-par}
& \log\left(\frac{\pi_i}{1-\pi_i}\right)=\theta_0+\theta_1'x_i, \quad \log\left(\frac{k_i}{1-k_i}\right)=\beta_0+\beta_1'x_i,
\end{align}
and $\Theta,\mathcal{B}\subseteq \mathbb{R}^{q+1}$ are some compact parameter spaces. This problem can be solved using the EM-algorithm together with the Newton's method in its M-step.
Let $\hat{\theta}$ and $\hat{\beta}$ be the maximizer from (\ref{eq-mle}). Define
$$\hat{\pi}_i=W(1/(1+e^{-\tilde{x}_i'\hat{\theta}}),\epsilon_1,\epsilon_2):=\begin{cases}
                \epsilon_1, & \mbox{if } 1/(1+e^{-\tilde{x}_i'\hat{\theta}})\leq \epsilon_1, \\
                1/(1+e^{-\tilde{x}_i'\hat{\theta}}), & \mbox{if } \epsilon_1<1/(1+e^{-\tilde{x}_i'\hat{\theta}})<1-\epsilon_2, \\
                1-\epsilon_2, & \mbox{otherwise},
              \end{cases}$$
and $\hat{k}_i=1/(1+e^{-\tilde{x}_i'\hat{\beta}})$ with $\tilde{x}_i=(1,x_i')'$, and
$$\hat{w}_i(t)=\frac{(1-t)\hat{\pi}_i}{t(1-\hat{\pi}_i)}.$$
We use winsorization to prevent $\hat{\pi}_i$ from being too close to zero. In numerical studies, we found the choices of $\epsilon_1=0.1$
and $\epsilon_2=10^{-5}$ perform reasonably well. Further denote
\begin{align*}
\hat{t}=\max\left\{t\in [0,1]:
\frac{1+\sum^{m}_{i=1}\mathbf{1}\{(1-\hat{k}_i)(1-p_i)^{-\hat{k}_i}>
\hat{w}_i(t)\}}{1\vee\sum^{m}_{i=1}\mathbf{1}\{(1-\hat{k}_i)p_i^{-\hat{k}_i}\geq
\hat{w}_i(t)\}}\leq \alpha\right\}.
\end{align*}
Then we reject the $i$th hypothesis if
$$(1-\hat{k}_i)p_i^{-\hat{k}_i}\geq
\hat{w}_i(\hat{t}).$$
\begin{rem}
{\rm
We can replace $x_i\in\mathbb{R}^q$ by $(g_1(x_i),\dots,g_{q_0}(x_i))\in\mathbb{R}^{q_0}$ for some transformations $(g_1,\dots,g_{q_0})$ to allow nonlinearity in the logistic regressions. In numerical studies, we shall consider the spline transformation.
}
\end{rem}


\section{Asymptotic results}\label{sec:asy}
\subsection{FDR control}
In this subsection, we provide asymptotic justification for the proposed procedure. Note that
\begin{align*}
\mathbf{1}\{(1-\hat{k}_i)p^{-\hat{k}_i}\geq
\hat{w}_i(t)\}=\mathbf{1}\{p\leq c(t,\hat{\pi}_{i},\hat{k}_i)\}\text{
for }
c(t,\hat{\pi}_{i},\hat{k}_i)=1\wedge\left\{\frac{t(1-\hat{k}_i)(1-\hat{\pi}_{i})}{(1-t)\hat{\pi}_{i}}\right\}^{1/\hat{k}_i}.
\end{align*}
Define
\begin{align*}
\text{FDR}(t,\Pi,K)=E\left[\frac{\sum_{i=1}^{m}(1-H_i)\mathbf{1}\{p_i\leq
c(t,\pi_{i},k_i)\}}{\sum_{i=1}^{m}\mathbf{1}\{p_i\leq
c(t,\pi_{i},k_i)\}}\right]
\end{align*}
with $\Pi=(\pi_1,\dots,\pi_m)$ and $K=(k_1,\dots,k_m)$. We make the following assumptions to
facilitate our theoretical derivations.


\begin{ass}\label{ass-ad1}
Suppose the parameter spaces $\Theta$ and $\mathcal{B}$ are both compact.
\end{ass}

\begin{ass}\label{ass-ad2}
Suppose
$$\lim_m\frac{1}{m}\sum^{m}_{i=1}E\log\{\pi_{\theta}(x_i)+(1-\pi_{\theta}(x_i))(1-k_{\beta}(x_i))p_i^{-k_\beta(x_i)}\}$$
converges uniformly over $\theta\in\Theta$ and $\beta\in\mathcal{B}$ to $R(\theta,\beta)$, which has a unique
maximum at $(\theta^*,\beta^*)$ in $\Theta\times \mathcal{B}.$
\end{ass}

Let $\mathcal{F}_a^b=\sigma((x_i,p_i),a\leq i\leq b)$ be the Borel $\sigma$-algebra generated by the random variables $(x_i,p_i)$
for $a\leq i \leq b$. Define the $\alpha$-mixing and $\phi$-mixing coefficients respectively as
\begin{align*}
&\alpha(v)=\sup_b\sup_{A\in\mathcal{F}_{-\infty}^b,B\in\mathcal{F}_{b+v}^{+\infty}}|P(AB)-P(A)P(B)|,
\\&\phi(v)=\sup_b\sup_{A\in\mathcal{F}_{-\infty}^b,B\in\mathcal{F}_{b+v}^{+\infty},P(B)>0}|P(A|B)-P(A)|.
\end{align*}
\begin{ass}\label{ass-ad3}
Suppose $(x_i,p_i)$ is $\alpha$-mixing with $\alpha(v)=O(v^{-\xi})$ for $\xi>r/(r-1)$ and $r>1$ (or $\phi$-mixing with $\phi(v)=O(v^{-\xi})$ for $\xi>r/(2r-1)$ and $r\geq 1$).
Further assume $\sup_i E|\log(p_i)|^{r+\delta}<\infty$ and $\max_i\|x_i\|_{\infty}<C$, where $\|\cdot\|_{\infty}$ denotes the $l_{\infty}$ norm of a vector and $C,\delta>0$.
\end{ass}


Assumption \ref{ass-ad1} is standard. Assumption \ref{ass-ad2} is a typical condition in
the literature of maximum likelihood estimation for misspecified
models, see e.g. White (1982). Assumption \ref{ass-ad3} relaxes the usual independence assumption by allowing $(x_i,p_i)$ to be weakly dependent.
It is needed to establish the uniform strong law of large numbers for the process $R_m(\theta,\beta)$ defined in the proof of Lemma \ref{lem-30} below which establishes the uniform strong consistency for $\hat{\pi}_i$ and $\hat{k}_i$. The boundedness assumption on $x_i$ could be relaxed with a more
delicate analysis to control its tail behavior and study the convergence rate of $\hat{\theta}$ and $\hat{\beta}$.
Denote by $\|\cdot\|$ the $l_{2}$ norm of a vector. An essential condition required in our proof of Lemma \ref{lem-30} is $\|\hat{\theta}-\theta^*\|\max_{1\leq i\leq n}\|x_i\|=o_{a.s.}(1)$. { If $\|\hat{\theta}-\theta^*\|=O_{a.s.}(n^{-a})$ for some $a>0$, then by the Borel-Cantelli lemma, we require $\max_{1\leq i\leq n} E\|x_i\|^{k}<\infty$ for some $k$ with $ak>2,$ i.e., $x_i$ should have a sufficiently light polynomial tail.} We remark that Assumption \ref{ass-ad3} can be replaced by more primitive conditions which allow other weak dependence conditions,
see, e.g., P\"{o}tscher and Prucha (1989). Let $\pi_i^*=W(1/(1+e^{-\tilde{x}_i'\theta^*}),\epsilon_1,\epsilon_2)$
and $k_i^*=1/(1+e^{-\tilde{x}_i'\beta^*})$.
\begin{lemma}\label{lem-30}
Under Assumptions \ref{ass-ad1}-\ref{ass-ad3}, we have $$\max_{1\leq i\leq
m}|\hat{\pi}_i-\pi_i^*|\rightarrow^{a.s.} 0,\quad \max_{1\leq i\leq
m}|\hat{k}_i-k_i^*|\rightarrow^{a.s.} 0.$$
\end{lemma}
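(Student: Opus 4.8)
The plan is to prove Lemma~\ref{lem-30} by a standard $M$-estimation consistency argument, deducing the uniform convergence of $\hat\pi_i$ and $\hat k_i$ from the strong consistency of the parameter estimators $\hat\theta,\hat\beta$ defined through (\ref{eq-mle})--(\ref{eq-par}). Write
$$R_m(\theta,\beta)=\frac{1}{m}\sum_{i=1}^m g(p_i,x_i;\theta,\beta),\qquad g(p,x;\theta,\beta)=\log\{\pi_\theta(x)+(1-\pi_\theta(x))(1-k_\beta(x))p^{-k_\beta(x)}\},$$
so that $(\hat\theta,\hat\beta)=\argmax_{\theta\in\Theta,\beta\in\mathcal{B}}R_m(\theta,\beta)$. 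The three building blocks I would assemble are: (a) a uniform strong law of large numbers (USLLN) giving $\sup_{\theta,\beta}|R_m(\theta,\beta)-\bar R_m(\theta,\beta)|\to^{a.s.}0$, where $\bar R_m$ is the expectation of $R_m$; (b) combining (a) with the uniform convergence of $\bar R_m$ to $R$ and the unique, well-separated maximizer supplied by Assumption~\ref{ass-ad2} to conclude $\hat\theta\to^{a.s.}\theta^*$, $\hat\beta\to^{a.s.}\beta^*$; and (c) transferring these to $\hat\pi_i,\hat k_i$ via the Lipschitz continuity of the logistic and winsorization maps.

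Step (a) is where the weak dependence enters and is the main obstacle. First I would verify the moment conditions required by a strong law for mixing arrays. Because $\Theta,\mathcal{B}$ are compact (Assumption~\ref{ass-ad1}) and $\max_i\|x_i\|_\infty<C$ (Assumption~\ref{ass-ad3}), the linear predictors $\tilde x_i'\theta$ and $\tilde x_i'\beta$ are uniformly bounded, so $\pi_\theta(x_i)\ge c_0>0$ uniformly; hence the argument of the logarithm is bounded below by $c_0$ and $g\ge\log c_0>-\infty$. For the upper bound, $0<k_\beta<1$ gives $g\le\log 2+|\log p_i|$, so $|g(p_i,x_i;\theta,\beta)|\lesssim 1+|\log p_i|$, whose $(r+\delta)$-th moment is finite by $\sup_i E|\log p_i|^{r+\delta}<\infty$. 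This integrable envelope, paired with the $\alpha$- (or $\phi$-) mixing rate of $(x_i,p_i)$ in Assumption~\ref{ass-ad3}, yields a pointwise SLLN at each fixed $(\theta,\beta)$. To upgrade to the supremum I would establish stochastic equicontinuity: differentiating $g$ in $(\theta,\beta)$ produces factors of the form $k_\beta(1-k_\beta)\tilde x_i$ and, through the $p^{-k_\beta}$ term, a multiplicative $\log p_i$, so $g$ is Lipschitz in $(\theta,\beta)$ with a random constant of order $(1+|\log p_i|)\|\tilde x_i\|\lesssim 1+|\log p_i|$, again with finite $(r+\delta)$-th moment. Covering the compact set $\Theta\times\mathcal{B}$ by finitely many balls and bounding the within-ball oscillation by this envelope then promotes the pointwise SLLN to the USLLN; the generic uniform law for dependent processes of P\"otscher and Prucha (1989) supplies exactly this device.

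For steps (b) and (c), once (a) holds and $\bar R_m\to R$ uniformly, we have $\sup_{\theta,\beta}|R_m(\theta,\beta)-R(\theta,\beta)|\to^{a.s.}0$. Fixing any open neighborhood $U$ of $(\theta^*,\beta^*)$, the uniqueness of the maximizer gives $\sup_{(\theta,\beta)\in U^c}R(\theta,\beta)<R(\theta^*,\beta^*)$, and uniform convergence forces $(\hat\theta,\hat\beta)\in U$ for all large $m$, almost surely; as $U$ is arbitrary this yields $\hat\theta\to^{a.s.}\theta^*$ and $\hat\beta\to^{a.s.}\beta^*$. The transfer is then routine: the link $u\mapsto 1/(1+e^{-u})$ has derivative at most $1/4$ and $W(\cdot,\epsilon_1,\epsilon_2)$ is $1$-Lipschitz, so by Cauchy--Schwarz
$$\max_{1\le i\le m}|\hat k_i-k_i^*|\le \tfrac14\,\|\hat\beta-\beta^*\|\,\max_{1\le i\le m}\|\tilde x_i\|,\qquad \max_{1\le i\le m}|\hat\pi_i-\pi_i^*|\le \tfrac14\,\|\hat\theta-\theta^*\|\,\max_{1\le i\le m}\|\tilde x_i\|.$$
Since $\max_i\|\tilde x_i\|$ is bounded under Assumption~\ref{ass-ad3}, both right-hand sides are $o_{a.s.}(1)$ by step (b); this is precisely the essential condition $\|\hat\theta-\theta^*\|\max_i\|x_i\|=o_{a.s.}(1)$ flagged before the lemma, which is immediate here from the boundedness of $x_i$, completing the argument.
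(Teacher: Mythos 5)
Your proposal follows essentially the same route as the paper's proof: the same integrable envelope $c+ (1-c_2)|\log p_i|$ obtained from compactness of $\Theta\times\mathcal{B}$ and boundedness of $x_i$, the uniform strong law for mixing processes of P\"otscher and Prucha (1989), argmax consistency of $(\hat\theta,\hat\beta)$ under the unique-maximizer condition of Assumption \ref{ass-ad2} (which you prove directly where the paper cites Lemma 2.2 of White, 1982), and the Lipschitz transfer through the logistic link and winsorization using $\max_i\|x_i\|_\infty<C$. The argument is correct and matches the paper's.
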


We impose some additional assumptions to study the asymptotic FDR control.

\begin{ass}\label{ass-31}
For two sequences $a_i,b_i \in [\epsilon,1]$ with small enough $\epsilon$ and large enough $m$,
\begin{align*}
&\left|\frac{1}{m}\sum_{i=1}^{m}\left\{P(p_i\leq a_i|x_i)-P(p_i\leq
b_i|x_i)\right\}\right| \leq c_0\max_{1\leq i\leq m}|a_i-b_i|,
\end{align*}
where $c_0$ depends on $\epsilon$ but is independent of $m, x_i, a_i$ and
$b_i$.
\end{ass}

\begin{ass}\label{ass-32}
Assume that
\begin{align}
&\frac{1}{m}\sum_{i=1}^{m}P(p_i\leq
c(t,\pi_{i}^*,k^*_i))\rightarrow G_0(t),\\
&\frac{1}{m}\sum_{i=1}^{m}P(1-p_i<
c(t,\pi_{i}^*,k^*_i))\rightarrow G_1(t),\\
&\frac{1}{m}\sum_{H_i=0}P(p_i\leq
c(t,\pi_{i}^*,k^*_i))\rightarrow \tilde{G}_1(t),
\end{align}
for any $t\geq t_0$ with $t_0>0$, where $G_0(t)$, $G_1(t)$ and
$\tilde{G}_1(t)$ are all continuous functions of $t$. Note that the probability here is taken with respect to the joint distribution of $(p_i,x_i)$.
\end{ass}

Let $U(t)=G_1(t)/G_0(t)$, where $G_1$ and $G_0$ are defined in
Assumption \ref{ass-32}.

\begin{ass}\label{ass-33}
There exists a $t'>t_0>0$ such that $U(t')<\alpha.$
\end{ass}

Assumption \ref{ass-31} is fulfilled if the conditional density of $p_i$ given $x_i$ is
bounded uniformly across $i$ on $[\epsilon,1]$. This assumption
is not very strong as we still allow the density to be unbounded
around zero. Assumptions \ref{ass-32}-\ref{ass-33} are similar to those in Theorem 4 of Storey et al. (2004).
In particular, Assumption \ref{ass-33} ensures the existence of a cut-off to control the FDR
at level $\alpha.$

We are now in position to state the main result of this section
which shows that the proposed procedure provides asymptotic FDR
control. The proof is deferred to the supplementary material.
\begin{theorem}\label{thm}
Suppose Assumptions \ref{ass-ad1}-\ref{ass-33} hold and $f_0$ satisfies Condition (\ref{eq-con-f0}). Then we have
$$\limsup_{m}\text{FDR}(\hat{t},\hat{\Pi},\hat{K})\leq \alpha,$$
where $\hat{\Pi}=(\hat{\pi}_1,\dots,\hat{\pi}_m)$ and $\hat{K}=(\hat{k}_1,\dots,\hat{k}_m)$.
\end{theorem}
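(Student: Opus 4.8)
The plan is to dominate the true false discovery proportion by the BC-type quantity $\text{FDP}_{\text{up}}$ that the procedure explicitly controls, after first replacing all data-driven objects by deterministic limits. Writing the rejection events as $\{p_i\leq c(t,\hat\pi_i,\hat k_i)\}$, I would begin by invoking Lemma \ref{lem-30}: since $c(t,\pi,k)$ is continuous in $(\pi,k)$ and the consistency in Lemma \ref{lem-30} is uniform over $i$, one obtains $\max_i\sup_{t\geq t_0}|c(t,\hat\pi_i,\hat k_i)-c(t,\pi_i^*,k_i^*)|\to 0$ almost surely. Assumption \ref{ass-31} then controls the effect of this perturbation on the conditional rejection probabilities, so that the empirical counts built from $(\hat\pi_i,\hat k_i)$ and from the population values $(\pi_i^*,k_i^*)$ differ by $o(1)$ after dividing by $m$. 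The analysis can therefore proceed with the non-random parameters $\pi_i^*,k_i^*$.

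The heart of the argument is a uniform law of large numbers. Using the mixing hypothesis in Assumption \ref{ass-ad3} together with the pointwise limits posited in Assumption \ref{ass-32}, I would establish that, uniformly over $t\geq t_0$ and almost surely,
\begin{align*}
&\frac1m\sum_{i=1}^m\mathbf{1}\{p_i\leq c(t,\pi_i^*,k_i^*)\}\to G_0(t),\quad \frac1m\sum_{i=1}^m\mathbf{1}\{1-p_i\leq c(t,\pi_i^*,k_i^*)\}\to G_1(t),\\
&\frac1m\sum_{H_i=0}\mathbf{1}\{p_i\leq c(t,\pi_i^*,k_i^*)\}\to\tilde G_1(t).
\end{align*}
Pointwise convergence follows from a strong law for $\alpha$- (or $\phi$-) mixing, non-identically-distributed summands; the upgrade to uniformity exploits that $t\mapsto c(t,\pi_i^*,k_i^*)$ is non-decreasing, so each empirical average is monotone in $t$ and converges to the continuous limit $G_0,G_1,\tilde G_1$, whence uniform convergence on compacts follows in the spirit of the Glivenko--Cantelli / Pólya argument. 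I expect this to be the main obstacle: the summands are neither independent nor identically distributed (they depend on $i$ through $x_i$), so one must combine a mixing maximal inequality with the monotonicity in $t$ to control all thresholds simultaneously.

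Granting the uniform convergence, $\text{FDP}_{\text{up}}(t)\to U(t)=G_1(t)/G_0(t)$ and the true $\text{FDP}(t)\to\tilde G_1(t)/G_0(t)$, uniformly on $t\geq t_0$. Condition (\ref{eq-con-f0}) yields $P(p_i\leq a\mid x_i,H_i=0)\leq P(1-p_i\leq a\mid x_i,H_i=0)$, hence $\tilde G_1(t)\leq G_1(t)$ and $\tilde G_1(t)/G_0(t)\leq U(t)$ for all $t$. Assumption \ref{ass-33} supplies a $t'>t_0$ with $U(t')<\alpha$, so by the uniform convergence the data-driven threshold $\hat t$ is eventually bounded below by $t'$; this keeps $\hat t$ in the regime $t\geq t_0$ where the limits are valid and forces the number of rejections to grow linearly in $m$, so the $1\vee(\cdot)$ denominators in $\text{FDP}$ and $\text{FDP}_{\text{up}}$ coincide asymptotically. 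Since $\text{FDP}_{\text{up}}(\hat t)\leq\alpha$ by the very definition of $\hat t$, evaluating the uniform limits at the random $\hat t$ gives
$$\text{FDP}(\hat t)\leq\frac{\tilde G_1(\hat t)}{G_0(\hat t)}+o(1)\leq U(\hat t)+o(1)=\text{FDP}_{\text{up}}(\hat t)+o(1)\leq\alpha+o(1)\quad\text{a.s.}$$
Legitimacy of evaluating the limits at the p-value-dependent $\hat t$ is precisely what the \emph{uniform}-in-$t$ convergence of the previous step buys us.

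Finally I would pass from the almost-sure bound on the proportion to the stated bound on its expectation. Because $0\leq\text{FDP}(\hat t)\leq 1$ for every $m$, the reverse Fatou lemma (dominated convergence applied to the $\limsup$) gives $\limsup_m\text{FDR}(\hat t,\hat\Pi,\hat K)=\limsup_m E[\text{FDP}(\hat t)]\leq E[\limsup_m\text{FDP}(\hat t)]\leq\alpha$, which is the assertion of the theorem. Beyond the uniform LLN flagged above, the only other point needing care is the reduction of the estimated cutoffs to their population analogues in the first step, where the interplay between the uniform consistency of Lemma \ref{lem-30} and the density regularity in Assumption \ref{ass-31} must be made quantitative.
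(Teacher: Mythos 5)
Your proposal follows essentially the same route as the paper's proof: reduce the data-driven cutoffs to their population analogues via Lemma \ref{lem-30} and Assumption \ref{ass-31}, establish a uniform-in-$t$ strong law for the three empirical counts by combining the mixing LLN with monotonicity in $t$ and continuity of the limits (the paper's Lemmas \ref{lem-32}--\ref{lemma-12}), bound $\tilde G_1\leq G_1$ via Condition (\ref{eq-con-f0}) (Lemma \ref{lem-add}), use Assumption \ref{ass-33} to force $\hat t\geq t'$ eventually, and conclude with Fatou's lemma. The argument is correct and matches the paper's in both structure and the identification of the uniform law of large numbers as the main technical hurdle.
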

It is worth mentioning that the validity of our method does not rely on the mixture model assumption (\ref{m0})-(\ref{m1}). In this sense, our method is misspecification robust as the classical BH procedure does.
We provide a comparison between our method and some recently proposed approaches in the following table.

\begin{table}[!ht]\label{tab}
\small
\centering
\begin{tabular}{p{2.5cm}p{1cm}p{1.5cm}p{2.2cm}p{1.75cm}p{1.6cm}p{1.6cm}p{1.6cm}}
\hline
Procedure &  $\pi_0$   & $f_1$ & FDR control & Dependent p-values & Misspec. robust & Multiple covariates &  Computation\\
\hline
Ignatiadis et al. (2016) & Varying & Partially used & Asymptotic control & Unknown &  Yes & No &  ++++\\
\hline
Li and Barber (2017) & Varying & Not used & Finite sample upper bound & Gaussian copula & Yes &  No$^*$ & ++++\\
\hline
Lei and Fithian (2016) & Varying & Varying & Finite sample control & Unknown & Yes  & Yes & +\\
\hline
Scott et al. (2015) & Varying   & Fixed & No guarantee &  Unknown & Unknown  & Yes & +++\\
\hline
Boca and Leek (2018) & Varying   & Not used & Unknown &  Unknown & Yes & Yes & +++\\
\hline
\textbf{The proposed method} & Varying   & Varying & Asymptotic control &  Asymptotic &  Yes &Yes & +++\\
\hline
\end{tabular}
\\ \caption{Comparison of several covariate adaptive FDR control procedures in recent literature. The number of ``+'' represents the speed. *The framework of Li and Barber (2017) allows accommodating multiple covariates, but the provided software did not implement.}
\end{table}

\subsection{Power analysis}
We study the asymptotic power of the oracle procedure. Suppose the mixture model (\ref{m0})-(\ref{m1}) holds with
$\pi_{0i}=\pi_0(x_i)$ and $f_{1,i}(\cdot)=f_{1}(\cdot;x_i)$, where $f_{1}(\cdot;x)$ is a density function for any fixed $x\in\mathcal{X}$.
Denote by $F_1(\cdot;x)$ and $\bar{F}_1(\cdot;x)$ the distribution and survival functions associated with $f_1(\cdot;x)$ respectively.
Suppose the empirical distribution of $x_i$'s converges to the probability law $\mathcal{P}$. Consider the oracle procedure with $\pi_i=\pi_0(x_i)$ and $k_i=k_0(x_i)$. Here $k_0(\cdot)$ minimizes the integrated Kullback-Leibler divergence, i.e.,
\begin{align*}
&k_0=\argmin_{k\in\mathcal{K}}\int\text{D}_{\text{KL}}(f(\cdot;x)||g(;k(x)))\mathcal{P}(dx),
\\&\text{D}_{\text{KL}}(f(\cdot;x)||g(\cdot;k(x)))=\int^{1}_{0} f(p;x)\log\frac{f(p;x)}{g(p;k(x))}dp,
\end{align*}
with $f(p;x)=\pi_0(x)f_0(p)+(1-\pi_0(x))f_{1}(p;x)$ and $g(p;k(x))=\pi_0(x)+(1-\pi_0(x))(1-k(x))p^{-k(x)}$, and $\mathcal{K}=\left\{k(x):\log\left(\frac{k(x)}{1-k(x)}\right)=\beta_0+\beta_1'x,(\beta_0,\beta_1)\in\mathcal{B}\right\}$.
Write $c(t,x)=c(t,\pi_0(x),k_0(x))$. By the law of large numbers, the realized power of the oracle procedure has the
approximation
\begin{align*}
\text{Power}=&\frac{\sum_{i=1}^{m}\mathbf{1}\{i:H_i=1,p\leq c(t,x_i)\}}{\sum_{i=1}^{m}\mathbf{1}\{i:H_i=1\}}
\approx
\frac{\int(1-\pi(x))F_1(c(t_{\text{opt}},x);x)\mathcal{P}(dx)}{\int(1-\pi(x))\mathcal{P}(dx)},
\end{align*}
where $t_{\text{opt}}$ is the largest positive number such that
\begin{align}\label{eq-pow}
\frac{\int \{\pi_0(x)F_0(c(t,x))+(1-\pi_0(x))\bar{F}_1(1-c(t,x);x) \}\mathcal{P}(dx)}{\int \{\pi_0(x)F_0(c(t,x))+(1-\pi_0(x))F_1(c(t,x);x) \}\mathcal{P}(dx)}\leq
\alpha.
\end{align}
We remark that when
\begin{align}\label{eq-pow1}
\frac{\int (1-\pi_0(x))\bar{F}_1(1-c(t_{\text{opt}},x);x)\mathcal{P}(dx)}{\int \{\pi_0(x)F_0(c(t_{\text{opt}},x))+(1-\pi_0(x))F_1(c(t_{\text{opt}},x);x) \}\mathcal{P}(dx)}\approx 0,
\end{align}
the asymptotic power of the proposed procedure is closed to the oracle procedure based on the LFDR given by
\begin{align}\label{lfdr}
\text{LFDR}_i(p_i)=\frac{\pi_{0i}f_0(p_i)}{\pi_{0i}f_0(p_i)+(1-\pi_{0i})f_{1,i}(p_i)}.
\end{align}

\section{Simulation  studies}
We conduct comprehensive simulations to evaluate the finite-sample performance of the proposed method  and compare it to competing methods. For genome-scale multiple testing, the numbers of hypotheses could range from thousands to millions. For demonstration purpose, we start with $m{=}10,000$ hypotheses. To study the impact of signal density and strength,  we simulate three levels of signal density (sparse, medium and dense signals) and six levels of signal strength (from very weak to very strong).  To  demonstrate the power improvement by using external covariates, we simulate covariates of varying informativeness (non-informative, moderately informative and strongly informative).  For simplicity, we simulate one covariate $x_i  \sim N(0, 1)$ for $i=1,\cdots,m$. Given $x_i$, we let
$$\pi_{0i} = \frac{\exp (\eta_i)}{1 + \exp(\eta_i)},  ~~~ \eta_i = \eta_0 + k_d x_i, $$
where $\eta_0$ and $k_d$ determine the baseline signal density and the informativeness of the covariate, respectively.  { For each simulated dataset, we fix the value of  $\eta_0$ and $k_d$. }  We set $\eta_0 \in \{3.5, 2.5, 1.5\}$, which achieves a signal density  around $3\%$, $8\%$, and $18\%$ respectively at the baseline (i.e., no covariate effect), representing sparse, medium and dense signals. We set $k_d \in \{0, 1,1.5\}$, representing a non-informative, moderately informative and strongly informative covariate, respectively.  { Thus, we have a total of $3 \times 3 = 9$ parameter settings.}
Based on $\pi_{0i}$, the underlying truth $H_i$ is simulated from
$$H_i \sim \text{Bernoulli}(1 - \pi_{0i}).$$
Finally, we simulate independent z-scores using $$z_i \sim  N(k_sH_i, 1), $$ where $k_s$ controls the signal strength (effect size) and we use values equally spaced on $[2, 2.8]$.  Z-scores are converted into p-values using the one-sided formula $1 - \Phi(z_i)$. P-values together with $x_{i}$ are used as the input for the proposed method.

In addition to the basic setting (denoted as Setup S0), we investigate other settings to study the robustness of the proposed method. Specifically, we study

 \begin{itemize}
\item[Setup S1.] {\it Additional $f_1$ distribution}. Instead of simulating normal z-scores under $f_1$, we simulate z-scores from a non-central gamma distribution with the shape parameter $k{=2}$. The scale/non-centrality parameters of the non-central gamma distribution are chosen to match the variance and mean of the normal distribution under S0.

 \item[Setup S2.] {\it Covariate-dependent $\pi_{0i}$ and $f_{1,i}$}. On top of the basic setup S0, we simulate another covariate $x_i' \sim N(0, 1)$  and let $x_i'$ affect $f_{1,i}$. Specifically, we scale $k_s$ by
 $\frac{\displaystyle 2\exp (k_f x_i')}{\displaystyle  1 + \exp(k_f x_i')}, $
 where we set $k_f \in \{0, 0.25, 0.5\}$ for non-informative, moderately informative and strongly informative covariate scenarios, respectively.

 \item[Setup S3.] {\it Dependent hypotheses}. We further investigate the effect of dependency among hypotheses by simulating correlated multivariate normal z-scores. Four correlation structures, including two block correlation structures and two AR(1) correlation structures, are investigated. For the block correlation structure, we divide the 10,000 hypotheses into 500 equal-sized blocks.  Within each block, we simulate equal positive correlations ($\rho{=}0.5$) (S3.1).  We also further divide the block into 2 by 2 sub-blocks, and simulate negative correlations ($\rho{=}-0.5$) between the two sub-blocks (S3.2).  For AR(1) structure,  we investigate both $\rho{=}0.75^{|i - j|}$ (S3.3) and $\rho{=}(-0.75)^{|i - j|}$ (S3.4).

 {
  \item[Setup S4.]  {\it Heavy-tail covariate}.  In this variant, we generate $x_i$ from the t distribution with 5 degrees of freedom.

 \item[Setup S5.] {\it Non-theoretical null distribution}.  We simulate both increasing and decreasing $f_0$.  For an increasing $f_0$ (S5.1), we generate null z-score  $z_i | H_0 \sim N(-0.15, 1)$. For a decreasing $f_0$ (S5.2), we generate null z-score  $z_i | H_0 \sim N(0.15, 1)$.

 }

  \end{itemize}
We present the simulation results for the Setup S0-S2 in the main text and the results for the Setup S3-S5 in the supplementary material. { To allow users to conveniently implement our method and reproduce the numerical results reported here, we make our code and data publicly available at https://github.com/jchen1981/CAMT}.

 \subsection{Competing methods}
We label our method as CAMT (Covariate Adaptive Multiple Testing) and compare it to the following competing methods:
\begin{itemize}
\item Oracle: Oracle procedure based on LFDR (see e.g., (\ref{lfdr})) with simulated $\pi_{0i}$ and $f_{1, i}$, which theoretically has the optimal performance;
\item BH: Benjamini-Hochberg procedure  (Benjamini et al., 1995, \textit{p.adjust} in  R 3.4.2);
\item ST: Storey's BH procedure  (Storey 2002, \textit{qvalue} package,  v2.10.0);
\item { BL: Boca and Leek procedure  (Boca and Leek, 2018, \textit{swfdr} package,  v1.4.0);}
\item IHW: Independent hypothesis weighting  (Ignatiadis et al., 2016, \textit{IHW} package, v1.6.0);
\item { FDRreg:  False discovery rate regression (Scott et al., 2015, \textit{FDRreg} package, v0.2, \textit{https://github.com/jgscott/FDRreg}), FDRreg(T) and FDRreg(E) represent FDRreg with the theoretical null and empirical null respectively; }
\item SABHA: Structure adaptive BH procedure (Li and Barber, 2017, $\tau = 0.5, \epsilon = 0.1$ and stepwise constant weights, \textit{https://www.stat.uchicago.edu/$\sim$rina/sabha/All\_q\_est\_functions.R});
\item AdaPT: Adaptive p-value thresholding procedure (Lei and Fithian, 2018, \textit{adaptMT} package, v1.0.0).
\end{itemize}

We evaluate the performance based on FDR control (false discovery proportion)  and power (true positive rate)  with a target FDR level of 5\%.  Results are averaged over 100 simulation runs.

 \subsection{Simulation results}

 We first study the performance of the proposed method under the basic setup (S0, Figure \ref{fig:sim:1}).  All compared methods generally controlled the FDR around/under the nominal level of 0.05 and no serious FDR inflation was observed at any of the parameter setting (Figure \ref{fig:sim:1}A).  { However, FDRreg exhibited a slight FDR inflation under some parameter settings and the inflation seemed to increase with the informativeness of the covariate and signal density.}   Conservativeness was also observed for some methods in some cases.   As expected, the BH procedure, which did not take into account $\pi_0$,  was conservative when the signal was dense. IHW procedure  was generally more conservative than BH and the conservativeness increased with the informativeness of the covariate.   CAMT, the proposed method, was conservative when the signal was sparse and the covariate was less informative. The conservativeness was more  evident when the effect size was small but decreased as the effect size became larger. AdaPT was more conservative than CAMT under sparse signal/weak covariate.  In terms of power (Figure \ref{fig:sim:1}B), there were several interesting observations.  First, as the covariate became more informative, all the covariate adaptive methods became more powerful than ST and BH. The power differences between these methods also increased.  Second, { FDRreg was the most powerful across settings.  Under a highly informative covariate,  it was even slightly above the oracle procedure, which theoretically had an optimal power.  The superior power of FDRreg could be partly explained by a less well controlled FDR.} The  IHW was more powerful than BL/SABHA when the signal was sparse; but the trend reversed when the signal was dense. Third, AdaPT was very powerful when the signal was dense and the covariate was highly informative. However, the power decreased as the signal became more sparse and the covariate became less informative.  In fact, when the signal was sparse and the covariate was not informative or moderately informative, AdaPT had the lowest power.  In contrast, the proposed method CAMT  was close to the oracle procedure. It  was comparable to AdaPT when AdaPT was the most powerful, but was significantly more powerful than AdaPT in its unfavorable scenarios. CAMT had a clear edge when the covariate was informative and signal was sparse. Similar to AdaPT, CAMT had some power loss under sparse signal and non-informative covariate, probably due to the discretization effect from the BC-type estimator.

 We conducted more evaluations on type I error control under S0.  We investigated the FDR control across different target levels. Figure \ref{fig:sim:2} showed excellent FDR control across target levels for all methods except FDRreg.  The actual FDR level of BH and IHW was usually below the target level. CAMT was slightly conservative at a small target level under the scenario of sparse signal and less informative covariate, but it became less conservative at larger target levels. We also simulated a complete null, where no signal was included (Figure \ref{fig:sim:3}). In such case, FDR was reduced to FWER. { Interestingly, FDRreg was as conservative as CAMT and AdaPT under the complete null.}

{ It is interesting to study the performance of the competing methods  under a much larger feature size, less signal density, and weaker signal strength, representing the most challenging scenario in real problems.  To achieve this end, we simulated $m=100,000$ features with a signal density of $0.5\%$ at the baseline (no covariate effect).  Under a moderately informative covariate, we observed a substantial power improvement of CAMT over all other methods including FDRreg while controlling the FDR adequately at different target levels (Figure \ref{fig:sim:3:1}). } We further reduced the feature size to 1,000  (Figure \ref{fig:sim:6:1} in the supplement) to study the robustness of the methods to a much smaller feature size.   Although CAMT and AdaPT were still more powerful than the competing methods  when the signal was dense and the covariate was informative, a significant power loss was observed in other parameter settings, particularly under sparse signal and a less informative covariate.  As we further decreased the feature size to 200, CAMT and AdaPT became universally less powerful than ST across parameter settings (data not shown).  Therefore, application of CAMT or AdaPT to datasets with small numbers of features was not recommended unless the signal was dense and the covariate was highly informative.

{ We also simulated datasets, where the z-scores under the alternative were drawn from a non-central gamma distribution (Setup S1).  Under such setting, the trend remained almost the same as the basic setup (Figure \ref{fig:sim:4}), but FDRreg had a more marked FDR inflation.  When both $\pi_{0i}$ and $f_{1, i}$ depended on the covariate (Setup S2),  CAMT became slightly more powerful without affecting the FDR control, especially when the covariate was highly informative  (Figure \ref{fig:sim:5}). Meanwhile, the performance of FDRreg was also remarkable with a very small FDR inflation.   However, if we increased the effect on $f_{1, i}$ by reducing the standard deviation of the z-score under the alternative, FDRreg was no longer robust and the observed FDP was  substantially above the target level when the signal strength was weak, indicating the benefit of modeling covariate-dependent $f_1$ (Figure \ref{fig:sim:6} in the supplement). }  CAMT was also robust to different correlation structures  (Setup S3.1, S3.2, S3.3, S3.4) and we observed similar performance under these correlation structures (Figures \ref{fig:sim:7}-\ref{fig:sim:10} in the supplement).   { The performance of CAMT was also robust to a heavy-tail covariate (Setup S4, Figure \ref{fig:sim:11} in the supplement). } In an unreported numerical study, we
added different levels of perturbation to the covariate by multiplying random small values drawn from Unif(0.95, 1.05), Unif(0.9, 1.1), and Unif(0.8, 1.2), respectively.
We observed that the $\pi_0$ estimates under perturbation are highly correlated with the $\pi_0$ estimates without perturbation, which showed the stability of our method against
data perturbations.

{  We also examined the robustness of CAMT to the deviation from the theoretical null (Setup S5). Specifically, we simulated both decreasing and increasing $f_0$.  The new results were presented in Figures \ref{fig:sim:12} and  \ref{fig:sim:13} in the supplement.  We observed that, for an increasing $f_0$,  all the methods other than FDRreg were conservative and had substantial less power than the oracle procedure.  FDRreg using a theoretical null was conservative when the covariate was less informative but was anti-conservative under a highly informative covariate. On the other hand, FDRreg using an empirical null had an improved power and controlled the FDR closer to the target level for most settings.  However, it did not control the FDR well when the signal was dense and the prior information was strong.  When $f_0$ was decreasing, all the methods without using the empirical null failed to control the FDR.   FDRreg with an empirical null improved the FDR control substantially for most settings but still could not control the FDR well under the dense-signal and strong-prior setting.  Therefore, there is still room for improvement to address the empirical null problem.}

 Finally, we compared the computational efficiency of these competing methods (Figure \ref{fig:sim:14}).  SABHA (step function) and IHW were computationally the most efficient and they completed the analysis for one million p-values in less than two  minutes. CAMT and the new version of FDRreg (v0.2) were also computationally efficient, followed by BL,  and they all could complete the computation in minutes for  one million p-values under S0.  AdaPT was computationally the most intensive  and completed the analysis in hours for one million p-values. { We note that all the methods including AdaPT are computationally feasible for  a typical omics dataset. }

 In summary, CAMT  improves over existing covariate adaptive multiple testing procedures, and is a powerful, robust and computationally efficient tool for large-scale multiple testing.

 \begin{figure}
\centering
\includegraphics[width=0.9\textwidth]{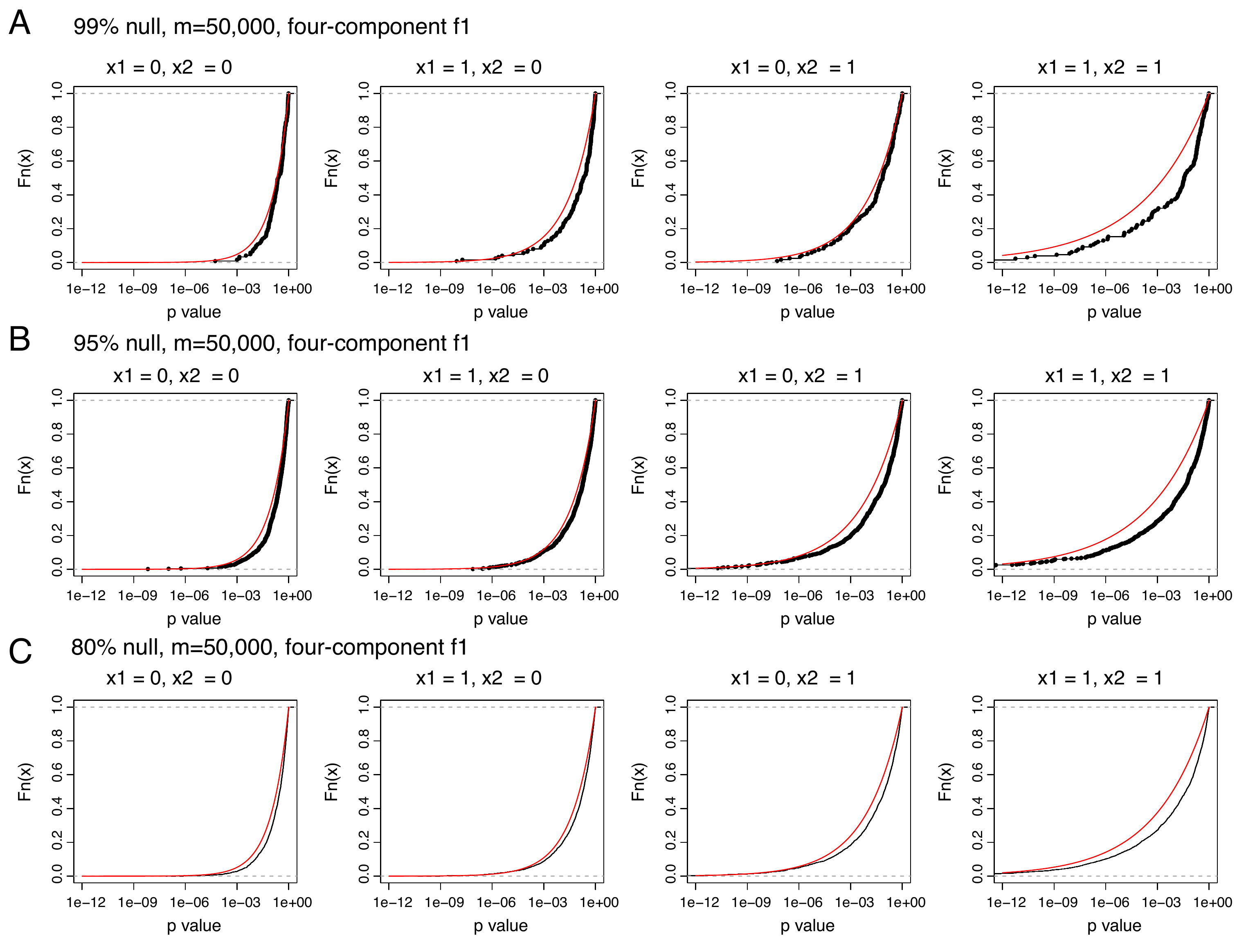}
\caption[Illustration of the fit of beta]{ Illustration of the approximation to the true likelihood ratio by the surrogate likelihood ratio based on a beta distribution.  Two binary covariates $x_1$ and $x_2$  were simulated.  The z-score under the alternative was drawn from $N(0, 1.5+0.5x_1+x_2)$.  Three levels of null proportions (A - 99\%, B - 95\%, and C - 80\%) were simulated, where the null z-score was drawn from $N(0, 1)$.  Two-sided p-values were calculated based on the z-score and the parameter $k_i$ of the beta distribution was estimated by CAMT. The CDF of the empirical distribution of the p-value under the alternative (black) was compared to CDF of the fitted beta-distribution (red). }
\label{fig:sim:0}
\end{figure}

 \begin{figure}
\centering
\includegraphics[width=0.9\textwidth]{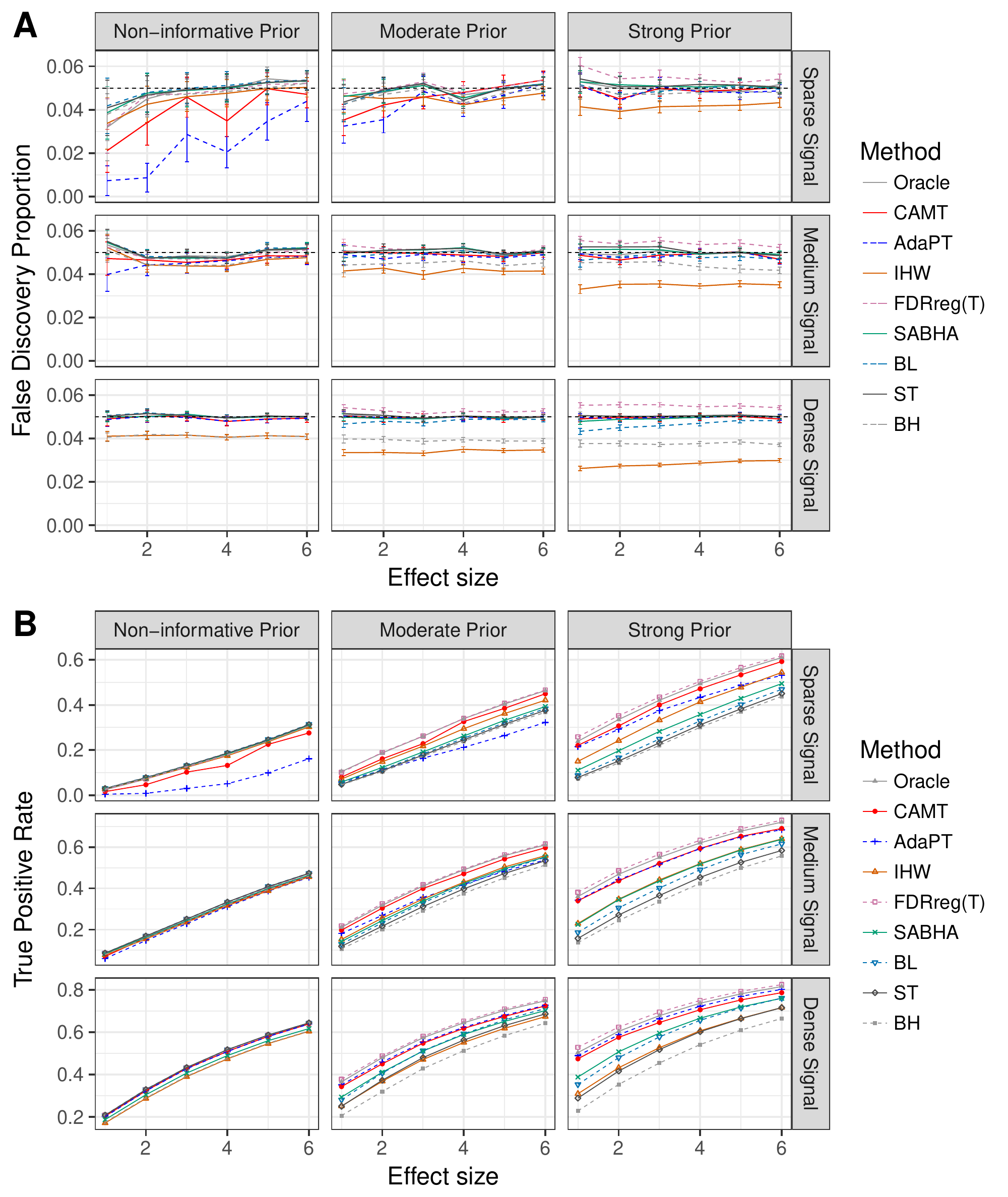}
\caption[Performance comparison under S0]{Performance comparison under the basic setting (S0).  False discovery proportions (A) and true positive rates (B) were averaged  over 100 simulation runs. Error bars (A) represent the  95\% CIs and the dashed horizontal line indicates the target FDR level of 0.05. }
\label{fig:sim:1}
\end{figure}

 \begin{figure}
\centering
\includegraphics[width=0.9\textwidth]{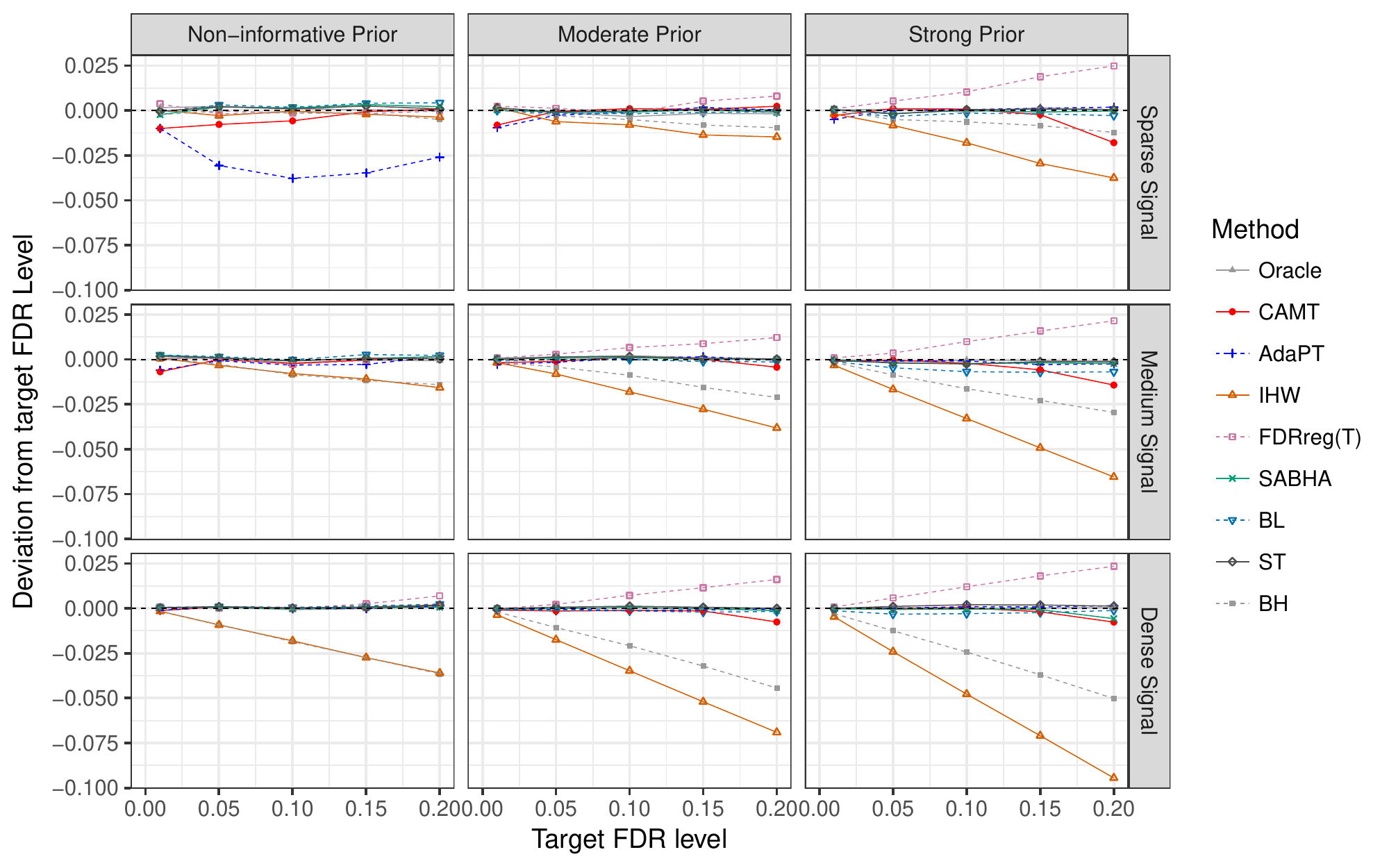}
\caption[FDR control  across different target levels. ]{FDR control at various target  levels (0.01 - 0.20) under the basic setting (S0) and a medium signal strength.  False discovery proportions  were averaged  over 100 simulation runs and the deviation from the target level (y-axis) was plotted.}
\label{fig:sim:2}
\end{figure}

 \begin{figure}
\centering
\includegraphics[width=0.75\textwidth]{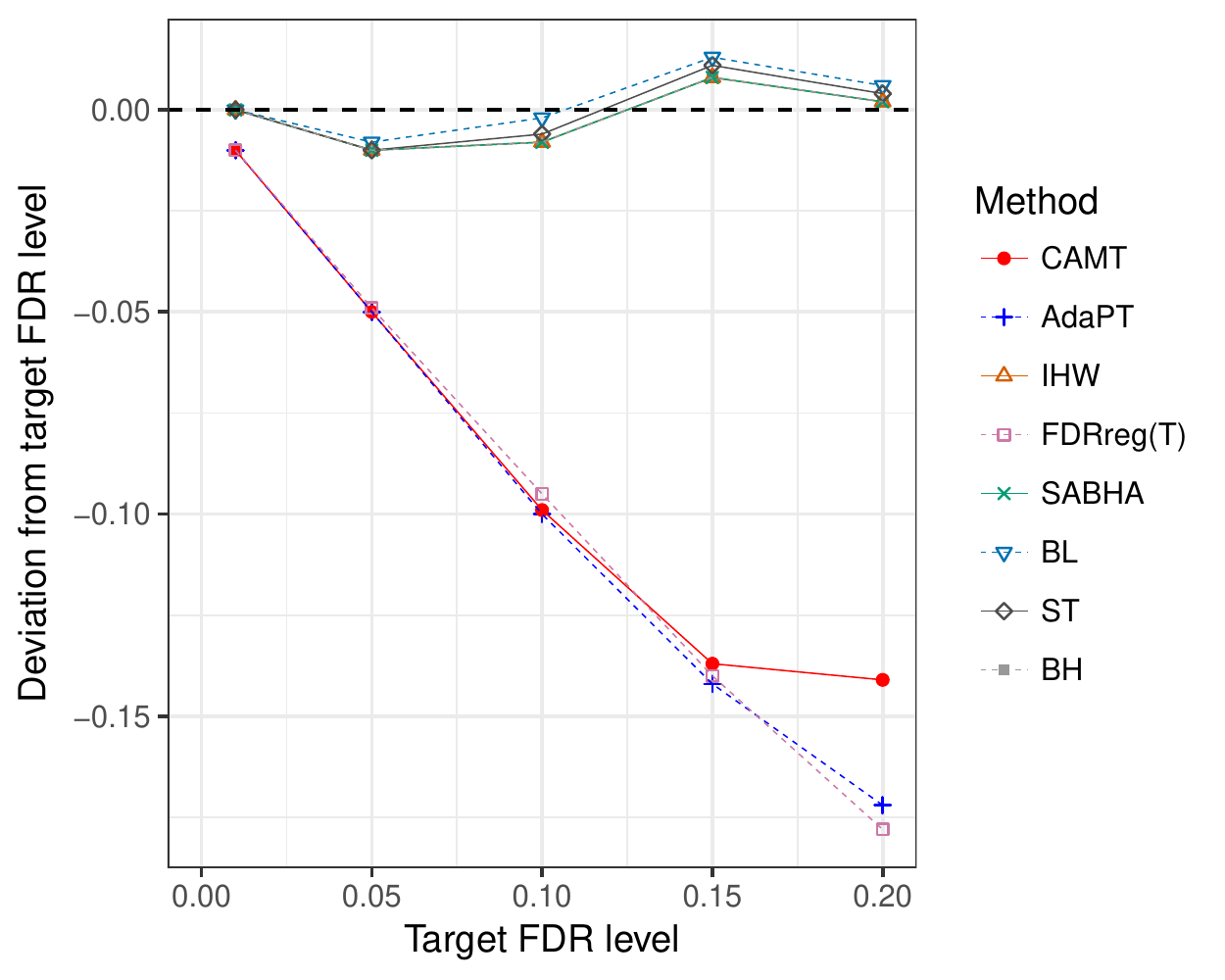}
\caption[FDR control under the complete null]{FDR  control at various target  levels (0.01 - 0.20) under the complete null (no signal was simulated). False discovery proportions were averaged over 1,000 simulation runs and the deviation from the target level (y-axis) was plotted. }
\label{fig:sim:3}
\end{figure}

  \begin{figure}
\centering
\includegraphics[width=0.9\textwidth]{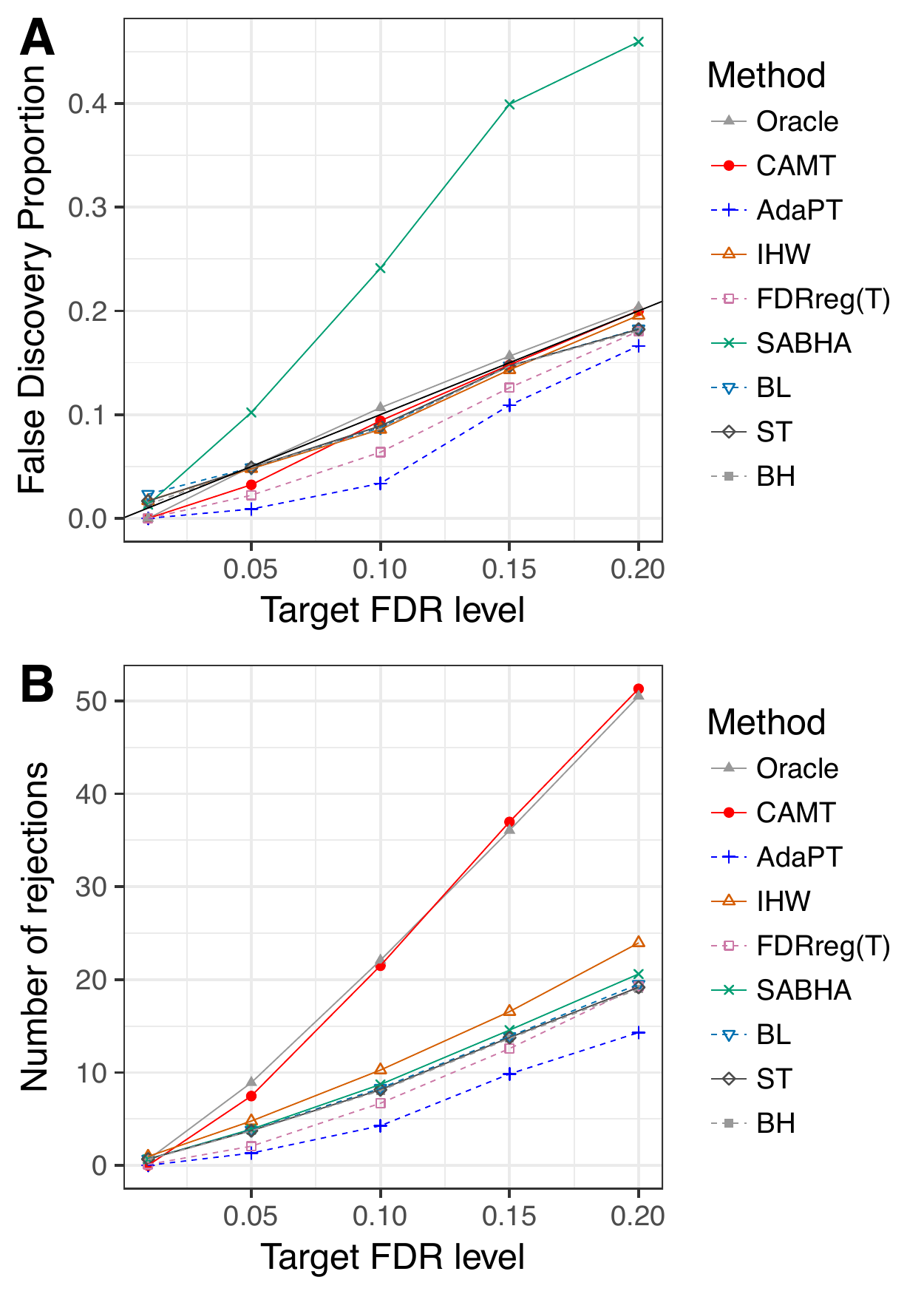}
\caption[Performance comparison under S0 and $m=100,000$]{Performance comparison with $m=100,000$ under the basic setting (S0).  Extremely low signal density ($>99\%$), moderate covariate strength and low signal strength were simulated. False discovery proportions (A) and number of rejections (B) were averaged  over 100 simulation runs and were plotted against various FDR target levels (0.01 - 0.20). }
\label{fig:sim:3:1}
\end{figure}

 \begin{figure}
\centering
\includegraphics[width=0.9\textwidth]{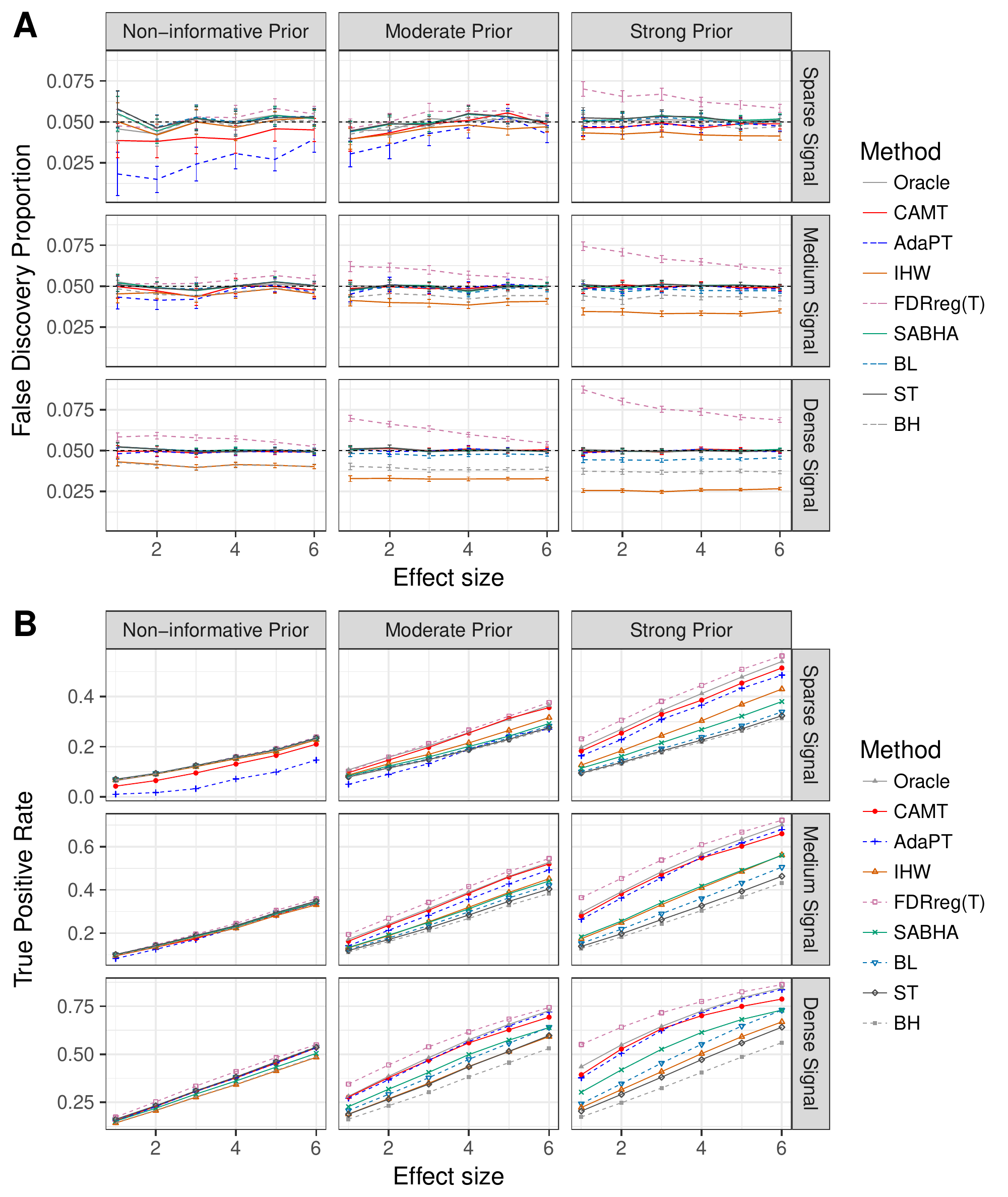}
\caption[Performance comparison under S2]{Performance comparison under S1 ($f_{1,i}$: non-central gamma distribution).  False discovery proportions (A) and true positive rates (B) were averaged  over 100 simulation runs. Error bars (A) represent the  95\% CIs and the dashed horizontal line indicates the target FDR level of 0.05. }
\label{fig:sim:4}
\end{figure}

 \begin{figure}
\centering
\includegraphics[width=0.9\textwidth]{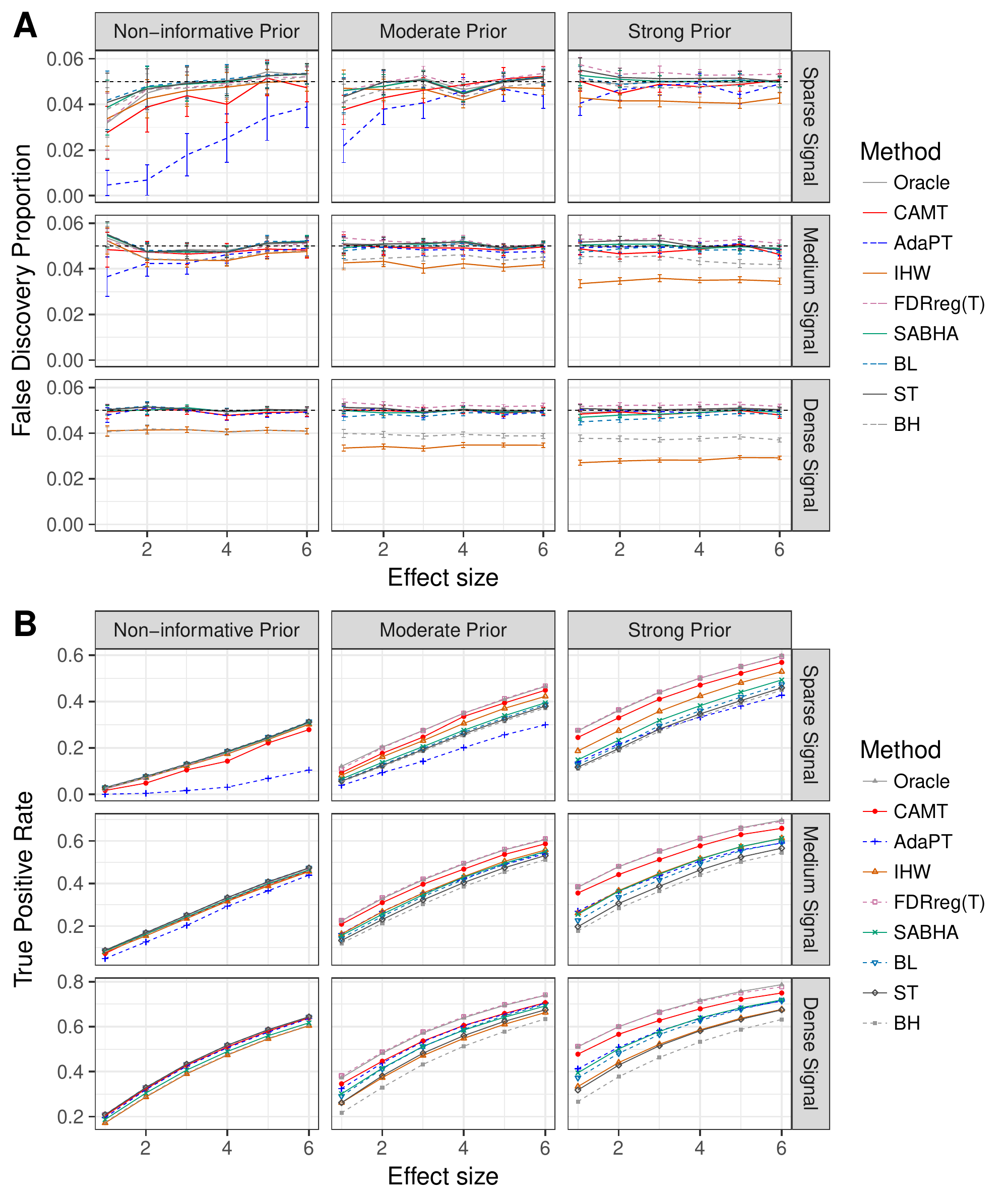}
\caption[Performance comparison under S1]{Performance comparison  under S2 (covariate-dependent $\pi_{0, i}$ and $f_{1, i}$).  False discovery proportions (A) and true positive rates (B) were averaged  over 100 simulation runs. Error bars (A) represent the  95\% CIs and the dashed horizontal line indicates the target FDR level of 0.05. }
\label{fig:sim:5}
\end{figure}

 \begin{figure}
\centering
\includegraphics[width=0.6\textwidth]{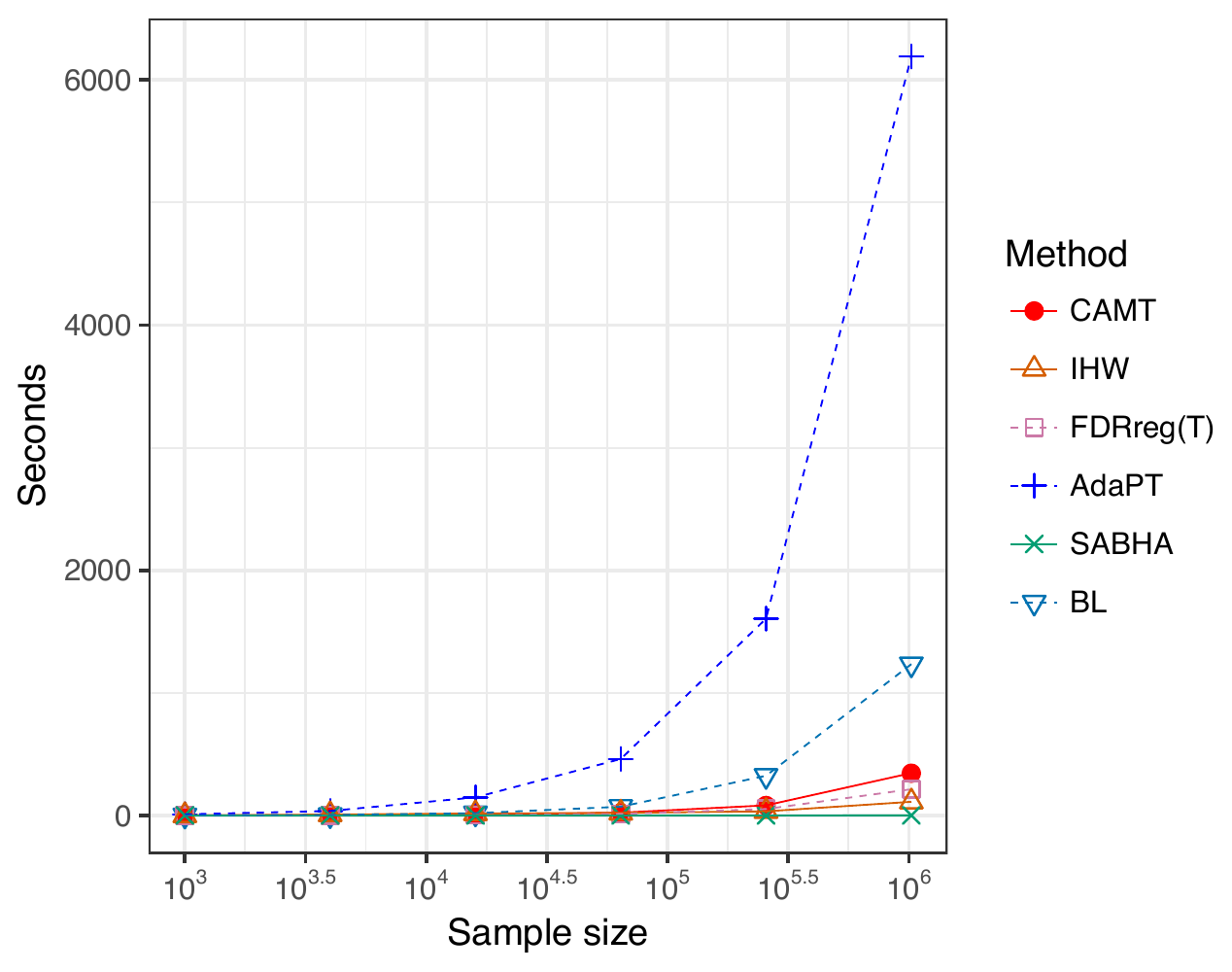}
\caption[Runtime comparison]{Comparison of runtime under the basic setting (S0).  Medium signal density and strength, and a moderately informative covariate was simulated.  The number of features varied from $10^3$ to $4^5 \times 10^3$.  The  average runtime over three replications was plotted against the feature size on a log scale. The computation was  performed on an AMD Opteron CPU with 256GB RAM and 16 MB available cache.}
\label{fig:sim:14}
\end{figure}

 \begin{figure}
\centering
\includegraphics[width=0.9\textwidth]{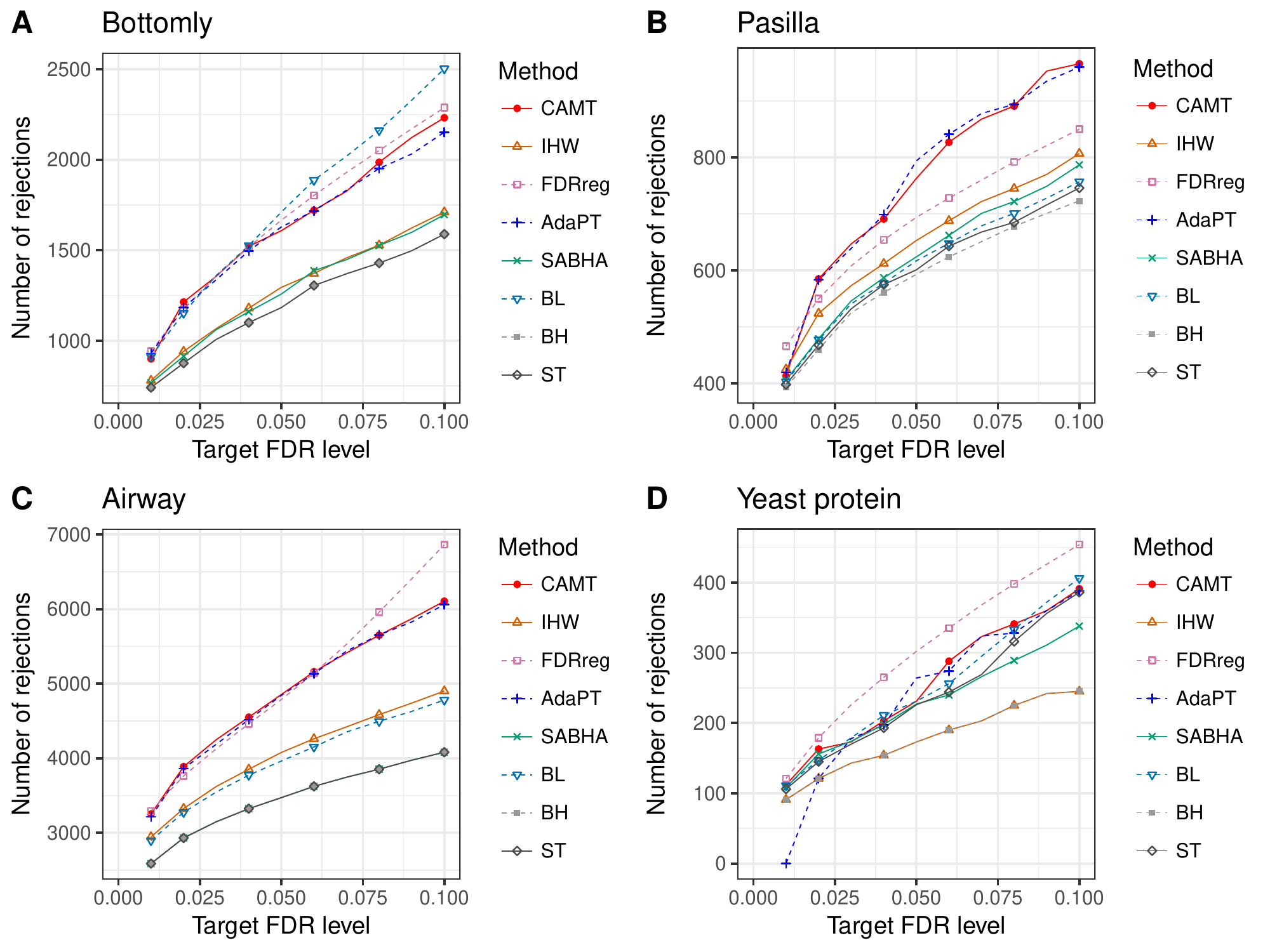}
\caption[Performance on the real datasets by AdaPT.]{The number of rejections at different target FDR levels on four real datasets used to demonstrate the performance of AdaPT. The Bottomly (A), Pasilla (B) and Airway (C) datasets were three transcriptomics datasets from RAN-seq experiments with a feature size of 13,932, 11,836 and 33,469,  respectively. The yeast protein dataset (D) was a proteomics dataset with a feature size of 2,666. }
\label{fig:real:1}
\end{figure}

 \begin{figure}
\centering
\includegraphics[width=0.9\textwidth]{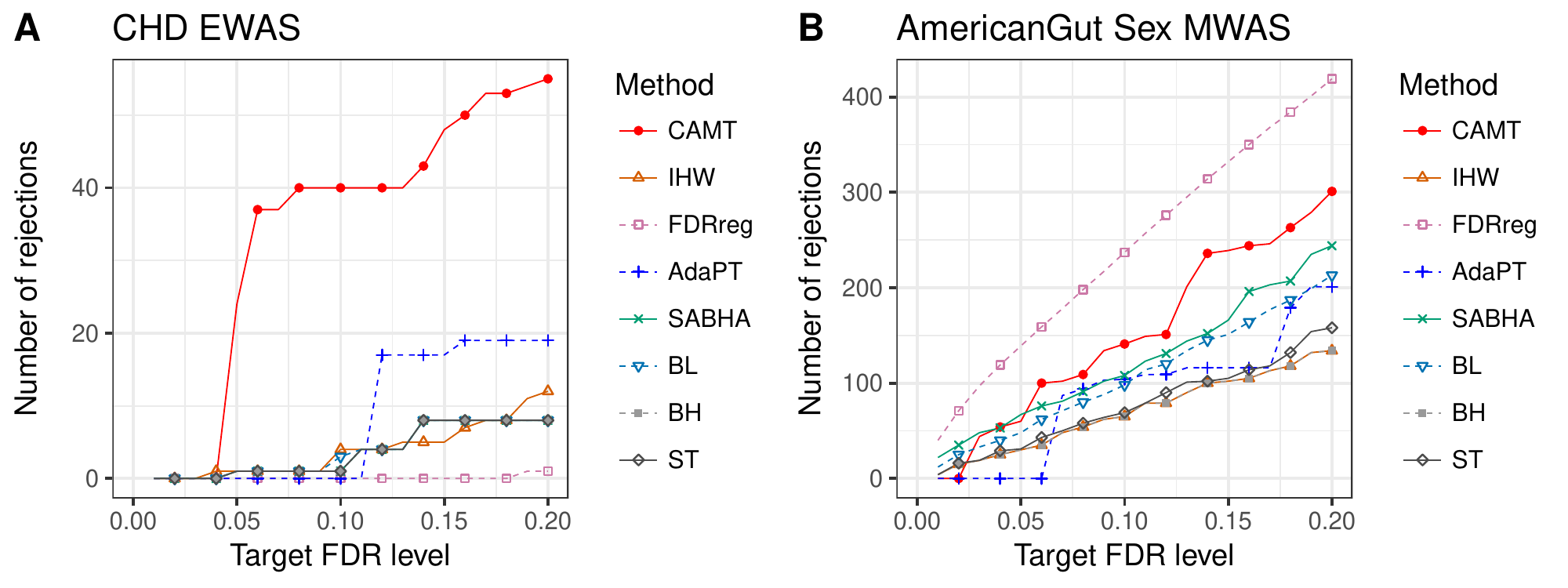}
\caption[Performance on the real datasets by AdaPT.]{The number of rejections at different target FDR levels on two real datasets: EWAS of congenital heart disease (A) and MWAS of sex effect (B). The EWAS dataset was produced by the Illumina 450K methylation beadchip ($m{=}455,741$) and the MWAS dataset was produced by the 16S rRNA gene amplicon sequencing  ($m{=}2,492$). }
\label{fig:real:2}
\end{figure}

\section{Application to omics-wide multiple testing}

To demonstrate the use of the proposed method for real-world applications, we applied CAMT to several omics datasets from transcriptomics, proteomics, epigenomics and metagenomics studies with the aim to identify omics features associated with the phenotype of interest. Since AdaPT is the most start-of-the-art method, we focused our comparison to it.  To make a fair comparison, we first run the analyses on the four omics datasets, which were also evaluated by AdaPT (Lei and Fithian, 2018), including Bottomly (Bottomly et al., 2011), Pasilla (Brooks et al., 2011), Airway (Himes et al., 2014) and Yeast Protein dataset (Dephoure et al., 2012). The Bottomly, Pasilla and Airway were three transcriptomics datasets from RNA-seq experiments with a feature size of 13,932, 11,836 and 33,469, respectively. The yeast protein dataset was a proteomics dataset from  with a feature size of 2,666. We used the same methods to calculate the  p-values for these datasets as described in Lei and Fithian (2018). The distributions of the p-values for these four datasets  all exhibited a spike in  the low p-value region, indicating that the signal was dense.  The logarithm of normalized count (averaged across all samples) was used as the univariate covariate for the three RNA-seq data (Bottomly, Pasilla and Airway). The logarithm of the total number of peptides across all samples was used as the univariate covariate for the yeast protein  data. Following  AdaPT, we used a spline basis with six equiquantile knots for $\pi_{0i}, f_{1,i}$ (CAMT and AdaPT) and for $\pi_{0i}$ (FDRreg, BL)  to account for potential complex nonlinear effects.   Since IHW and SABHA could only take univariate covariate, we used the univariate covariate directly.   We summarized the results in Figure \ref{fig:real:1}. We were able to reproduce the results in Lei and Fithian (2018). Indeed, AdaPT was  more powerful than SABHA, IHW, ST and BH on the four datasets.  { FDRreg and BL, which were not compared in Lei and Fithian (2018),  also performed well and made more rejections than other methods on the Yeast dataset and the Bottomly dataset, respectively.}  The performance of the proposed method, CAMT, was almost identical to AdaPT, which was consistent with the simulation results in the scenario of dense signal and informative covariate (Figure \ref{fig:sim:1}).

We next applied to two additional omics datasets from an epigenome-wide association study (EWAS) of congenital heart disease (CHD) (Wijnands et al., 2017) and a microbiome-wide association study (MWAS) of sex effect (McDonald et al., 2018).
\begin{itemize}
\item \textit{EWAS data}. The aim of the  EWAS of CHD was to identify the CpG loci in the human genome that were differentially methylated between healthy ($n=196$) and CHD ($n=84$) children. The methylation levels of 455,741 CpGs were measured by the the Illumina 450K methylation beadchip and was normalized properly before analysis.   The p-values were produced by running a linear regression to the methylation outcome for each CpG, adjusting for potential confounders such as age, sex and blood cell mixtures as described in Wijnands et al. (2017). Since widespread hyper-methylation (increased methylation in low-methylation regions) or hypo-methylation (decreased methylation in high-methylation regions) are common in many diseases (Robertson, 2005),  we use the mean methylation across samples as the univariate covariate.

\item \textit{MWAS data}. The aim of the MWAS of sex was to identify differentially abundant bacteria in the gut microbiome between males and females, where the abundances of  the gut bacteria were determined by sequencing a fingerprint gene in the bacteria 16S rRNA gene.  We used the publicly available data from the AmericanGut project (McDonald et al., 2018), where more than the gut microbiome from more than 10,000 subjects were sequenced. We focused our analysis on a relatively homogenous subset consisting of 481 males and 335 males (age between 13-70, normal BMI, from United States). We removed  OTUs (clustered sequencing units representing bacteria species) observed in less than 5 subjects, and a total of 2, 492 OTUs were tested using Wilcoxon rank sum test on the normalized abundances.   We use the percentage of zeros across samples as the univariate covariate since we expect a much lower power for OTUs with excessive zeros.
\end{itemize}

The results for these two datasets were summarized in Figure \ref{fig:real:2}.  For the EWAS data, the signal density was very sparse ($\hat{\pi}_0{= }0.99$, \textit{qvalue} package).  CAMT identified far more loci than the other methods at various FDR levels.   { The performance was consistent with the simulation results in the scenario of extremely sparse signal and informative covariate, where  CAMT was substantially more powerful than the competing methods (Figure \ref{fig:sim:3:1}). }At an FDR of 20\%, we identified 55 differentially methylated CpGs, compared to 19 for AdaPT.  These 55 CpG loci were mainly located in CpG islands  and the gene promotor regions, which were known for their important role in gene expression regulation (Robertson, 2005).  Interestingly, all but one CpG loci had low levels of methylation, indicating the methylation level was indeed informative to help identify differential CpGs.  We also did gene set enrichment analysis for the genes where the identified CpGs were located (https://david.ncifcrf.gov/). Based on the GO terms annotated to biological processes (BP\_DIRECT), three GO terms were found to be significant (unadjusted p-value $<$0.05)  including one term ``embryonic heart tube development", which was very relevant to the congenital heart disease under study (Wijnands et al., 2017).  As a sanity check, we randomized the covariate and re-analyzed the data using CAMT.   As expected, CAMT became similar to BH/ST and identified the same eight CpGs  at 20\% FDR level.

 For the MWAS data,  although the difference was not as striking as the EWAS data, CAMT was still overall more powerful than other competing methods except FDRreg. { However, given the fact that FDRreg was not robust under certain scenarios, the interpretation of the increased power should be cautious. } The relationship between the fitted $\pi_{0i}$  and the covariate (number of nonzeros) was very interesting:   $\hat{\pi}_{0i}$ first decreased, reached a minimum at around 70 and then increased (Figure \ref{fig:real:3}).  When the OTU was rare (e.g., a small number of nonzeros, only a few subjects had it), it was either very individualized or we had limited power to reject it, leading to a large $\pi_{0i}$.  In the other extreme where the OTU was very prevalent (e.g., a large number of nonzeros, most of the subjects had it), it was probability not sex-specific either.  Therefore, taking into account the sparsity level could increase the power of MWAS.   It is also informative  to compare CAMT to the traditional filtering-based procedure for MWAS.  In practice,  we usually apply a prevalence-based filter before performing multiple testing correction, based on the idea that rare OTUs are less likely to be significant and including them will increase the multiple testing burden.  A subjective filtering criterion has to be determined beforehand.  For this MWAS dataset,  if we removed OTUs present in less than 10\% of the subjects, ST and BH recovered 116 and 85  significant OTUs at an FDR of 10\%, compared to 69 and 65 on the original dataset, indicating that filtering did improve the statistical power of traditional FDR control procedures. However, if we removed OTUs present in less than 20\% of the subjects, the numbers of significant OTUs by  ST and BH reduced to 71 and 50 respectively. Therefore,  filtering could potentially leave out biologically important OTUs. In contrast,  CAMT did not require an explicit filtering criterion, and was much more powerful (141 significant OTUs at 10\% FDR) than the filtering-based method.

 \begin{figure}
\centering
\includegraphics[width=0.9\textwidth]{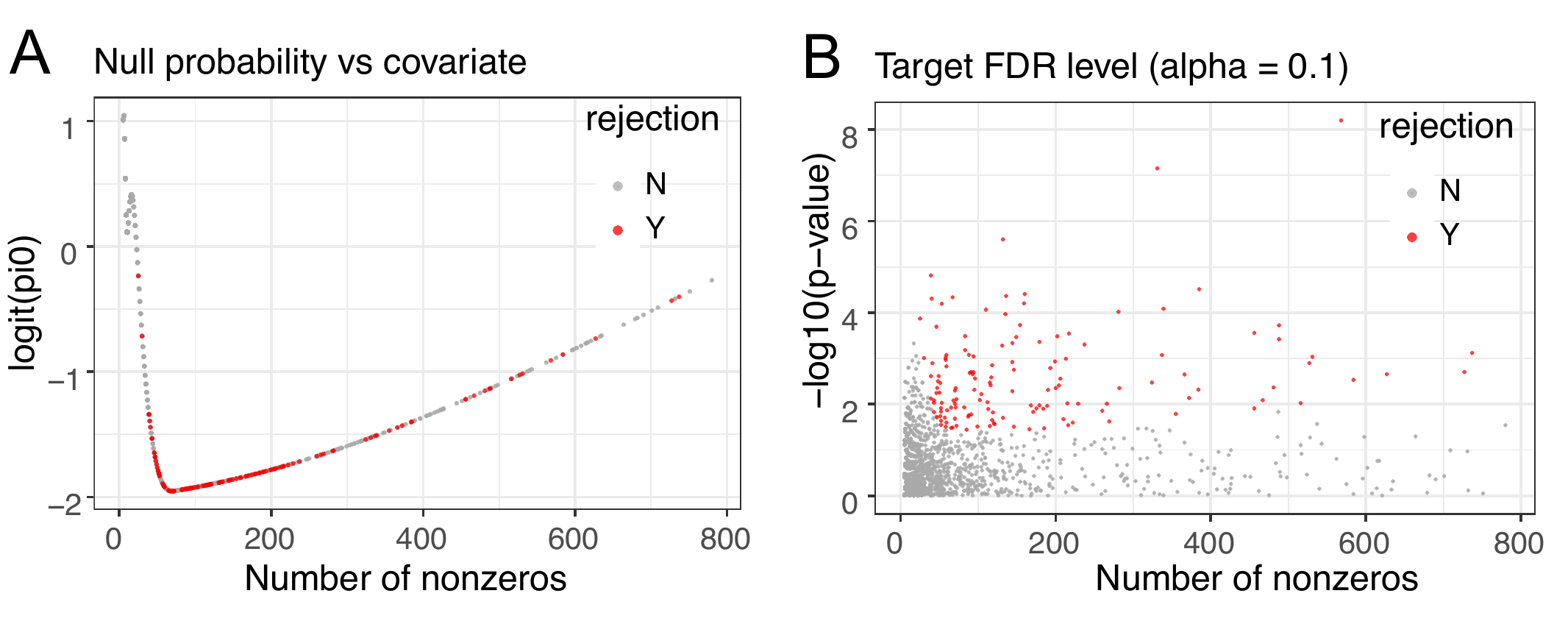}
\caption[Performance on MWAS data.]{Performance on the MWAS dataset. (A) The fitted $\pi_{0i}$  (logit scale) vs. the covariate (number of nonzeros). (B)  p-value (log scale) vs.  the covariate (number of nonzeros).  Rejected hypotheses at FDR 10\% were in red. }
\label{fig:real:3}
\end{figure}


\section{Discussions}\label{sec:dis}
There are generally two strategies for estimating the number of false rejections $\sum^{m}_{i=1}(1-H_i)\mathbf{1}\{h_i(p_i)\geq w_i(t)\}$
given the form of the rejection rule $h_i(p_i)\geq w_i(t)$. The first approach (called BH-type estimator) is to replace the number of false rejections by its expectation assuming that $p_i$
follows the uniform distribution on $[0,1]$ under the null, which leads to the quantity $\sum^{m}_{i=1}\pi_{0i}c(t,\pi_{0i},k_i)$ for $c(\cdot)$
defined in Section \ref{sec:asy}. The second approach (called BC-type estimator) estimates the false rejection conservatively by
$\xi+\sum^{m}_{i=1}\mathbf{1}\{h_i(1-p_i)\geq w_i(t)\}$ for a nonnegative constant $\xi$ under the assumption that the null distribution of p-values is symmetric about 0.5.
Both procedures enjoy optimality in some asymptotic sense, see, e.g., Arias-Castro and Chen (2017). The advantage of the BC-type procedure lies on that its estimation of the number of false rejections is asymptotically conservative when the rejection rule converges to a non-random limit (which holds even under a misspcified model, see e.g., White, 1982) and $f_0$ is mirror conservative (see equation (3) of Lei and Fithian, 2018).
This fact allows us to estimate the rejection rule by maximizing a potentially misspecified likelihood as the resulting rejection rule has a non-random limit under suitable conditions. This is not necessarily the case for the BH-type estimator without imposing additional constraint when estimating $\pi_{0i}$ and $k_i$. Specific restriction on the estimators of $\pi_{0i}$ is required for the BH-type estimator to achieve FDR control, see, e.g., equation (3) of Li and Barber (2018).

On the other hand, as the BC-type estimator uses a counting approach to estimate the number of false rejections, it suffers from
the discretization issue (i.e., the BC-type estimator is a step function of $t$ while the BH-type estimator is continuous), which may result in a large variance for the FDR estimate. This is especially the case when the FDR level is small.
For small FDR level, the number of rejections is usually small, and thus both the denominator and numerator of the FDR estimate become small and more variable.
Another issue with the BC-type estimator is the selection of $\xi$. We follow the idea of $\text{knockoff}+$ in Barber and Cand\`{e}s (2015)
by setting $\xi=1$. This choice could make the procedure rather conservative when the signal is very sparse, and the target FDR level is small.
A choice of smaller $\xi$ (e.g. $\xi=0$) often leads to inflated FDR in our unreported simulation studies. To alleviate this issue, one may
consider a mixed strategy by using
\begin{align*}
\max\left\{\sum^{m}_{i=1}\pi_{0i}c(t,\pi_{0i},k_i),\sum^{m}_{i=1}\mathbf{1}\{h_i(1-p_i)\geq w_i(t)\}\right\}
\end{align*}
as a conservative estimate for the number of false rejections when $t$ is relatively small. Our numerical results in Figure \ref{fig:sim:mix} in the supplementary material show that the resulting method can successfully
reduce the power loss in the case of sparse signals (or small FDR levels)  and less informative covariates while maintaining the good power performance in other cases. A serious investigation of this mixed procedure and the BH-type estimator is left for future research.

{ Since our method is not robust to a decreasing $f_0$,  some diagnostics are needed before running CAMT.  To detect a decreasing $f_0$,  the genomic inflation factor (GIF) can be employed (Devin and Roeder, 1999). GIF  is defined as the ratio of the median of the observed test statistic to the expected median based on the theoretical null distribution.  GIF has been widely used in genome-wide association studies to assess the deviation of the  empirical distribution of the null p-values from the theoretical uniform distribution.  To accommodate potential dense signals for some genomics studies, we recommend to confine the GIF calculation to p-values between 0.5 and 1.   If the GIF is larger, using CAMT  may result in excess false positives. In such case, the user should not trust the results and may consider recalculating the p-values by adjusting potential confounding factors,  either known or estimated based on some latent variable approach such as surrogate variable analysis (Leek and Storey, 2007), or using the simple genomic control approach based on p-values (Devin and Roeder, 1999). }

\newpage
\vskip 1em \centerline{\Large Supplement to ``Covariate Adaptive False Discovery Rate Control}
\centerline{\Large with Applications to Omics-Wide Multiple Testing"} 

\vskip 1em
\centerline{Xianyang Zhang and Jun Chen}
\setcounter{subsection}{0}
\renewcommand{\thesubsection}{A.\arabic{subsection}}
\setcounter{equation}{0}
\renewcommand{\theequation}{A.\arabic{equation}}

\doublespacing

\setcounter{table}{0}
\renewcommand{\thetable}{A\arabic{table}}%
\setcounter{figure}{0}
\renewcommand{\thefigure}{A\arabic{figure}}%
\setcounter{equation}{0}
\renewcommand{\theequation}{A\arabic{equation}}%
\setcounter{section}{0}
\renewcommand{\thesection}{A\arabic{section}}%

\section{Technical details}\label{sec:appendix}
\begin{proof}[Proof of Theorem \ref{thm-add}]
We first note that
\begin{align}
\text{FDR}=&E\left[\frac{\sum_{i=1}^m(1-H_i)\mathbf{1}\left\{h_i(p_i)\geq
w_i(t^*)\right\}}{1\vee\sum_{i=1
}^m\mathbf{1}\left\{h_i(p_i)\geq w_i(t^*)\right\}}\right] \nonumber
\\=&E\Bigg[\frac{1+\sum_{i=1}^m(1-H_i)\mathbf{1}\left\{h_i(1-p_i)\geq w_i(t^*)\right\}}{1\vee\sum_{i=1}^m\mathbf{1}\left\{h_i(p_i)\geq
w_i(t^*)\right\}} \nonumber
\\&\frac{\sum_{i=1}^m(1-H_i)\mathbf{1}\left\{h_i(p_i)\geq
w_i(t^*)\right\}}{1+\sum_{i=1}^m(1-H_i)\mathbf{1}\left\{h_i(1-p_i)\geq
w_i(t^*)\right\}}\Bigg] \nonumber
\\ \leq &\alpha E\left[\frac{\sum_{i=1}^m(1-H_i)\mathbf{1}\left\{h_i(p_i)\geq w_i(t^*)\right\}}{1+\sum_{s=1}^m(1-H_i)\mathbf{1}\left\{h_i(1-p_i)\geq
w_i(t^*)\right\}}\right] \nonumber
\\ =& \alpha E\left[\frac{\sum_{i=1}^m(1-H_i)\mathbf{1}\left\{\psi_i(p_i)\leq t^*\right\}}{1+\sum_{i=1}^m(1-H_i)\mathbf{1}\left\{ \psi_i(1-p_i)\leq
t^*\right\}}\right], \label{eq-thm3}
\end{align}
where
$$\psi_i(p_i)=\frac{\pi_{i}}{\pi_{i}+(1-\pi_{i})h_i(p_i)}.$$
As $h_i$ is strictly decreasing, $\psi_i(p)$ is a strictly
increasing function of $p.$ Define
$$b_i=\begin{cases}
\psi_i(p_i) & p_i<0.5, \\
\psi_i(1-p_i) & p_i\geq 0.5.
\end{cases}$$
Define the order statistics $b_{(1)}\leq b_{(2)}\leq \cdots \leq
b_{(m_0)}$ for $\{b_i: H_i=0\}$, where $m_0$ is the number of hypotheses under the null.
As $t^*\leq t_{\text{up}}$, we can find an integer
$J\leq m_0$ such that
$$b_{(1)}\leq b_{(2)}\leq \cdots \leq b_{(J)}\leq t^* < b_{(J+1)}\leq \cdots \leq b_{(m_0)}.$$
Then for any $J$,
\begin{align*}
\frac{\sum_{i=1}^m(1-H_i)\mathbf{1}\left\{\psi_i(p_i)\leq
t^*\right\}}{1+\sum_{i=1}^m(1-H_i)\mathbf{1}\left\{
\psi_i(1-p_i)\leq
t^*\right\}}=&\frac{(1-B_1)+\cdots+(1-B_J)}{1+B_1+B_2+\cdots+B_J}
=\frac{J+1}{1+B_1+B_2+\cdots+B_J}-1,
\end{align*}
where $B_i=\mathbf{1}\left\{p_{(i)}\geq 0.5\right\}.$ Under Condition (\ref{eq-con-f0}) of the main paper, we have
$\min_{i:H_i=0}P(p_i\geq 0.5)\geq 0.5.$ Using Lemma 1 of
Barber and Cand\`{e}s (2016), we have
$$E\left[\frac{J+1}{1+B_1+B_2+\cdots+B_J}-1\right]\leq 1.$$ By
(\ref{eq-thm3}), we have $\text{FDR} \leq \alpha$.
\end{proof}

\begin{proof}[Proof of Lemma \ref{lem-30}]
Let
$r_i(\theta,\beta)=\log\{\pi_{\theta}(x_i)+(1-
\pi_{\theta}(x_i))(1-k_\beta(x_i))p_i^{-k_\beta(x_i)}\}$ and $R_m(\theta,\beta)=\frac{1}{m}\sum^{m}_{i=1}r_i(\theta,\beta)$
for $\theta\in\Theta$ and $\beta\in\mathcal{B}$.
Under Assumptions \ref{ass-ad1} and the boundedness of $x_i$, there exist $c_1$ and $c_2$ such that
$0<c_1\leq \pi_{\theta}(x_i) \leq 1-c_1<1$ and $0<c_2\leq k_{\beta}(x_i) \leq 1-c_2<1$ for all $i$ and $\theta\in\Theta, \beta\in\mathcal{B}$. Thus we have
\begin{align*}
|r_i(\theta,\beta)|\leq& |\log\{\pi_{\theta}(x_i)+(1-
\pi_{\theta}(x_i))(1-c_2)p_i^{-(1-c_2)}\}|
\\ \leq&|\log\{2\pi_\theta(x_i)\}| +|\log\{2(1-\pi_\theta(x_i))(1-c_2)\}|+ (1-c_2)|\log(p_i)|
\\ \leq& c_3 + (1-c_2)|\log(p_i)|,
\end{align*}
for some constant $c_3>0$. Under Assumption \ref{ass-ad3}, by Corollary 1 of P\"{o}tscher and Prucha (1989), we
have
$$R_m(\theta,\beta)-E[R_m(\theta,\beta)]\rightarrow^{a.s.} 0$$
uniformly over $\theta\in\Theta$ and $\beta\in\mathcal{B}$. Together with Assumption \ref{ass-ad2}, we obtain
\begin{align*}
\sup_{\theta\in\Theta,\beta\in\mathcal{B}}|R_m(\theta,\beta)-R(\theta,\beta)|\rightarrow^{a.s.}0.
\end{align*}
Lemma 2.2 of White (1982) states
that if $(\hat{\theta},\hat{\beta})$ minimizes $-R_m$ and $(\theta^*,\beta^*)$ uniquely
minimizes $-R$, then
$$(\hat{\theta},\hat{\beta})\rightarrow^{a.s.} (\theta^*,\beta^*).$$
Finally, notice that
\begin{align*}
\max_{1\leq i\leq
m}|1/(1+e^{-\tilde{x}_i'\theta^*})-1/(1+e^{-\tilde{x}_i'\hat{\theta}})|=&\max_{1\leq i\leq
m}\frac{|e^{-\tilde{x}_i'\theta^*}-e^{-\tilde{x}_i'\hat{\theta}}|}{(1+e^{-\tilde{x}_i'\hat{\theta}})(1+e^{-\tilde{x}_i'\theta^*})}
\\ \leq& c_4 |\hat{\theta}-\theta^*|=o_{a.s.}(1),
\end{align*}
for $\tilde{x}_i=(1,x_i)'$, where the inequality follows from the
mean value theorem for $e^{-x}$ and the boundedness of $x_i$.
It implies that $\max_{1\leq i\leq
m}|\hat{\pi}_i-\pi_i^*|\rightarrow^{a.s.} 0$. The other result
$\max_{1\leq i\leq
m}|\hat{k}_i-\pi_{\beta^*}(x_i)|\rightarrow^{a.s.} 0$ follows from a similar argument.
\end{proof}

\begin{lemma}\label{lem-32}
Under Assumptions \ref{ass-ad3} and \ref{ass-32}, we have
\begin{align}
&\frac{1}{m}\sum_{i=1}^{m}\mathbf{1}\{p_i\leq
c(t,\pi_{i}^*,k^*_i)\}\rightarrow^{a.s.} G_0(t),\label{eq-sec2}\\
&\frac{1}{m}\sum_{i=1}^{m}\mathbf{1}\{1-p_i<
c(t,\pi_{i}^*,k^*_i)\}\rightarrow^{a.s.} G_1(t),\label{eq-sec0}\\
&\frac{1}{m}\sum_{H_i=0}\mathbf{1}\{p_i\leq
c(t,\pi_{i}^*,k^*_i)\}\rightarrow^{a.s.} \tilde{G}_1(t), \\
&\frac{1}{m}\sum_{i=1}^{m}\left\{\mathbf{1}\{p_i\leq t_i\}-P(p_i\leq
t_i)\right\}\rightarrow^{a.s.}0,\label{eq-sec3}
\end{align}
for any $t\geq t_0$ with $t_0>0$ and $t_i\in [0,1]$.
\end{lemma}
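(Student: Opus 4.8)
The plan is to read each of the four averages as the sample mean of a uniformly bounded, measurable transformation of the weakly dependent array $\{(x_i,p_i)\}$, and then to combine a strong law of large numbers for mixing sequences with the convergence of the averaged expectations supplied by Assumption \ref{ass-32}. The key simplifying observation is that, for fixed $t\ge t_0$, both $\pi_i^*$ and $k_i^*$ are deterministic measurable functions of $x_i$, so the threshold $c(t,\pi_i^*,k_i^*)$ is itself a measurable function of $x_i$; hence each summand in (\ref{eq-sec2})--(\ref{eq-sec3}) is the indicator of an event measurable with respect to $\sigma(x_i,p_i)$ (or, for the null-restricted statement, with respect to $\sigma(x_i,p_i,H_i)$). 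Since a measurable function of $(x_i,p_i)$ has $\alpha$- and $\phi$-mixing coefficients no larger than those of $(x_i,p_i)$ itself, each such array inherits the mixing rates of Assumption \ref{ass-ad3}.

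First I would dispatch (\ref{eq-sec2}). Writing $Y_i(t)=\mathbf{1}\{p_i\le c(t,\pi_i^*,k_i^*)\}$, the array $\{Y_i(t)\}$ is $[0,1]$-valued and mixing, so the moment requirements of the strong law already invoked in the proof of Lemma \ref{lem-30} (Corollary 1 of P\"{o}tscher and Prucha, 1989) are trivially met, and only the summability of the mixing coefficients, guaranteed by the rate conditions in Assumption \ref{ass-ad3}, is doing real work; in particular the tail condition $\sup_i E|\log p_i|^{r+\delta}<\infty$ is not needed here. This gives $\frac{1}{m}\sum_{i=1}^m\{Y_i(t)-E[Y_i(t)]\}\rightarrow^{a.s.}0$, and since $E[Y_i(t)]=P(p_i\le c(t,\pi_i^*,k_i^*))$, the first display of Assumption \ref{ass-32} identifies the limit as $G_0(t)$. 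The proof of (\ref{eq-sec0}) is verbatim with the bounded summand $\mathbf{1}\{1-p_i<c(t,\pi_i^*,k_i^*)\}$ and limit $G_1(t)$, while (\ref{eq-sec3}), the last display, is even simpler: for the prescribed deterministic $t_i\in[0,1]$ the summands $\mathbf{1}\{p_i\le t_i\}$ are bounded functions of $p_i$ alone, and the same SLLN yields $\frac{1}{m}\sum_{i=1}^m\{\mathbf{1}\{p_i\le t_i\}-P(p_i\le t_i)\}\rightarrow^{a.s.}0$ without any recourse to Assumption \ref{ass-32}. This leaves only the third convergence, over the null indices, which I treat next.

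The one step that requires genuine care is that third convergence, whose summand $(1-H_i)\mathbf{1}\{p_i\le c(t,\pi_i^*,k_i^*)\}$ involves the latent truth $H_i$, which is not measurable with respect to $\sigma(x_i,p_i)$. Here I would note that under the generating mechanism (\ref{m0})--(\ref{m1}) the label $H_i$ is produced locally from $x_i$ and $p_i$ from $(x_i,H_i)$, so the augmented array $\{(x_i,p_i,H_i)\}$ is mixing at the same rates as $\{(x_i,p_i)\}$; the bounded summand then inherits these rates and the same SLLN applies, with the averaged expectation $\frac{1}{m}\sum_{i=1}^m P(H_i=0,\,p_i\le c(t,\pi_i^*,k_i^*))$ matched to $\tilde{G}_1(t)$ through the third display of Assumption \ref{ass-32}. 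This extension of the mixing property to the truth labels is the main technical obstacle; everywhere else the boundedness of the indicators renders the moment hypotheses automatic, so the substantive content is simply the mixing SLLN combined with Assumption \ref{ass-32}.
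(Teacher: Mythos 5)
Your argument for (\ref{eq-sec2}), (\ref{eq-sec0}) and (\ref{eq-sec3}) is essentially the paper's proof: the paper likewise rewrites $\mathbf{1}\{p_i\leq c(t,\pi_i^*,k_i^*)\}$ as $\mathbf{1}\{\psi_i(p_i,x_i)\leq t\}$ for a measurable $\psi_i$, observes that bounded measurable functions of $(x_i,p_i)$ inherit the mixing rates of Assumption \ref{ass-ad3} (a function generates a sub-$\sigma$-algebra, so the coefficients can only shrink), applies the strong law of large numbers for mixing processes to the centered indicators, and then invokes Assumption \ref{ass-32} to identify the limits. Your remark that the moment condition on $\log p_i$ plays no role here is also correct.

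The one place you diverge is the third convergence, and it is the weak point of your write-up. You treat $H_i$ as a latent random variable and assert that the augmented array $\{(x_i,p_i,H_i)\}$ is mixing ``at the same rates'' because $H_i$ is generated locally. That inference does not follow as stated: the mixing coefficients in Assumption \ref{ass-ad3} are defined with respect to $\mathcal{F}_a^b=\sigma((x_i,p_i),a\leq i\leq b)$, and enlarging these $\sigma$-algebras to include $H_i$ can only \emph{increase} the coefficients, so monotonicity runs in the wrong direction; preserving the rates would require an explicit argument using, e.g., that the auxiliary randomizations generating $H_i$ are i.i.d.\ across $i$ and independent of everything else. The paper avoids the issue entirely: Assumption \ref{ass-32} sums over the fixed index set $\{i:H_i=0\}$ and states that the probabilities are taken with respect to the joint law of $(p_i,x_i)$ only, i.e., the $H_i$ are treated as deterministic labels. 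Under that convention the summand $(1-H_i)\mathbf{1}\{p_i\leq c(t,\pi_i^*,k_i^*)\}$ is just a deterministic $\{0,1\}$-weight times a bounded measurable function of $(x_i,p_i)$, so the same SLLN applies verbatim and no augmentation of the mixing array is needed. Adopting the paper's convention closes this gap; otherwise you would need to supply the missing argument that adjoining $H_i$ preserves the mixing rates.
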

\begin{proof}[Proof of Lemma \ref{lem-32}]
Notice that
\begin{align*}
&\mathbf{1}\{p_i\leq c(t,\pi_i^*,k_i^*)\}=\mathbf{1}\left\{\psi_i(p_i,x_i)\leq t\right\},\\
&\mathbf{1}\{1-p_i<c(t,\pi_i^*,k_i^*)\}=\mathbf{1}\left\{\psi_i(1-p_i,x_i)< t\right\},
\end{align*}
where $\psi_i(p_i,x_i)=\pi_{\theta^*}(x_i)/\{\pi_{\theta^*}(x_i)+(1-\pi_{\theta^*}(x_i))(1-k_{\beta^*}(x_i))p_i^{-k_{\beta^*}(x_i)}\}$.
Under Assumption \ref{ass-ad3}, $\mathbf{1}\{\psi_i(p_i,x_i)\leq t\}$ and $\mathbf{1}\{\psi_i(1-p_i,x_i)<t\}$
are both $\alpha$-mixing (or $\phi$-mixing) processes. Thus by the strong law of large numbers for mixing processes, we get
\begin{align*}
&\frac{1}{m}\sum_{i=1}^{m}\left[\mathbf{1}\{p_i\leq
c(t,\pi_{i}^*,k^*_i)\}-P(p_i\leq
c(t,\pi_{i}^*,k^*_i))\right]\rightarrow^{a.s.} 0,\\
&\frac{1}{m}\sum_{i=1}^{m}\left[\mathbf{1}\{1-p_i<
c(t,\pi_{i}^*,k^*_i)\}-P(1-p_i<
c(t,\pi_{i}^*,k^*_i))\right]\rightarrow^{a.s.} 0,\\
&\frac{1}{m}\sum_{H_i=0}\left[\mathbf{1}\{p_i\leq
c(t,\pi_{i}^*,k^*_i)\}-P(p_i\leq
c(t,\pi_{i}^*,k^*_i))\right]\rightarrow^{a.s.} 0,
\end{align*}
and (\ref{eq-sec3}). The conclusion then follows from Assumption \ref{ass-32}.
\end{proof}

\begin{lemma}\label{lemma-10}
Suppose Assumptions \ref{ass-ad3}, \ref{ass-31} and \ref{ass-32} hold. Then for small
enough $\epsilon>0$, we have
\begin{align}
&\sup_{||K-K^*||_\infty<\epsilon}\sup_{||\Pi-\Pi^*||_{\infty}<\epsilon}\sup_{t\geq
t_0}\left|\frac{1}{m}\sum_{i=1}^{m}\mathbf{1}\{p_i\leq
c(t,\pi_i,k_i)\}-G_0(t)\right| \leq C_1\epsilon+o_{a.s.}(1),\label{eq-11}\\
&\sup_{||K-K^*||_\infty<\epsilon}\sup_{||\Pi-\Pi^*||_{\infty}<\epsilon}\sup_{t\geq
t_0}\left|\frac{1}{m}\sum_{i=1}^{m}\mathbf{1}\{1-p_i<
c(t,\pi_i,k_i)\}-G_1(t)\right| \leq
C_2\epsilon+o_{a.s.}(1), \label{eq-12}\\
&\sup_{||K-K^*||_\infty<\epsilon}\sup_{||\Pi-\Pi^*||_{\infty}<\epsilon}\sup_{t\geq
t_0}\left|\frac{1}{m}\sum_{i=1}^{m}(1-H_i)\mathbf{1}\{p_i\leq
c(t,\pi_i,k_i)\}-\tilde{G}_1(t)\right| \leq C_3\epsilon+o_{a.s.}(1),
\label{eq-13}
\end{align}
where $C_1,C_2,C_3>0$ are independent of $\epsilon,$ $||K-K^*||_{\infty}=\max_{1\leq i\leq m}|k_{i}-k_i^*|$,
$||\Pi-\Pi^*||_{\infty}=\max_{1\leq i\leq m}|\pi_{i}-\pi_i^*|$ and
$0<t_0<1.$
\end{lemma}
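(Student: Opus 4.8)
The plan is to establish all three displays by a single scheme: (i) a Lipschitz bound converting an $\epsilon$-perturbation of $(\pi_i,k_i)$ into an $O(\epsilon)$ perturbation of the threshold $c$, which disposes of the two parameter suprema by a uniform sandwich; (ii) Assumption \ref{ass-31} to turn that threshold perturbation into the linear term $C_j\epsilon$; and (iii) monotonicity of $t\mapsto c(t,\pi,k)$ together with a finite grid to upgrade the pointwise-in-$t$ strong laws of Lemma \ref{lem-32} to uniformity over $t\geq t_0$. I would carry out the argument for (\ref{eq-11}) in full; (\ref{eq-12}) and (\ref{eq-13}) run identically after substituting $\mathbf 1\{1-p_i<c\}$ or $(1-H_i)\mathbf 1\{p_i\leq c\}$ and the corresponding limit $G_1$ or $\tilde G_1$.

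First I would record that, as in the proof of Lemma \ref{lem-30}, compactness of $\Theta,\mathcal B$ and boundedness of the $x_i$ confine $\pi_i^*,k_i^*$ to a fixed compact subinterval of $(0,1)$ uniformly in $i$, and for $\epsilon$ small the perturbed $(\pi_i,k_i)$ remain in a slightly enlarged such interval. On this region I would show $c(t,\pi,k)$ is Lipschitz in $(\pi,k)$ with a constant $L$ independent of $t\geq t_0$ and of $i$: where the unclipped expression exceeds $1$ the map is constantly $1$ and costs nothing, while on its complement $t/(1-t)$ is bounded by a constant determined by the compact parameter region, keeping $t$ away from $1$ and the map uniformly smooth. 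Hence, simultaneously over the whole $\epsilon$-ball of parameters and over $t\geq t_0$,
\[
c(t,\pi_i^*,k_i^*)-L\epsilon\;\leq\;c(t,\pi_i,k_i)\;\leq\;c(t,\pi_i^*,k_i^*)+L\epsilon,
\]
so that $\mathbf 1\{p_i\leq c(t,\pi_i,k_i)\}$ is sandwiched between $\mathbf 1\{p_i\leq c(t,\pi_i^*,k_i^*)\mp L\epsilon\}$.

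For a fixed $t\geq t_0$, writing $c_i^*=c(t,\pi_i^*,k_i^*)$, I would split
\[
\frac1m\sum_{i=1}^m\mathbf 1\{p_i\leq c_i^*+L\epsilon\}=\frac1m\sum_{i=1}^m\mathbf 1\{c_i^*<p_i\leq c_i^*+L\epsilon\}+\frac1m\sum_{i=1}^m\mathbf 1\{p_i\leq c_i^*\},
\]
the second term converging almost surely to $G_0(t)$ by (\ref{eq-sec2}). For the first term, its conditional mean given the $x_i$ is $\frac1m\sum_i\{P(p_i\leq c_i^*+L\epsilon\mid x_i)-P(p_i\leq c_i^*\mid x_i)\}\leq c_0L\epsilon$ by Assumption \ref{ass-31} (valid since $t\geq t_0$ forces $c_i^*\geq\epsilon$ for small $\epsilon$), and the centered sum is $o_{a.s.}(1)$ by the mixing strong law under Assumption \ref{ass-ad3}; the left half of the sandwich gives the matching lower bound. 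To pass to $\sup_{t\geq t_0}$, I would use that each empirical average is monotone in $t$ and that $G_0$ is uniformly continuous on $[t_0,1]$ (Assumption \ref{ass-32}): fixing $\eta>0$ and a grid $t_0=s_0<\cdots<s_N=1$ with $G_0(s_{j+1})-G_0(s_j)<\eta$, monotonicity reduces the supremum over each $[s_j,s_{j+1}]$ to the two endpoint estimates, the finitely many $o_{a.s.}(1)$ terms merge, and letting $\eta\downarrow0$ yields (\ref{eq-11}) with $C_1=c_0L$.

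The step I expect to be the main obstacle is the uniform control near the clipping boundary $c=1$, which matters most for the upper-tail display (\ref{eq-12}). There the shifted arguments are of the form $1-c_i^*$, and for large $t$ these can approach $0$, where the conditional density of $p_i$ need not be bounded; the naive Assumption \ref{ass-31} bound then degrades from $O(\epsilon)$ to $O(\epsilon^{1-k})$. I would handle this by treating the regime where $c_i^*$ is close to $1$ separately, using that there both $\frac1m\sum\mathbf 1\{1-p_i<c(t,\pi_i,k_i)\}$ and its limit $G_1(t)$ saturate at the total mass so their difference stays $O(\epsilon)$, and reserving the Assumption \ref{ass-31} estimate for the complementary range where $1-c_i^*$ stays in $[\epsilon,1]$. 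The monotone-grid upgrade and the mixing strong law are then routine given the hypotheses already in force.
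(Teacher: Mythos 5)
Your argument for (\ref{eq-11}) is essentially the paper's own proof: the paper sandwiches $c(t,\pi_i,k_i)$ between explicit perturbations $c_{\pm}(t,\pi_i^*,k_i^*,\epsilon)$, shows by calculus that these differ from $c(t,\pi_i^*,k_i^*)$ by at most $c_1\epsilon$ (using that $\pi_i^*,k_i^*$ are bounded away from $0$ and $1$), converts this into a $c_0c_1\epsilon$ probability bound via Assumption \ref{ass-31}, controls the centered sums with the mixing strong law from Lemma \ref{lem-32}, and obtains uniformity in $t$ through the Glivenko--Cantelli grid $q_{v,n}=G_0^{-1}(v/n)$ exploiting monotonicity --- the same three ingredients, in the same order, as in your plan. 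The paper dispatches (\ref{eq-12}) and (\ref{eq-13}) with ``the proofs are similar,'' so your observation that (\ref{eq-12}) involves thresholds $1-c_i^*$ that can approach $0$ (where Assumption \ref{ass-31} no longer applies) flags a genuine gap in the paper's exposition; be aware, though, that your saturation argument in that regime delivers only $O(\epsilon^{1-\bar{k}})$ rather than the stated $C_2\epsilon$ under a beta-type alternative, which misses the advertised rate but still tends to $0$ as $\epsilon\downarrow 0$ and therefore suffices for the lemma's only downstream use in Lemma \ref{lemma-11}.
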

\begin{proof}[Proof of Lemma \ref{lemma-10}]
We only prove (\ref{eq-11}) as the proofs for the other results are
similar. For any $n$ with $1/n\leq \epsilon$, let
$q_{v,n}=G_0^{-1}(v/n)$ for $v=\lfloor n t_0 \rfloor,\dots,n$, where
$G_0^{-1}(x)=\inf\{u:G_0(u)\geq x\}$. Define
$$\hat{G}(t,\Pi,K)=\frac{1}{m}\sum_{i=1}^{m}\mathbf{1}\{p_i\leq
c(t,\pi_i,k_i)\}.$$ Note that $\hat{G}(t,\Pi,k)$ and $G_0(t)$ are both
non-decreasing functions of $t$. Denote by $\hat{G}(t-,\Pi,k)$ and
$G_0(t-)$ the left limits of $\hat{G}$ and $G_0$ at point $t$
respectively. Following the proof of Glivenko-Cantelli Lemma (see
Theorem 7.5.2 of Resnick, 2005), we have
\begin{align}
&\sup_{t\geq t_0}\left|\hat{G}(t,\Pi,K)-G_0(t)\right|  \nonumber
\\ \leq& \bigvee^{n}_{v=\lfloor n t_0 \rfloor}|\hat{G}(q_{v,n},\Pi,K)-G_0(q_{v,n})|\vee
|\hat{G}(q_{v,n}-,\Pi,K)-G_0(q_{v,n}-)|+1/n. \label{eq-res}
\end{align}
We note that for any $||K-K^*||_\infty<\epsilon$ and $||\Pi-\Pi^*||_{\infty}<\epsilon$,
\begin{align*}
& c(t,\pi_i,k_i)\leq c_+(t,\pi^*_i,k^*_i,\epsilon):=1\wedge \left\{\frac{t(1-k^*_i+\epsilon)(1-\pi_{i}^*+\epsilon)}{(1-t)(\pi_{i}^*-\epsilon)}\right\}^{1/(k^*_i+\epsilon)},\\
& c(t,\pi_i,k_i)\geq c_-(t,\pi^*_i,k^*_i,\epsilon):=1\wedge\left\{
\frac{t(1-k^*_i-\epsilon)(1-\pi_{i}^*-\epsilon)}{(1-t)(\pi_{i}^*+\epsilon)}\right\}^{1/(k^*_i-\epsilon)}.
\end{align*}
Also, note that $c_-(t,\pi^*_i,k^*_i,\epsilon)$ is bounded away from zero
for $t\geq \lfloor n t_0 \rfloor /n$ and large enough $n$. Define
\begin{align*}
&\hat{G}_+(t,\Pi^*,K^*,\epsilon)=\frac{1}{m}\sum_{i=1}^{m}\mathbf{1}\{p_i\leq
c_+(t,\pi_i^*,k^*_i,\epsilon)\},\\
&\hat{G}_-(t,\Pi^*,K^*,\epsilon)=\frac{1}{m}\sum_{i=1}^{m}\mathbf{1}\{p_i\leq
c_-(t,\pi_i^*,k^*_i,\epsilon)\}.
\end{align*}
We deduce that
\begin{align*}
&\sup_{||K-K^*||_\infty<\epsilon}\sup_{||\Pi-\Pi^*||_{\infty}<\epsilon}\sup_{t\geq
t_0}\left|\hat{G}(t,\Pi,K)-G_0(t)\right|
\\ \leq &\bigvee^{n}_{v=\lfloor n t_0 \rfloor}|\hat{G}_+(q_{v,n},\Pi^*,K^*,\epsilon)-G_0(q_{v,n})|\vee|\hat{G}_-(q_{v,n},\Pi^*,K^*,\epsilon)-G_0(q_{v,n})|
\\& \vee|\hat{G}_+(q_{v,n}-,\Pi^*,K^*,\epsilon)-G_0(q_{v,n}-)|\vee |\hat{G}_-(q_{v,n}-,\Pi^*,K^*,\epsilon)-G_0(q_{v,n}-)|+1/n.
\end{align*}
Next we analyze the term
$|\hat{G}_+(q_{v,n},\Pi^*,K^*,\epsilon)-G_0(q_{v,n})|$. As
$\pi_i^*$ and $k_i^*$ are both bounded away
from zero and one, some calculus shows
that
\begin{align*}
\max_{1\leq i\leq m}|c_+(q_{v,n},\pi_i^*,k^*_i,\epsilon)-
c(q_{v,n},\pi_i^*,k^*_i)|\leq c_1\epsilon.
\end{align*}
It thus implies that
\begin{equation}\label{eq-bd}
\begin{split}
&|P(p_i\leq c_+(q_{v,n},\pi_i^*,k^*_i,\epsilon))-P(p_i\leq c(q_{v,n},\pi_i^*,k^*_i))|
\\ \leq & E|P(p_i\leq c_+(q_{v,n},\pi_i^*,k^*_i,\epsilon)|x_i)-P(p_i\leq c(q_{v,n},\pi_i^*,k^*_i)|x_i)|
\\ \leq & E \max_{1\leq i\leq m}|c_+(q_{v,n},\pi_i^*,k^*_i,\epsilon)-
c(q_{v,n},\pi_i^*,k^*_i)| \leq c_1\epsilon.
\end{split}
\end{equation}
Then we have
\begin{align*}
&|\hat{G}_+(q_{v,n},\Pi^*,K^*,\epsilon)-G_0(q_{v,n})|
\\ \leq&
\left|\hat{G}_+(q_{v,n},\Pi^*,K^*,\epsilon)-\frac{1}{m}\sum_{i=1}^{m}P(p_i\leq
c_+(q_{v,n},\pi_i^*,k^*_i,\epsilon))\right|
\\&+\left|\frac{1}{m}\sum_{i=1}^{m}\{P(p_i\leq
c_+(q_{v,n},\pi_i^*,k^*_i,\epsilon))-P(p_i\leq
c(q_{v,n},\pi_i^*,k^*_i))\}\right|
\\&+\left|\frac{1}{m}\sum_{i=1}^{m}P(p_i\leq
c(q_{v,n},\pi_i^*,k^*_i))-G_0(q_{v,n})\right|
\\ \leq&
\left|\hat{G}_+(q_{v,n},\Pi^*,k^*_i,\epsilon)-\frac{1}{m}\sum_{i=1}^{m}P(p_i\leq
c_+(q_{v,n},\pi_i^*,k^*_i,\epsilon))\right|
\\&+c_0\max_{1\leq
i\leq m}|c_+(q_{v,n},\pi_i^*,k^*_i,\epsilon)-
c(q_{v,n},\pi_i^*,k^*_i)|+o(1)
\\ \leq& c_2\epsilon+o_{a.s.}(1),
\end{align*}
where the second inequality follows from Assumption \ref{ass-31},
(\ref{eq-sec2}) and (\ref{eq-sec3}), and the third inequality is
due to (\ref{eq-sec3}) and (\ref{eq-bd}). Similar arguments can be
used to deal with the other terms. Therefore,
$$\sup_{||K-K^*||_\infty<\epsilon}\sup_{||\Pi-\Pi^*||_{\infty}<\epsilon}\sup_{t\geq
t_0}\left|\hat{G}(t,\Pi,K)-G_0(t)\right|\leq
c_3\epsilon+1/n+o_{a.s}(1)\leq (c_3+1)\epsilon+o_{a.s.}(1).$$
\end{proof}

\begin{lemma}\label{lemma-11}
Under Assumptions \ref{ass-ad1}-\ref{ass-32}, we have
\begin{align}
&\sup_{t\geq t_0}\left|\frac{1}{m}\sum_{i=1}^{m}\mathbf{1}\{p_i\leq
c(t,\hat{\pi}_{i},\hat{k}_i)\}-G_0(t)\right|=o_{a.s.}(1),\label{eq2-hy}\\
&\sup_{t\geq t_0}\left|\frac{1}{m}\sum_{i=1}^{m}\mathbf{1}\{1-p_i<
c(t,\hat{\pi}_{i},\hat{k}_i)\}-G_1(t)\right|=o_{a.s.}(1),\label{eq3-hy}\\
&\sup_{t\geq
t_0}\left|\frac{1}{m}\sum_{i=1}^{m}(1-H_i)\mathbf{1}\{p_i\leq
c(t,\hat{\pi}_{i},\hat{k}_i)\}-\tilde{G}_1(t)\right|=o_{a.s.}(1).\label{eq11-hy}
\end{align}
\end{lemma}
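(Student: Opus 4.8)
The plan is to obtain all three displays from the uniform bounds in Lemma \ref{lemma-10} by feeding in the almost-sure consistency of the plug-in estimators established in Lemma \ref{lem-30}. Since the three claims are completely parallel, I would prove only (\ref{eq2-hy}) in detail, noting that (\ref{eq3-hy}) and (\ref{eq11-hy}) follow by the identical argument applied to (\ref{eq-12}) and (\ref{eq-13}) in place of (\ref{eq-11}).

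Fix an arbitrary $\epsilon>0$. By Lemma \ref{lem-30} we have $\|\hat{\Pi}-\Pi^*\|_\infty\to^{a.s.}0$ and $\|\hat{K}-K^*\|_\infty\to^{a.s.}0$, so there is an almost-sure event on which, for all $m$ large enough, $\|\hat{\Pi}-\Pi^*\|_\infty<\epsilon$ and $\|\hat{K}-K^*\|_\infty<\epsilon$. On this event the random pair $(\hat{\Pi},\hat{K})$ eventually lies in the $\epsilon$-neighborhood over which the supremum in (\ref{eq-11}) is taken, and hence for all large $m$
\[
\sup_{t\geq t_0}\left|\frac{1}{m}\sum_{i=1}^{m}\mathbf{1}\{p_i\leq c(t,\hat{\pi}_{i},\hat{k}_i)\}-G_0(t)\right| \leq \sup_{\|K-K^*\|_\infty<\epsilon}\sup_{\|\Pi-\Pi^*\|_\infty<\epsilon}\sup_{t\geq t_0}\left|\frac{1}{m}\sum_{i=1}^{m}\mathbf{1}\{p_i\leq c(t,\pi_i,k_i)\}-G_0(t)\right|.
\]
Applying the bound (\ref{eq-11}) to the right-hand side shows that the left-hand side is at most $C_1\epsilon+o_{a.s.}(1)$, so that $\limsup_m$ of the left-hand side is bounded by $C_1\epsilon$ almost surely.

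It remains to remove the dependence on $\epsilon$, which I would do by a standard countable-intersection argument: apply the previous paragraph to each $\epsilon=1/j$, $j\in\mathbb{N}$, obtaining for each $j$ an almost-sure event on which the $\limsup$ is at most $C_1/j$. Intersecting these countably many events yields a single almost-sure event on which the $\limsup_m$ of the quantity in (\ref{eq2-hy}) is bounded by $C_1/j$ for every $j$, and is therefore zero. This gives (\ref{eq2-hy}), and the same reasoning applied to (\ref{eq-12}) and (\ref{eq-13}) gives (\ref{eq3-hy}) and (\ref{eq11-hy}).

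Once Lemmas \ref{lem-30} and \ref{lemma-10} are available, the argument is essentially bookkeeping; the only point requiring care is the order of the quantifiers. The constant $C_1$ and the $o_{a.s.}(1)$ remainder in (\ref{eq-11}) must be independent of $\epsilon$ — which is precisely how Lemma \ref{lemma-10} is stated — so that sending $m\to\infty$ first for fixed $\epsilon$ and then $\epsilon\to0$ along the sequence $1/j$ is legitimate and produces a genuine almost-sure statement rather than one depending on $\epsilon$.
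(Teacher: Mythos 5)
Your proof is correct and follows essentially the same route as the paper's: both reduce the claim to Lemma \ref{lemma-10} by using the almost-sure uniform consistency of $(\hat{\Pi},\hat{K})$ from Lemma \ref{lem-30} to place the estimators in the $\epsilon$-neighborhood for large $m$. If anything, your version is slightly cleaner — you apply (\ref{eq-11}) directly and make the final passage $\epsilon=1/j\to 0$ explicit via a countable intersection, whereas the paper routes through the difference of indicator sums at $(\hat{\Pi},\hat{K})$ versus $(\Pi^*,K^*)$ and ends with the terser ``$P(\mathcal{A}_{m,\epsilon})\to 1$'' step; note also that only the constant $C_1$, not the $o_{a.s.}(1)$ remainder, needs to be independent of $\epsilon$ for your order of limits to go through.
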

\begin{proof}[Proof of Lemma \ref{lemma-11}]
To show (\ref{eq2-hy}), we define the event
$$\mathcal{A}_{m,\epsilon}=\{||\hat{\Pi}-\Pi^*||_{\infty}<\epsilon\}\cap
\{||\hat{K}-K^*||_\infty<\epsilon\}.$$ Conditional on
$\mathcal{A}_{m,\epsilon}$, by Lemma \ref{lemma-10} and Assumption
\ref{ass-31}, we have
\begin{align*}
&\sup_{t\geq
t_0}\left|\frac{1}{m}\sum_{i=1}^{m}\left(\mathbf{1}\{p_i\leq
c(t,\hat{\pi}_{i},\hat{k}_i)\}-\mathbf{1}\{p_i\leq
c(t,\pi_{i}^*,k^*_i)\}\right)\right|
\\ \leq & 2\sup_{||K-K^*||_\infty<\epsilon}\sup_{||\Pi-\Pi^*||_{\infty}<\epsilon}\sup_{t\geq t_0}\left|\frac{1}{m}\sum_{i=1}^{m}\left(\mathbf{1}\{p_i\leq  c(t,\pi_i,k_i)\}\right)-P(p_i\leq  c(t,\pi_i,k_i))\right|
\\&+\sup_{t\geq t_0}\left|\frac{1}{m}\sum_{i=1}^{m}\left(P(p_i\leq  c(t,\hat{\pi}_{i},\hat{k}))-P(p_i\leq  c(t,\pi_i^*,k^*))\right)\right|
\\ \leq & c_0\max_{1\leq i\leq
m}\sup_{t\geq t_0}\left| c(t,\hat{\pi}_{i},\hat{k}_i)-
c(t,\pi_{i}^*,k^*_i)\right|+c_4\epsilon+o_{a.s.}(1)
\\=&c_5\epsilon+o_{a.s.}(1).
\end{align*}
As $P(\mathcal{A}_{m,\epsilon})\rightarrow 1$, the conclusion
follows. The proofs for (\ref{eq3-hy}) and (\ref{eq11-hy}) are similar
and we omit the details.
\end{proof}

\begin{lemma}\label{lemma-12}
Under Assumptions \ref{ass-ad1}-\ref{ass-32}, we have
\begin{align}
&\sup_{t\geq t'}\left|\frac{\sum_{i=1}^{m}\mathbf{1}\{1-p_i<
c(t,\hat{\pi}_{i},\hat{k}_i)\}}{\sum_{i=1}^{m}\mathbf{1}\{p_i\leq
c(t,\hat{\pi_{i}},\hat{k}_i)\}}-\frac{G_1(t)}{G_0(t)}\right|=o_{a.s.}(1),\\
&\sup_{t\geq t'}\left|\frac{\sum_{i=1}^{m}(1-H_i)\mathbf{1}\{p_i\leq
c(t,\hat{\pi}_{i},\hat{k}_i)\}}{\sum_{i=1}^{m}\mathbf{1}\{p_i\leq
c(t,\hat{\pi}_{i},\hat{k}_i)\}}-\frac{\tilde{G}_1(t)}{G_0(t)}\right|=o_{a.s.}(1).
\end{align}
\end{lemma}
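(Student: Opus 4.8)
The plan is to derive the two ratio limits from the uniform convergence of the numerators and denominators already established in Lemma \ref{lemma-11}, the only genuinely new ingredient being a uniform lower bound on the denominator over the range $t\ge t'$. Write $D_m(t)=\frac1m\sum_{i=1}^m\mathbf 1\{p_i\le c(t,\hat\pi_i,\hat k_i)\}$, $N_m(t)=\frac1m\sum_{i=1}^m\mathbf 1\{1-p_i< c(t,\hat\pi_i,\hat k_i)\}$ and $M_m(t)=\frac1m\sum_{i=1}^m(1-H_i)\mathbf 1\{p_i\le c(t,\hat\pi_i,\hat k_i)\}$, so that Lemma \ref{lemma-11} gives $\sup_{t\ge t_0}|D_m(t)-G_0(t)|$, $\sup_{t\ge t_0}|N_m(t)-G_1(t)|$ and $\sup_{t\ge t_0}|M_m(t)-\tilde G_1(t)|$ all equal to $o_{a.s.}(1)$. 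Since the two displays are structurally identical, I would prove the first in detail and indicate that the second follows verbatim with $(N_m,G_1)$ replaced by $(M_m,\tilde G_1)$.

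For the first claim, I would use the elementary decomposition
\[
\frac{N_m(t)}{D_m(t)}-\frac{G_1(t)}{G_0(t)}=\frac{N_m(t)-G_1(t)}{D_m(t)}+\frac{G_1(t)\,(G_0(t)-D_m(t))}{D_m(t)\,G_0(t)},
\]
valid whenever $D_m(t)>0$. The task then reduces to controlling $1/D_m(t)$ uniformly and to noting that $0\le G_1(t)\le 1$, since $G_1$ is a limit of averages of probabilities.

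Set $g_0:=G_0(t')$. The key step is to argue $g_0>0$: because $c(t,\pi_i^*,k_i^*)$ is nondecreasing in $t$, each $P(p_i\le c(t,\pi_i^*,k_i^*))$ and hence $G_0$ is nondecreasing, so $G_0(t)\ge g_0$ for all $t\ge t'$; and positivity of $g_0$ is built into Assumption \ref{ass-33}, which presupposes that $U(t')=G_1(t')/G_0(t')$ is well defined. With $g_0>0$ fixed, on the almost-sure event $\{\sup_{t\ge t_0}|D_m(t)-G_0(t)|<g_0/2\}$, which holds for all large $m$ by Lemma \ref{lemma-11}, one has $D_m(t)\ge G_0(t)-g_0/2\ge g_0/2>0$ for every $t\ge t'$; in particular $\sum_{i=1}^m\mathbf 1\{p_i\le c(t,\hat\pi_i,\hat k_i)\}=mD_m(t)\ge 1$, so the ratios in the statement are well defined. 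Plugging these bounds into the decomposition yields
\[
\sup_{t\ge t'}\left|\frac{N_m(t)}{D_m(t)}-\frac{G_1(t)}{G_0(t)}\right|\le \frac{2}{g_0}\sup_{t\ge t_0}|N_m(t)-G_1(t)|+\frac{2}{g_0^2}\sup_{t\ge t_0}|G_0(t)-D_m(t)|,
\]
and both terms on the right are $o_{a.s.}(1)$ by Lemma \ref{lemma-11}, proving the first display. The same computation with $M_m$ and $\tilde G_1$, again using $0\le \tilde G_1(t)\le 1$ and the identical denominator bound, gives the second.

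The only subtle point, and the place I expect to do real work, is the uniform lower bound on the denominator, i.e. establishing $g_0=G_0(t')>0$ and propagating it to $D_m$; this is precisely where the restriction to $t\ge t'$ rather than merely $t\ge t_0$ is used, and where Assumption \ref{ass-33} enters. Everything else is the routine algebra of turning convergence of numerators and denominators into convergence of a ratio, and no probabilistic input beyond Lemma \ref{lemma-11} is required.
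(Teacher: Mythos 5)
Your proposal is correct and follows essentially the same route as the paper: both use the uniform convergence from Lemma \ref{lemma-11}, the monotonicity of $G_0$ to obtain the uniform lower bound $G_0(t)\geq G_0(t')>0$ on $t\geq t'$, and the standard algebraic decomposition of a ratio difference into numerator and denominator errors. Your explicit justification that $G_0(t')>0$ (via Assumption \ref{ass-33}) and the almost-sure event on which $D_m(t)\geq G_0(t')/2$ are just slightly more detailed versions of steps the paper states tersely.
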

\begin{proof}[Proof of Lemma \ref{lemma-12}]
For the ease of presentation, denote
$m^{-1}\sum_{i=1}^{m}\mathbf{1}\{1-p_i<
c(t,\hat{\pi}_{i},\hat{k}_i)\}$ and
$m^{-1}\sum_{i=1}^{m}\mathbf{1}\{p_i\leq
c(t,\hat{\pi}_{i},\hat{k}_i)\}$ by $G_{m,1}$ and $G_{m,0}$
respectively. The monotonicity of $G_0$ implies that $\min_{t\geq
t'}G_0(t)=G_0(t')>0.$ By Lemma \ref{lemma-11}, we deduce that
\begin{align*}
&\left|\frac{G_{m,1}(t)}{G_{m,0}(t)}-\frac{G_1(t)}{G_0(t)}\right|
\\ =& \left|\frac{(G_{m,1}(t)-G_1(t))G_0(t)-G_1(t)(G_{m,0}(t)-G_0(t))}{G_0(t)G_{m,0}(t)}\right|
\\ \leq & \frac{G_0(1)|G_{m,1}(t)-G_1(t)|+G_1(1)|G_{m,0}(t)-G_0(t)|}{G_0(t')\{G_0(t)-\sup_{x\geq t'}|G_{m,0}(x)-G_{0}(x)|\}}
\\ \leq & \frac{G_0(1)\sup_{x\geq t'}|G_{m,1}(x)-G_1(x)|+G_1(1)\sup_{x\geq t'}|G_{m,0}(x)-G_0(x)|}{G_0(t')\{G_0(t')-\sup_{x\geq t'}|G_{m,0}(x)-G_{0}(x)|\}}
\rightarrow^{a.s.} 0,
\end{align*}
uniformly for any $t\geq t'$. Similar arguments can be used to prove
the other result.
\end{proof}

\begin{lemma}\label{lem-add}
Suppose $f_0$ satisfies Condition (\ref{eq-con-f0}) of the main paper. Under Assumption \ref{ass-32}, we have
$\tilde{G}_1(t)\leq G_1(t)$ for $t\geq t_0$.
\end{lemma}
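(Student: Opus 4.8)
The plan is to prove the inequality termwise at the level of the individual null hypotheses and then aggregate, exploiting two facts: Condition (\ref{eq-con-f0}) is precisely the statement needed to compare the lower-tail and upper-tail probabilities of a null p-value, and $\tilde{G}_1$ sums only over the null indices whereas $G_1$ sums over all indices. No heavy machinery is needed; the content is entirely in Condition (\ref{eq-con-f0}) together with the convergence supplied by Assumption \ref{ass-32}.

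First I would recall that for a null index (i.e. $H_i=0$), the p-value $p_i$ conditional on $x_i$ has density $f_0$, and that $c(t,\pi_i^*,k_i^*)$ is a deterministic function of $x_i$ taking values in $[0,1]$. Conditioning on $x_i$ and writing $a=c(t,\pi_i^*,k_i^*)\in[0,1]$, I would invoke Condition (\ref{eq-con-f0}) to obtain
\begin{align*}
P(p_i\leq c(t,\pi_i^*,k_i^*)\mid x_i)=\int_0^{a}f_0(x)dx\leq \int_{1-a}^{1}f_0(x)dx=P(1-p_i< c(t,\pi_i^*,k_i^*)\mid x_i),
\end{align*}
where the last equality uses the absolute continuity of $f_0$, so that $\{1-p_i=a\}$ has conditional probability zero and the strict and non-strict versions coincide. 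Since this pointwise inequality holds for every value of $x_i$, taking expectation over $x_i$ (under whatever law it follows among the null indices) preserves it, giving $P(p_i\leq c(t,\pi_i^*,k_i^*))\leq P(1-p_i< c(t,\pi_i^*,k_i^*))$ for each null index $i$.

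Next I would sum over the null indices and dominate by the sum over all indices, using that each summand $P(1-p_i< c(t,\pi_i^*,k_i^*))$ is nonnegative:
\begin{align*}
\frac{1}{m}\sum_{H_i=0}P(p_i\leq c(t,\pi_i^*,k_i^*))\leq \frac{1}{m}\sum_{H_i=0}P(1-p_i< c(t,\pi_i^*,k_i^*))\leq \frac{1}{m}\sum_{i=1}^{m}P(1-p_i< c(t,\pi_i^*,k_i^*)).
\end{align*}
By Assumption \ref{ass-32}, the leftmost quantity converges to $\tilde{G}_1(t)$ and the rightmost to $G_1(t)$, so letting $m\to\infty$ yields $\tilde{G}_1(t)\leq G_1(t)$ for $t\geq t_0$. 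The only points demanding any care are (i) the interpretation of the null conditional law $p_i\mid x_i\sim f_0$, which lets Condition (\ref{eq-con-f0}) apply verbatim with $a=c(t,\pi_i^*,k_i^*)$, and (ii) the bookkeeping observation that the middle quantity $\frac{1}{m}\sum_{H_i=0}P(1-p_i< c)$ need not be shown to converge on its own: one only needs the finite-$m$ inequality chain together with the convergence of its two endpoints, both of which are furnished directly by Assumption \ref{ass-32}.
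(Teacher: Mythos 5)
Your proof is correct and follows essentially the same route as the paper's: a termwise application of Condition (\ref{eq-con-f0}) to each null p-value, domination of the null-index sum by the full sum, and passage to the limit via Assumption \ref{ass-32}. If anything, you are slightly more careful than the paper, which writes a limit for the intermediate quantity $\frac{1}{m}\sum_{H_i=0}P(1-p_i<c(t,\pi_i^*,k_i^*))$ whose convergence is not guaranteed by Assumption \ref{ass-32}; your finite-$m$ inequality chain with limits taken only at the two endpoints sidesteps that minor gap.
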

\begin{proof}[Proof of Lemma \ref{lem-add}]
Under Assumption \ref{ass-32}, we have
\begin{align*}
\tilde{G}_1(t)=&\lim_{m\rightarrow+\infty}\frac{1}{m}\sum_{H_i=0}P(p_i\leq
c(t,\pi_{i}^*,k^*_i))
\\ \leq &\lim_{m\rightarrow+\infty}\frac{1}{m}\sum_{H_i=0}P(1-p_i<
c(t,\pi_{i}^*,k^*_i))
\\ \leq &\lim_{m\rightarrow+\infty}\frac{1}{m}\sum_{i=1}^{m}P(1-p_i<
c(t,\pi_{i}^*,k^*_i))=G_1(t)
\end{align*}
for $t\geq t_0$, where the first inequality is due to the assumption $P(p_i\leq a)\leq P(1-p_i\leq a)$ for any $a\in [0,1]$ and $p_i\sim f_0$.
\end{proof}

\begin{proof}[Proof of Theorem \ref{thm}]
Set $e=\alpha-U(t')$. By Lemma \ref{lemma-12}, we have
$$\frac{\sum_{i=1}^{m}\mathbf{1}\{1-p_i<
c(t',\hat{\pi}_{i},\hat{k})\}}{\sum_{i=1}^{m}\mathbf{1}\{p_i\leq
 c(t',\hat{\pi_{i}},\hat{k})\}}\leq \alpha-e/2<\alpha,$$ with
probability tending to one. Therefore, $P(\hat{t}\geq t')\rightarrow
1$ as $m\rightarrow+\infty.$ Then we have
\begin{align*}
& \alpha-\frac{\sum_{i=1}^{m}(1-H_i)\mathbf{1}\{p_i\leq
c(\hat{t},\hat{\pi}_{i},\hat{k}_i)\}}{\sum_{i=1}^{m}\mathbf{1}\{p_i\leq
c( \hat{t},\hat{\pi}_{i},\hat{k}_i)\}}
\\ \geq &\frac{\sum_{i=1}^{m}\mathbf{1}\{1-p_i<
c(\hat{t},\hat{\pi}_{i},\hat{k}_i)\}}{\sum_{i=1}^{m}\mathbf{1}\{p_i\leq
c(\hat{t},\hat{\pi_{i}},\hat{k}_i)\}}-\frac{\sum_{i=1}^{m}(1-H_i)\mathbf{1}\{p_i\leq
c(\hat{t},\hat{\pi}_{i},\hat{k}_i)\}}{\sum_{i=1}^{m}\mathbf{1}\{p_i\leq
c( \hat{t},\hat{\pi}_{i},\hat{k}_i)\}}
\\ \geq& \inf_{t\geq
t'}\Bigg\{\frac{\sum_{i=1}^{m}\mathbf{1}\{1-p_i<
c(t,\hat{\pi}_{i},\hat{k}_i)\}}{\sum_{i=1}^{m}\mathbf{1}\{p_i\leq
c(t,\hat{\pi_{i}},\hat{k}_i)\}}-\frac{G_1(t)}{G_0(t)}
\\&+\frac{G_1(t)-\tilde{G}_1(t)}{G_0(t)}+\frac{\tilde{G}_1(t)}{G_0(t)}-\frac{\sum_{i=1}^{m}(1-H_i)\mathbf{1}\{p_i\leq
c(t,\hat{\pi}_{i},\hat{k}_i)\}}{\sum_{i=1}^{m}\mathbf{1}\{p_i\leq
c(t,\hat{\pi}_{i},\hat{k}_i)\}}\Bigg\}\geq o_{a.s.}(1),
\end{align*}
where we have used the fact that $G_1(t)\geq \tilde{G}_1(t)$ as shown in Lemma \ref{lem-add}. It implies that
\begin{align*}
\frac{\sum_{i=1}^{m}(1-H_i)\mathbf{1}\{p_i\leq
c(\hat{t},\hat{\pi}_{i},\hat{k}_i)\}}{\sum_{i=1}^{m}\mathbf{1}\{p_i\leq
c(\hat{t},\hat{\pi}_{i},\hat{k}_i)\}}\leq \alpha+o_{a.s.}(1).
\end{align*}
Finally by Fatou's Lemma,
\begin{align*}
&\limsup_{m}\text{FDR}(\hat{t},\hat{\Pi},\hat{K})\leq
\limsup_{m}E\left[\frac{\sum_{i=1}^{m}(1-H_i)\mathbf{1}\{p_i\leq
c(\hat{t},\hat{\pi}_{i},\hat{k}_i)\}}{\sum_{i=1}^{m}\mathbf{1}\{p_i\leq
c(\hat{t},\hat{\pi}_{i},\hat{k}_i)\}} \right]\leq \alpha,
\end{align*}
which completes the proof.
\end{proof}

\section{Additional simulation results}
We summarize additional simulation results  in Figures \ref{fig:sim:6:1}-\ref{fig:sim:13}.

 \begin{figure}
\centering
\includegraphics[width=0.9\textwidth]{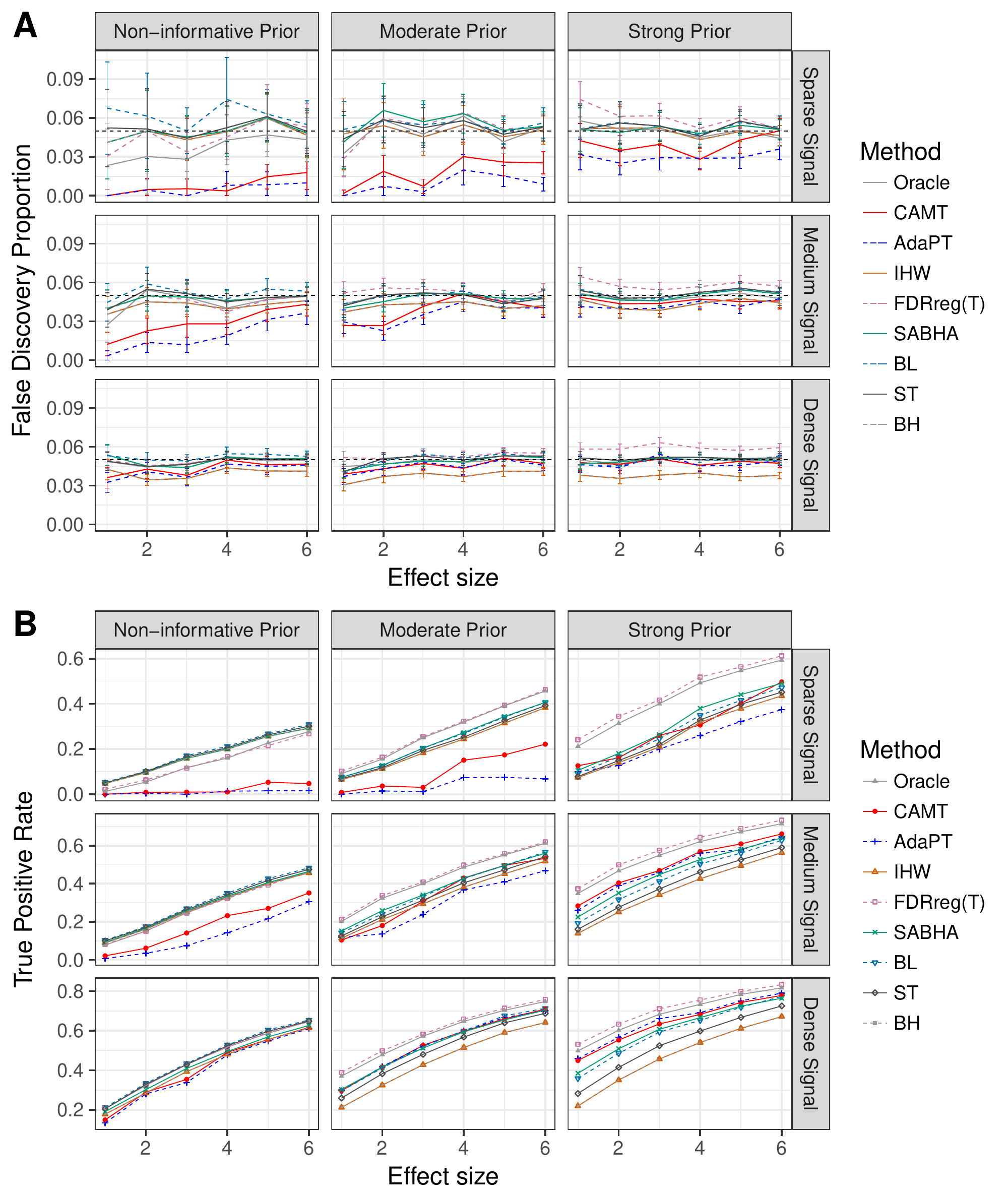}
\caption[Performance comparison under S0 and $m=1,000$]{Performance comparison with $m=1,000$ under the basic setting (S0).  False discovery proportions (A) and true positive rates (B) were averaged  over 100 simulation runs. Error bars (A) represent the  95\% CIs and the dashed horizontal line indicates the target FDR level of 0.05. }
\label{fig:sim:6:1}
\end{figure}

 \begin{figure}
\centering
\includegraphics[width=0.9\textwidth]{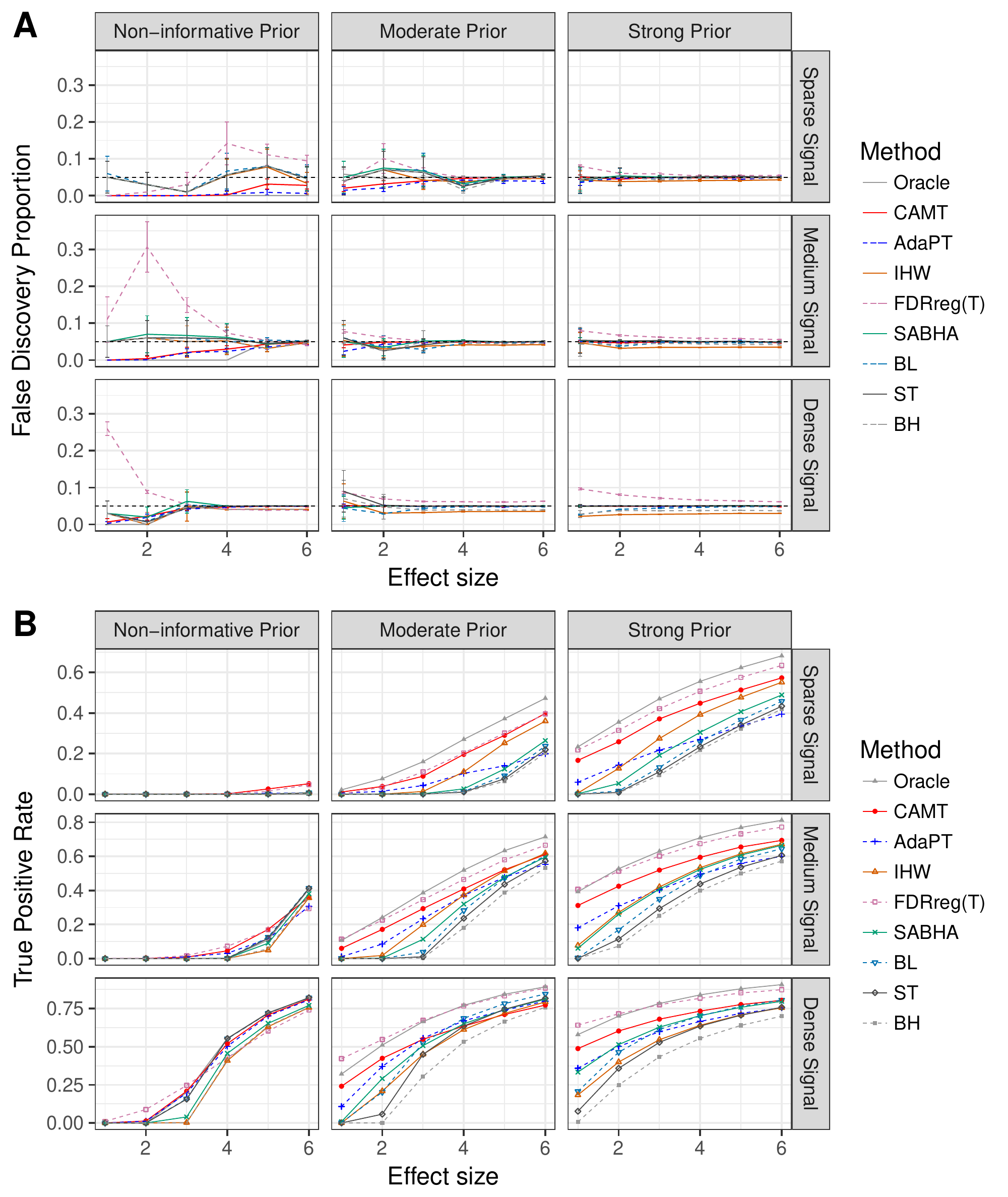}
\caption[Performance comparison under S2]{Performance comparison  under S2 (covariate-dependent $\pi_{0, i}$ and $f_{1, i}$, the standard deviation of the z-score under $H_1$ is 0.5).  False discovery proportions (A) and true positive rates (B) were averaged  over 100 simulation runs. Error bars (A) represent the  95\% CIs and the dashed horizontal line indicates the target FDR level of 0.05. }
\label{fig:sim:6}
\end{figure}

 \begin{figure}
\centering
\includegraphics[width=0.9\textwidth]{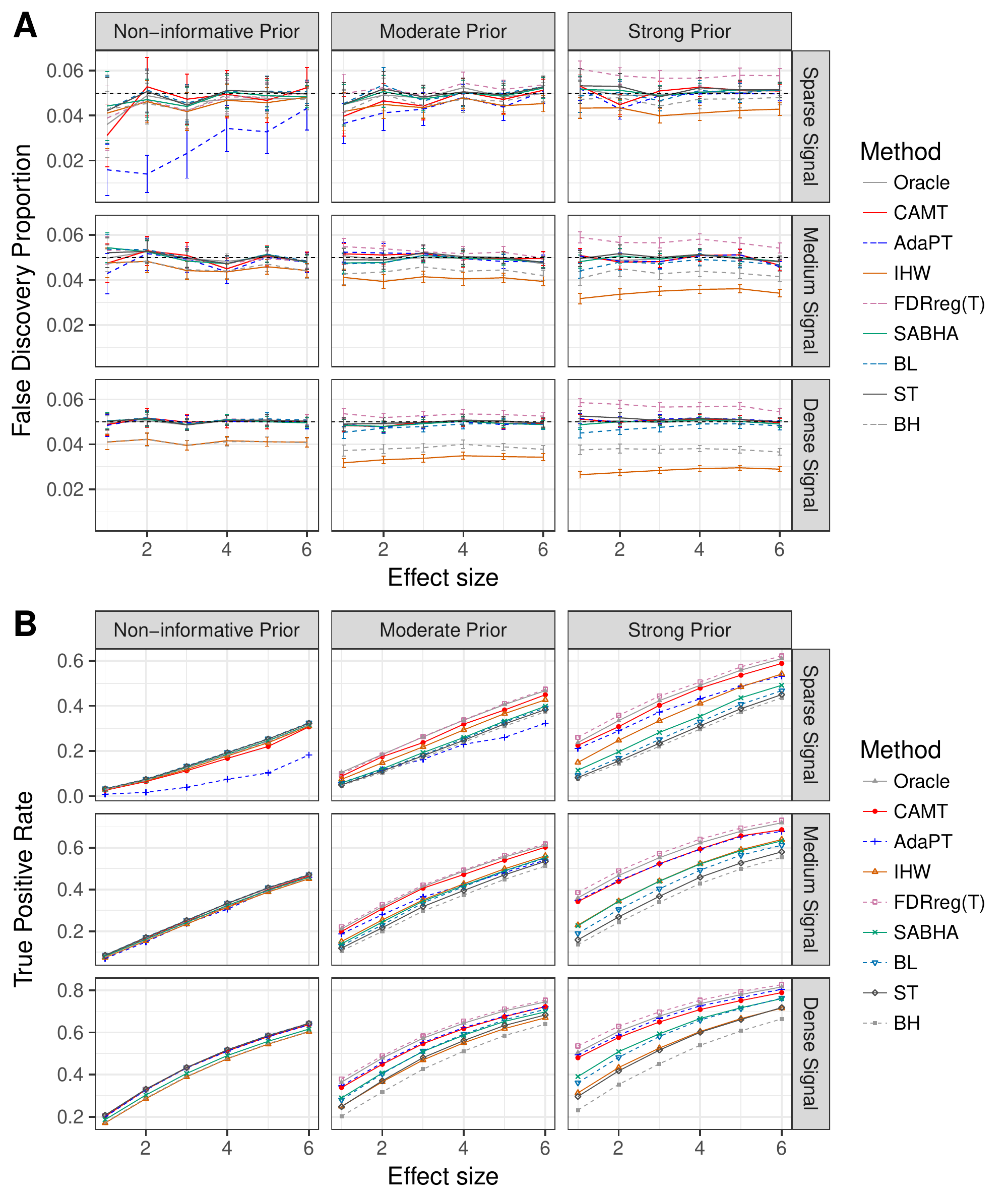}
\caption[Performance comparison under S3.1]{Performance comparison  under S3.1 (block correlation structure, positive correlations ($\rho{=}0.5$) within blocks).  False discovery proportions (A) and true positive rates (B) were averaged  over 100 simulation runs. Error bars (A) represent the  95\% CIs and the dashed horizontal line indicates the target FDR level of 0.05. }

\label{fig:sim:7}
\end{figure}

 \begin{figure}
\centering
\includegraphics[width=0.9\textwidth]{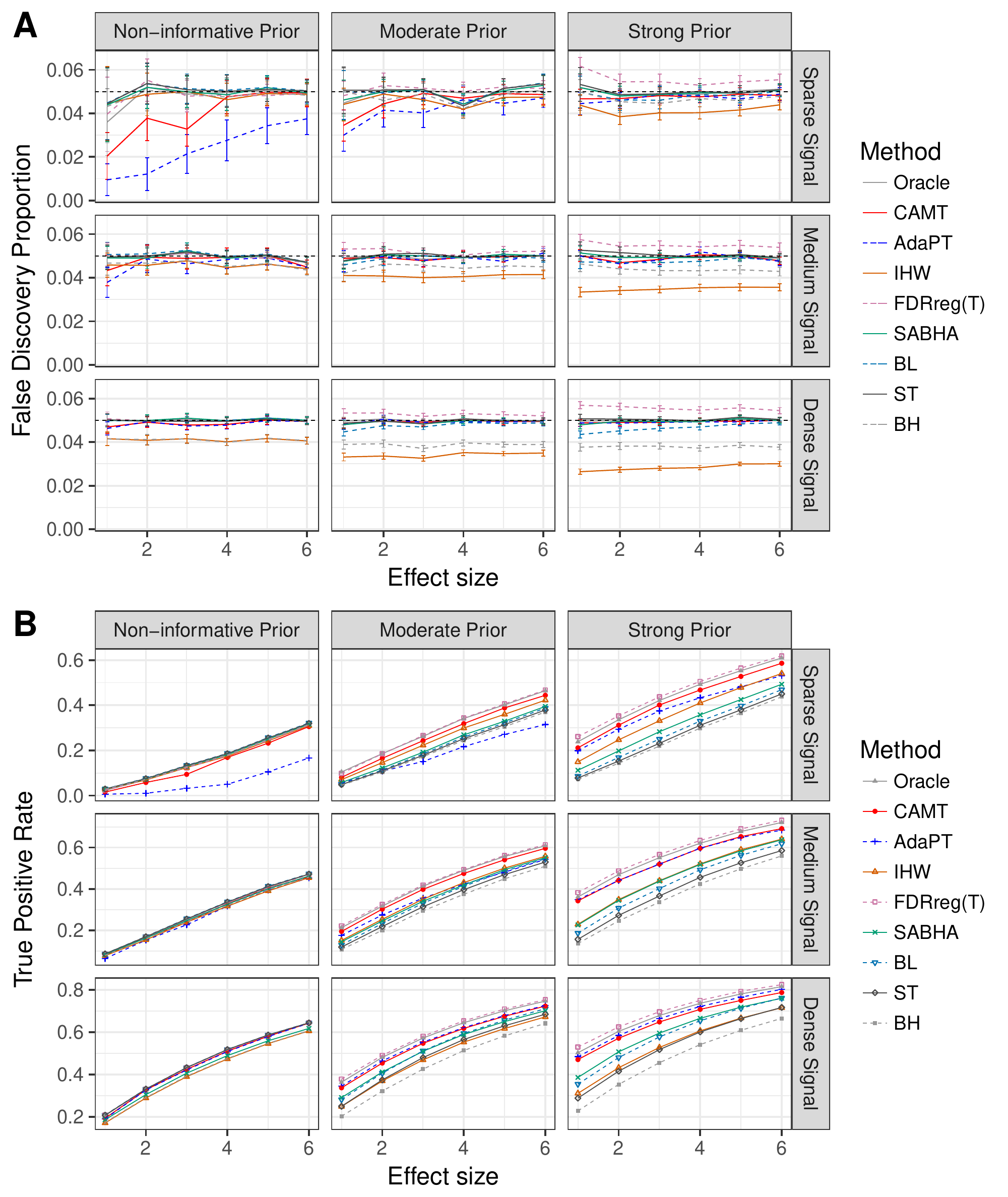}
\caption[Performance comparison under S3.2]{Performance comparison  under S3.2 (block correlation structure, positive/negative correlations ($\rho{=}\pm0.5$)  within blocks).  False discovery proportions (A) and true positive rates (B) were averaged  over 100 simulation runs. Error bars (A) represent the  95\% CIs and the dashed horizontal line indicates the target FDR level of 0.05. }

\label{fig:sim:8}
\end{figure}

 \begin{figure}
\centering
\includegraphics[width=0.9\textwidth]{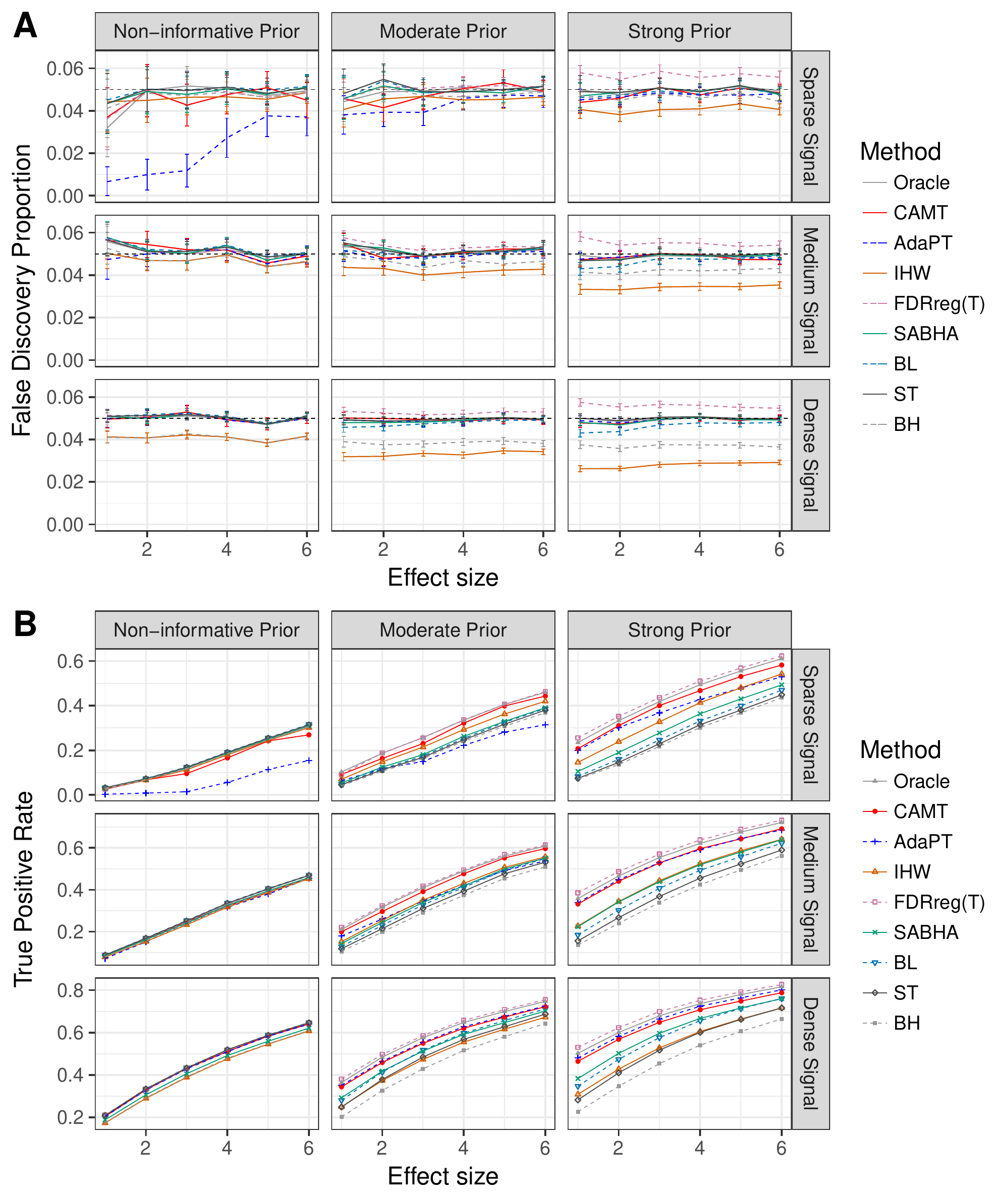}
\caption[Performance comparison under S3.3]{Performance comparison  under S3.3 (AR(1) structure, positive correlations ($\rho{=}0.75$)).  False discovery proportions (A) and true positive rates (B) were averaged  over 100 simulation runs. Error bars (A) represent the  95\% CIs and the dashed horizontal line indicates the target FDR level of 0.05. }
\label{fig:sim:9}
\end{figure}

 \begin{figure}
\centering
\includegraphics[width=0.9\textwidth]{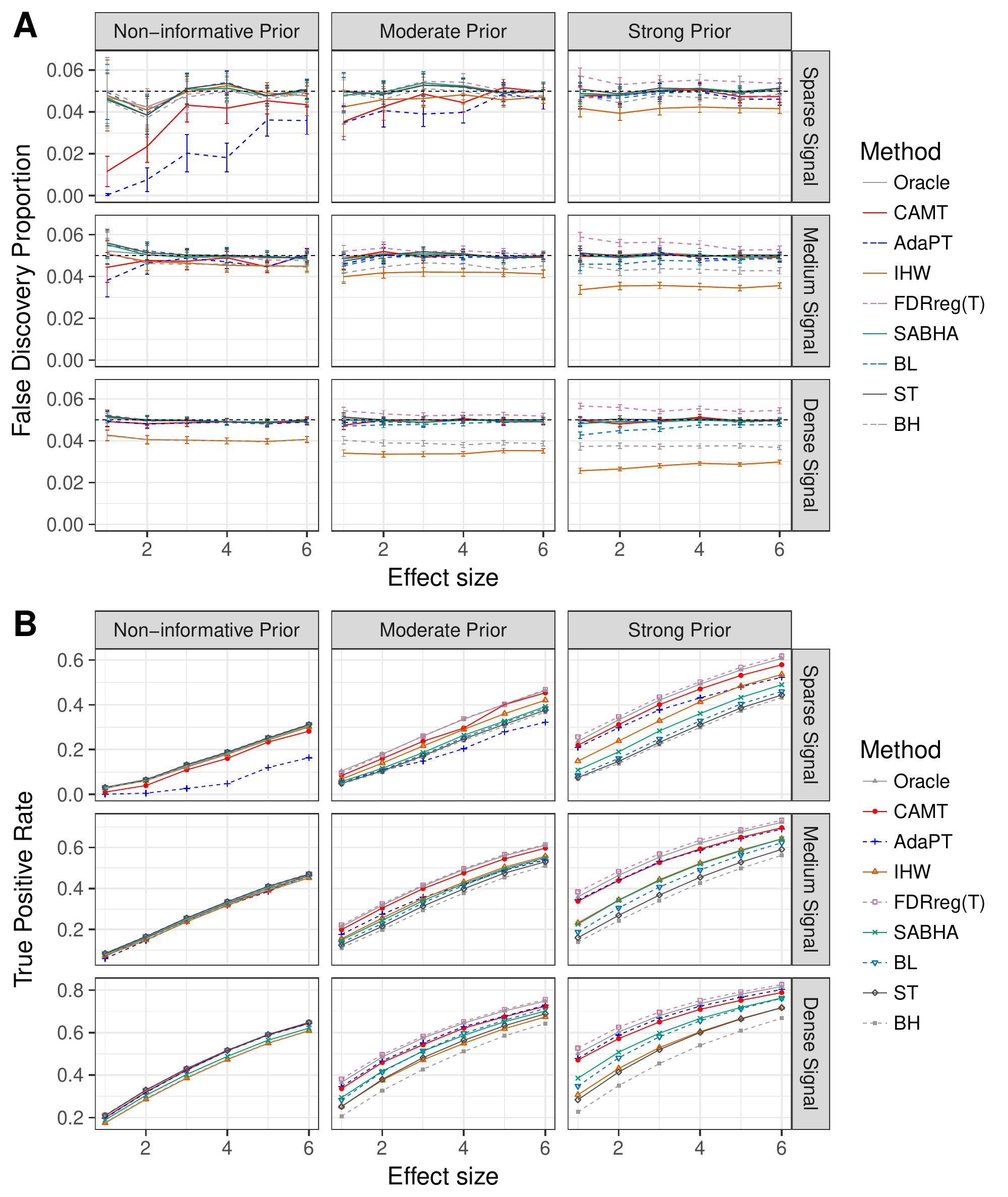}
\caption[Performance comparison under S3.4]{Performance comparison  under S3.4 (AR(1) structure, positive/negative correlations ($\rho{=}\pm0.75$)).  False discovery proportions (A) and true positive rates (B) were averaged  over 100 simulation runs. Error bars (A) represent the  95\% CIs and the dashed horizontal line indicates the target FDR level of 0.05. }
\label{fig:sim:10}
\end{figure}

\begin{figure}
\centering
\includegraphics[width=0.9\textwidth]{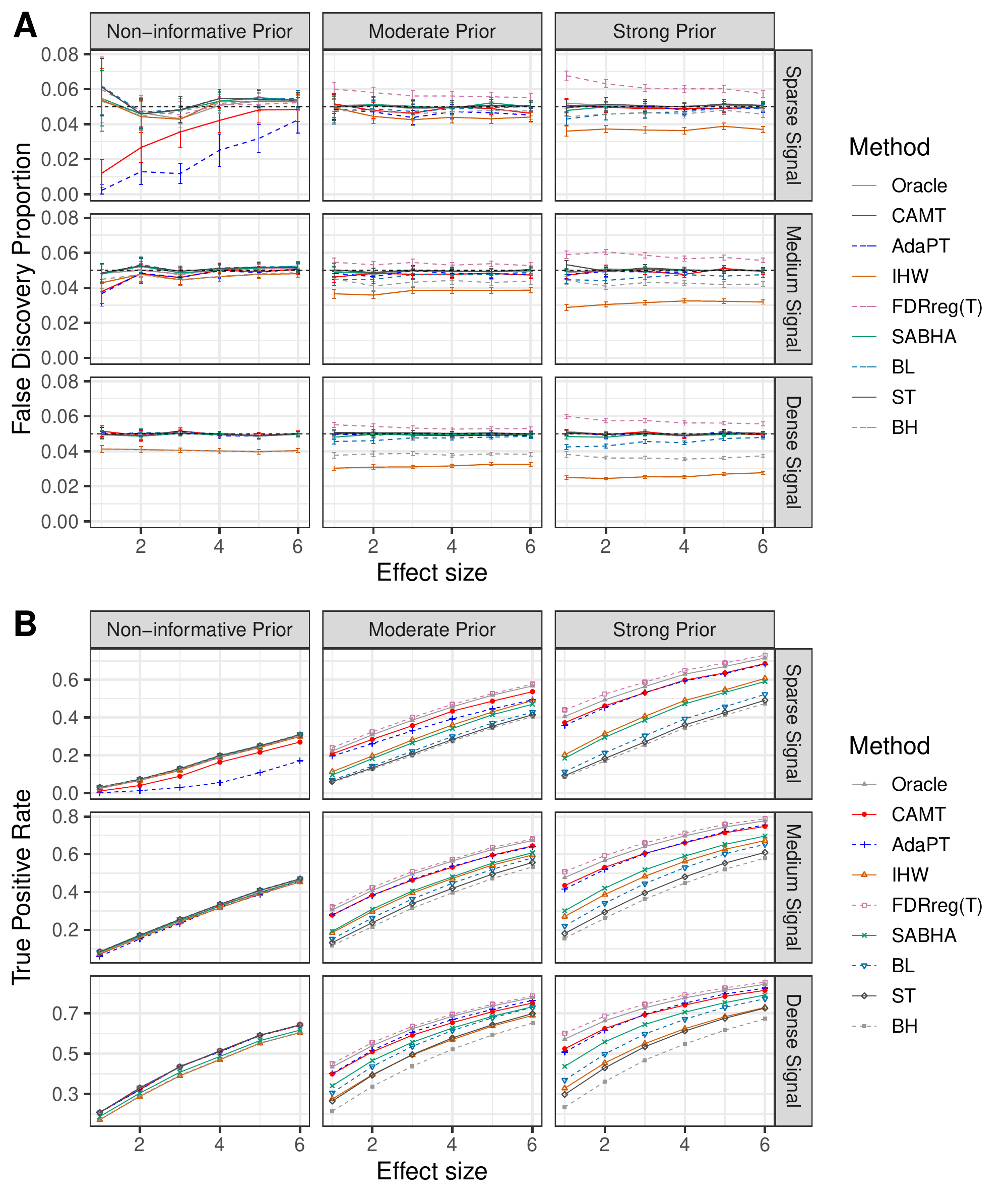}
\caption[Performance comparison under S4]{Performance comparison under S4 (heavy-tail covariate).  False discovery proportions (A) and true positive rates (B) were averaged  over 100 simulation runs. Error bars (A) represent the  95\% CIs and the dashed horizontal line indicates the target FDR level of 0.05. }
\label{fig:sim:11}
\end{figure}

 \begin{figure}
\centering
\includegraphics[width=0.9\textwidth]{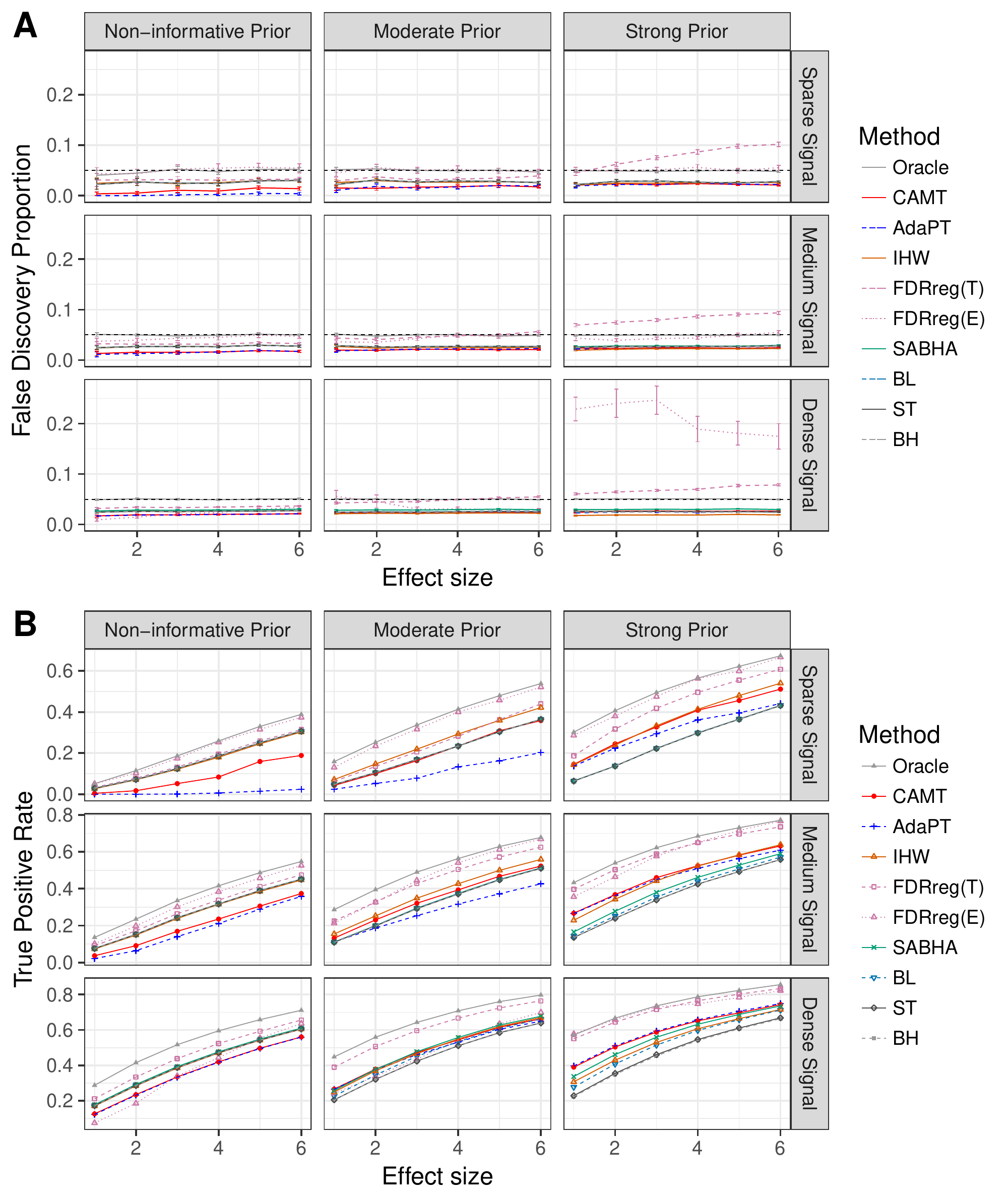}
\caption[Performance comparison under S5.1]{Performance comparison  under S5.1 (increasing $f_0$).  False discovery proportions (A) and true positive rates (B) were averaged  over 100 simulation runs. Error bars (A) represent the  95\% CIs and the dashed horizontal line indicates the target FDR level of 0.05. FDRregT and FDRregE represent the FDRreg method using the theoretical and empirical null respectively. }

\label{fig:sim:12}
\end{figure}

 \begin{figure}
\centering
\includegraphics[width=0.9\textwidth]{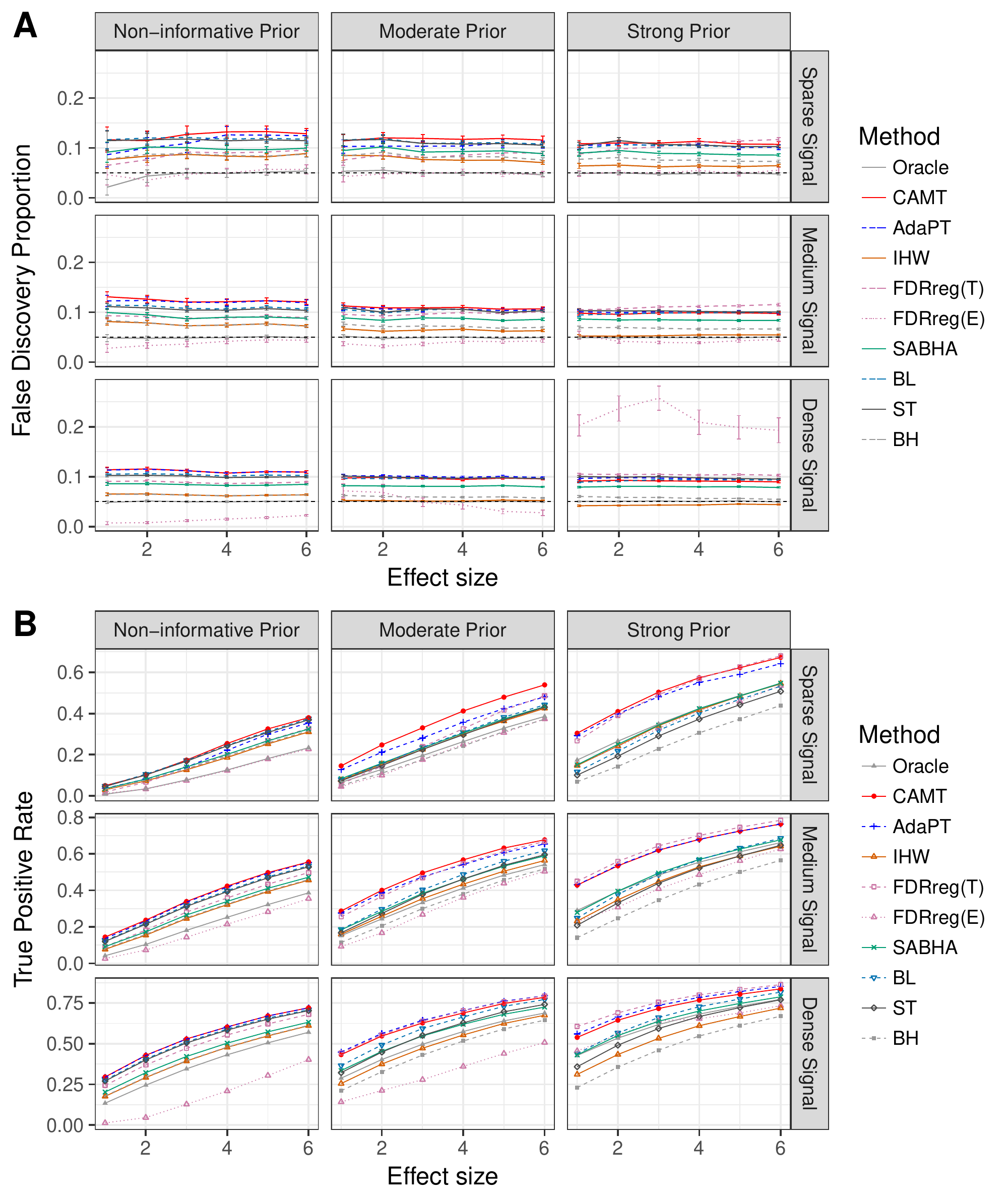}
\caption[Performance comparison under S5.2]{Performance comparison  under S5.2 (decreasing $f_0$).  False discovery proportions (A) and true positive rates (B) were averaged  over 100 simulation runs. Error bars (A) represent the  95\% CIs and the dashed horizontal line indicates the target FDR level of 0.05. FDRregT and FDRregE represent the FDRreg method using the theoretical and empirical null respectively.}

\label{fig:sim:13}
\end{figure}

\section{Numerical results for the mixed strategy}
Figure \ref{fig:sim:mix} provides some numerical results for the mixed strategy discussed in Section \ref{sec:dis}.

\begin{figure}
\centering
\includegraphics[width=0.9\textwidth]{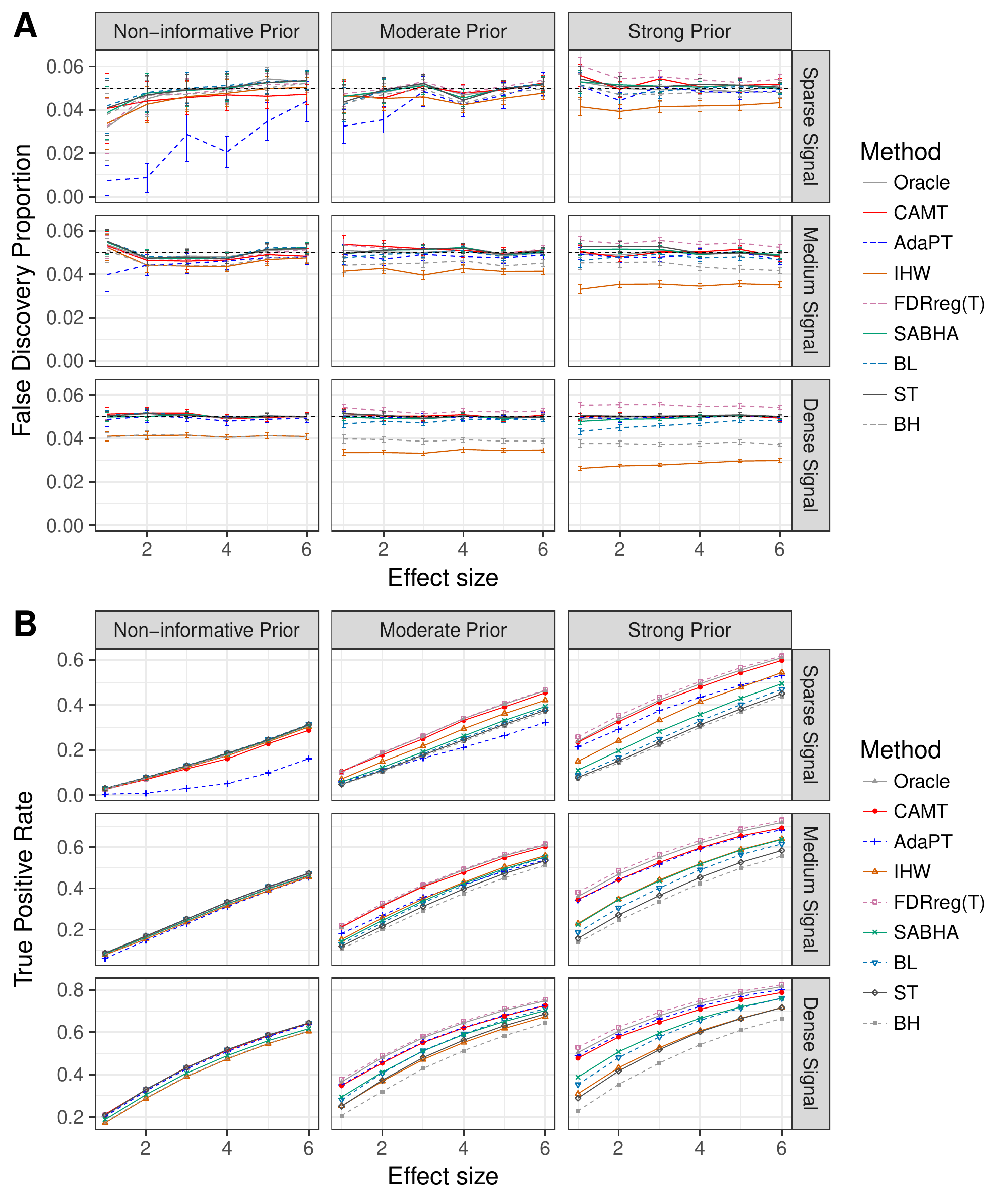}
\caption[Performance comparison under S0]{Performance comparison under the basic setting (S0). CAMT used the mixed strategy discussed in in Section \ref{sec:dis}. False discovery proportions (A) and true positive rates (B) were averaged  over 100 simulation runs. Error bars (A) represent the  95\% CIs and the dashed horizontal line indicates the target FDR level of 0.05. }
\label{fig:sim:mix}
\end{figure}

\end{document}